%% file: main.tex
\documentclass[11pt,onecolumn]{IEEEtran}

\usepackage{amsmath, amssymb, amsfonts, amsthm, amstext}
\usepackage{bm}
\usepackage{bbm}
\usepackage{mathrsfs}

\usepackage{graphicx}
\usepackage{epstopdf}
\usepackage[table,xcdraw]{xcolor}
\usepackage{array}
\usepackage{tabularx}
\usepackage{multirow}
\usepackage{multicol}
\usepackage[caption=false, font=footnotesize]{subfig}
\usepackage{adjustbox}
\usepackage{booktabs}
\usepackage{etoolbox}
\makeatletter
\patchcmd{\@makecaption}
  {\@IEEEtabletopskipstrut{\normalfont\footnotesize #1}\\{\normalfont\footnotesize\scshape #2}}
  {\@IEEEtabletopskipstrut{\normalfont\footnotesize #1\nobreakspace{\normalfont\footnotesize\scshape #2}}}
  {}{}
\makeatother

\usepackage[linesnumbered,ruled,vlined]{algorithm2e}
\SetKwComment{Comment}{\color{blue} $\triangleright$\ }{}
\SetKwInOut{KwIn}{Input}
\SetKwInOut{KwOut}{Output}
\SetKwProg{Proc}{Subroutine}{}{}

\usepackage{indentfirst}
\usepackage{verbatim}
\usepackage{stfloats}
\usepackage{float}
\usepackage{placeins}
\usepackage{color}
\usepackage{cite}
\usepackage[colorlinks=true,linkcolor=blue,citecolor=blue]{hyperref}
\usepackage{enumitem}
\usepackage{scalerel}
\usepackage[norefs,nocites]{refcheck}

\newcommand{\cA}{\mathcal{A}}

\newcommand{\cD}{\mathcal{D}}

\newcommand{\cG}{\mathcal{G}}

\newcommand{\cL}{\mathcal{L}}
\newcommand{\cM}{\mathcal{M}}

\newcommand{\cT}{\mathcal{T}}

\newcommand{\cZ}{\mathcal{Z}}

\theoremstyle{definition}
\newtheorem{theorem}{\bf Theorem}
\newtheorem{lemma}{\bf Lemma}
\newtheorem{proposition}{\bf Proposition}
\newtheorem{corollary}{\bf Corollary}

\newtheorem{definition}{\bf Definition}

\newtheorem{example}{\bf Example}
\newtheorem{construction}{\bf Construction}

\newcommand{\Sn}{\mathcal{S}}
\newcommand{\std}{\operatorname{std}}
\newcommand{\red}{\operatorname{red}}
\newcommand{\Enc}{\operatorname{Enc}}
\newcommand{\LCS}{\operatorname{LCS}}
\newcommand{\LPat}{\operatorname{LPat}}
\newcommand{\SID}{\mathrm{SID}}
\newcommand{\PID}{\mathrm{PID}}
\newcommand{\Rep}{\operatorname{Rep}}
\newcommand{\rk}{\operatorname{rk}}
\newcommand{\VT}{\operatorname{VT}}
\newcommand{\Int}{\operatorname{Int}}

\newcommand{\mstd}{\operatorname{mstd}}
\newcommand{\Lift}{\operatorname{Lift}}
\newcommand{\supp}{\operatorname{supp}}

\DeclareRobustCommand\revised{\textcolor{red}}

\begin{document}

\title{Order-Optimal Systematic Permutation Codes for Correcting $t$ Deletions}

\author{Bolin Wu, Quan D. Bui,
Kai~Niu, Van Khu Vu, Shuche Wang
\thanks
	{
		Bolin Wu is with the School of Mathematical Sciences, Beijing University of Posts and Telecommunications, Beijing 100876, China. (e-mail: bolinwu@bupt.edu.cn.)  \\ \indent
    Van Khu Vu and Quan D. Bui are with VinUniversity, Gia Lam, Hanoi 12400, Vietnam. (e-mail: \{khu.vv, quan.bd\}@vinuni.edu.vn.) \\ \indent
		Kai Niu is with the Key Laboratory of Universal Wireless Communications, Ministry of Education, Beijing University of Posts and Telecommunications, Beijing 100876, China. (e-mail: niukai@bupt.edu.cn.)\\ \indent
		Shuche Wang is with the Department of Mathematics, National University of Singapore, Singapore 119076 (e-mail: shuche.wang@u.nus.edu).\\ \indent
    (\emph{Corresponding Author: Shuche Wang}.)
}
}

\maketitle

\begin{abstract}
This paper investigates the construction of full-systematic permutation codes capable of correcting multiple deletions under two complementary models, namely symbol-invariant deletions (SIDs), where surviving symbol values are preserved, and permutation-invariant deletions (PIDs), where the surviving sequence is standardized to a permutation. For any fixed integer $t \ge 1$ and all sufficiently large message lengths $n$, our proposed encoders map any message permutation of length $n$ to a codeword by inserting distinct redundancy symbols while strictly preserving the sequence order of the original message symbols. The proposed constructions correct up to $t$ deletions using $7t-1$ redundancy markers for PIDs and $4t$ redundancy markers for SIDs, achieving redundancies of $(7t-1)\log n + O_t(1)$ bits and $4t\log n + O_t(1)$ bits, respectively. Both code families are uniformly constructible, encodable, and decodable in $n^{O(t)}$ time. The underlying framework stores an inner deletion-correcting syndrome in the relative positions of redundancy markers via an algebraic outer code based on integer moments and residual graph coloring. We further extend this framework to fixed-composition and strictly $\lambda$-regular multipermutations, proving that the PID and SID channels coincide whenever the common multiplicity satisfies $\lambda > t$. 
\end{abstract}

\input{main_paper/introduction}

\input{main_paper/notation}
\input{main_paper/pid}
\input{main_paper/sid}
\input{main_paper/multipermutation}
\input{main_paper/comparison}
\input{main_paper/conclusion}

\newpage
\appendices

\input{Appendix/lower_bound}

\bibliographystyle{IEEEtran}
\bibliography{ref}

\end{document}

%% file: main_paper/introduction.tex
\section{Introduction}
\label{sec:introduction}

In non-volatile flash memory systems, the rank modulation scheme introduced by Jiang \emph{et al.}~\cite{jiang2009rank} represents information through the relative order of cell charge levels rather than their absolute physical voltages. By encoding data as permutations, rank modulation inherently eliminates programming overshoots and provides strong resilience against retention noise caused by gradual charge leakage~\cite{jiang2010correcting,barg2010codes}. When stored cells experience severe hardware defects, localized cell damage, or sensing failures, the stored permutations suffer from cell erasures or deletions.

In their foundational work on rank modulation, Gabrys \emph{et al.}~\cite{gabrys2015codes} formalized two distinct deletion error models, termed symbol-invariant deletions (SIDs) and permutation-invariant deletions (PIDs). Under an SID, also known as a stable deletion, lost symbols are removed while the absolute labels of the surviving symbols remain unchanged. The problem of correcting SIDs is intrinsically linked to designing permutation codes in the Ulam metric~\cite{farnoud2013error,gabrys2015codes}. Under a PID, the reading mechanism discerns only the relative ordering of the surviving cells, so that the output sequence is standardized by replacing each entry with its rank among the remaining symbols. For example, deleting the symbol $4$ from the permutation $(3,1,4,2)$ yields $(3,1,2)$ under both error models. In contrast, deleting the symbol $1$ produces the sequence $(3,4,2)$ under an SID, but standardizes to $(2,3,1)$ under a PID. Because a PID conceals both the deletion coordinates and the underlying symbol values, designing algebraic codes for the PID channel requires distinct combinatorial mechanisms.

This paper investigates the algebraic construction of full-systematic permutation codes capable of correcting multiple deletions under both the SID and PID models. Let $\mathcal S_n$ denote the symmetric group on the set $[n] \triangleq \{1,\ldots,n\}$. In classical deletion channels over fixed alphabets, systematic codes are traditionally obtained by appending parity-check suffixes to the message bits~\cite{levenshtein1966binary,tenengolts1984nonbinary,sima2020optimalbinary,guruswami2021two}. In the permutation setting, however, appending a conventional checksum suffix is impossible because each symbol must appear exactly once. Appending symbols directly destroys the permutation structure and alters the rank of message symbols under PID standardization. 

To overcome this fundamental challenge, we define a full-systematic permutation code of length $n+r$ as one where every message permutation $\mu\in\mathcal S_n$ appears as an unaltered subsequence, with the remaining $r$ coordinates occupied by distinct redundancy markers chosen from $[n+1,n+r]$. Deleting these $r$ redundancy symbols from an uncorrupted codeword immediately retrieves the original message $\mu$. Such a code contains exactly $n!$ codewords in $\mathcal S_{n+r}$ and incurs a bit redundancy of $\log((n+r)!/n!) = r\log n + O_r(1)$. Our primary goal is to design full-systematic codes that correct up to a fixed number $t$ of arbitrary deletions using $r=O(t)$ redundancy markers, accompanied by polynomial-time encoding and decoding algorithms.

\subsection{Related Work and Problem Formulation}

Algebraic code constructions combating deletions have advanced significantly for sequence channels, where optimal systematic codes correcting multiple deletions were established by Sima, Gabrys, and Bruck~\cite{sima2020optimalbinary}, and explicit two-deletion codes were constructed by Guruswami and H{\aa}stad~\cite{guruswami2021two}. In the permutation domain, Levenshtein~\cite{levenshtein1992perfect} first connected stable deletions to codes in the Ulam metric, while Gabrys \emph{et al.}~\cite{gabrys2015codes} characterized the metric properties of SIDs and constructed single-PID-correcting codes. Subsequent works further addressed burst deletions in permutations~\cite{chee2019burst,sun2022improved}. More recently, Wang \emph{et al.}~\cite{wang2025permutation} improved the Gilbert-Varshamov bound in the Ulam metric and constructed multiple-deletion codes via a cyclic-successor mapping. However, all these existing permutation codes are non-systematic, meaning that message symbols do not appear directly as a subsequence and message retrieval requires full decoding. Moreover, codes capable of correcting multiple PIDs remain unexplored, as existing constructions are restricted either to a single deletion or to localized burst deletions.

Although systematic permutation codes have been investigated under the Kendall $\tau$ metric~\cite{zhou2014systematic} and Chebyshev metric~\cite{buzaglo2014perfect,klove2010permutation}, designing systematic permutation codes for deletion channels is substantially more difficult because conventional parity-check suffixes cannot simply be appended. Graph coloring offers a natural algebraic approach to this problem, where color classes of a confusability graph correspond directly to error-correcting codes. Recently, Li, Gabrys, and Farnoud~\cite{li2026distributed} introduced a distributed graph coloring framework that uses algebraic polynomial fingerprints over finite fields to efficiently compress proper colorings on graphs with bounded maximum degree. In this work, we adapt and specialize their algebraic recoloring technique to compress the residual confusability graphs arising from our single-PID decomposition, which enables the design of compact inner syndromes.

A broader generalization of practical importance is multipermutations, which model rank modulation schemes where multiple flash memory cells share identical charge levels~\cite{gad2012trade,huczynska2006frequency}. Farnoud and Milenkovic~\cite{hassanzadeh2014multipermutation} studied multipermutation codes under the Ulam metric, while Sala, Gabrys, and Dolecek~\cite{sala2014deletions} investigated deletion errors in multipermutations. However, existing multipermutation deletion codes are likewise non-systematic, and constructing systematic codes over arbitrary composition profiles remains completely unaddressed in the literature.

\subsection{Main Contributions}

For every fixed integer $t\ge1$ and all sufficiently large message lengths $n$, the main contributions of this paper are summarized in the following three points.

\begin{enumerate}[leftmargin=*,label=\arabic*)]
 \item We construct full-systematic codes on $\mathcal S_n$ that correct up to $t$ permutation-invariant deletions (PIDs) with redundancy $(7t-1)\log n+O_t(1)$ bits. Our construction employs a two-stage framework. In the inner stage, we decompose the $t$-PID confusability graph using the single-PID signature syndrome, which shrinks the maximum degree of the residual graph from $O_t(n^{3t})$ down to $O_t(n^{3t-2})$. Applying two rounds of local algebraic graph coloring compresses this coloring into an inner syndrome of size $O_t(n^{6t-2})$. In the outer stage, we represent the positions of $r=7t-1$ redundancy markers via a labeled-gap representation and construct an outer code for the projected PID channel using integer moment constraints combined with greedy residual coloring. This outer code achieves the optimal scaling exponent $r-t$, enabling unambiguous recovery of the inner syndrome. The resulting code family is uniformly constructible, encodable, and decodable in $n^{O(t)}$ time.

 \item We construct full-systematic codes on $\mathcal S_n$ that correct up to $t$ symbol-invariant deletions (SIDs) with redundancy $4t\log n+O_t(1)$ bits. In the inner stage, we apply a cyclic-successor mapping to embed the Ulam metric into the Hamming metric, yielding an algebraic power-sum syndrome over a finite field taking $O_t(n^{3t-1})$ values. For the outer stage, we establish an SID-to-PID reduction showing that the outer code developed for PIDs can be reused directly to protect the inner SID syndrome using $r=4t$ redundancy markers. Furthermore, by evaluating baseline codes obtained through unconstrained local graph fingerprinting, we prove that our structural inner constraints save $(t+1)\log n$ redundancy bits for SIDs and $2\log n$ redundancy bits for PIDs. In the special case $t=1$, the framework yields systematic codes using only three markers for SIDs and four markers for PIDs, achieving redundancies of $3\log n$ bits and $4\log n$ bits, respectively.

 \item We extend the systematic coding framework to multipermutations under both fixed-composition and strictly regular profiles. For repeated symbols, we define the multipermutation PID channel through tie-preserving standardization and construct a canonical stable lift that maps multipermutations to permutations while preserving deletion descendants. For any composition profile $\boldsymbol m$, we construct systematic codes using $5t$ markers for SIDs and $r_{\PID}(\boldsymbol m,t) = \min\{7t-1, 5t+\varepsilon(\boldsymbol m)+2\eta_t(\boldsymbol m)\}$ markers for PIDs, where $\eta_t(\boldsymbol m)$ and $\varepsilon(\boldsymbol m)$ capture the profile-dependent alphabet ambiguity. For strictly $\lambda$-regular multipermutations, we construct systematic encoders that preserve $\lambda$-regularity by adding $R = s\lambda$ occurrences. In particular, we prove that whenever $\lambda>t$, no symbol class can be completely eliminated by $t$ deletions, causing the PID and SID channels to coincide and eliminating the PID redundancy penalty. All proposed multipermutation code families admit polynomial-time construction, encoding, and decoding algorithms.
\end{enumerate}

\subsection{Organization of the Paper}

The remainder of this paper is organized as follows. Section~\ref{sec:preliminaries} provides necessary preliminaries, the deletion models, and code bounds. Sections~\ref{sec:pid} and~\ref{sec:sid} construct systematic codes for PIDs and SIDs, respectively. Section~\ref{sec:multipermutation} extends these constructions to multipermutations. Section~\ref{sec:comparison} evaluates redundancy and complexity trade-offs against baseline schemes. Finally, Section~\ref{sec:conclusion} concludes the paper.

%% file: main_paper/notation.tex
\section{Notation and Preliminaries}
\label{sec:preliminaries}

For a positive integer $m$, let $[m] \triangleq  \{1, \ldots, m\}$, and let $\Sn_m$ denote the set of permutations of $[m]$ in sequence notation. For integers $a \le b$, let $[a,b] \triangleq  \{a, a+1, \ldots, b\}$. We reserve $m$ for a generic codeword length, $n$ for the message length of a systematic encoder, and $r$ for the number of redundancy symbols. Throughout this paper, the error parameter $t \ge 1$ is fixed, all relevant blocklengths are strictly greater than $t$, and all logarithms are to the base two. The asymptotic notations $O_t(\cdot)$ and $\Omega_t(\cdot)$ suppress constants that may depend on $t$ but are independent of the blocklength.

Given a sequence $x = (x_1, \ldots, x_m)$ and an index set $I = \{i_1 < \cdots < i_s\} \subseteq [m]$, let $x_I \triangleq  (x_{i_1}, \ldots, x_{i_s})$. For an alphabet $A$, let $x|_A$ denote the subsequence of $x$ obtained by retaining the entries belonging to $A$, in their original order. We write
\begin{equation*}
  \mathcal I_{q,s} \triangleq  \{x \in [q]^s : x_i \ne x_j \text{ whenever } i \ne j\}
\end{equation*}
for the set of injective sequences of length $s$ over $[q]$; in particular, $\Sn_m = \mathcal I_{m,m}$. For $\pi \in \Sn_m$, the inverse permutation $\pi^{-1}$ is written in sequence notation, so that $\pi^{-1}_a$ denotes the position occupied by the symbol $a$ in $\pi$. For two sequences $x, y$ of the same length, their Hamming distance is
\begin{equation*}
  d_{\mathrm H}(x,y) \triangleq  |\{i : x_i \ne y_i\}|.
\end{equation*}
We write $\mathbb Z_q$ for the ring of integers modulo $q$, $\mathbb F_q$ for the finite field of order $q$ when $q$ is a prime power, and $\mathbf 1\{\mathcal E\}$ for the indicator function of an event $\mathcal E$. For a mapping $f$ from a set $A$ to a set $B$, let $\operatorname{im} (f) \triangleq  \{f(a) : a \in A\}$ denote the image of $f$.

\subsection{The SID and PID Deletion Models}\label{subsec:sidpid_model}

We adopt the symbol-invariant deletion (SID) and permutation-invariant deletion (PID) models introduced in~\cite{gabrys2015codes}. For a sequence $x = (x_1, \ldots, x_m)$ with pairwise distinct entries, its \emph{standardization} is the unique permutation $\std(x) \in \Sn_m$ defined by
\begin{equation*}
  \std(x)_i \triangleq 1 + |\{j \in [m] : x_j < x_i\}|, \quad i \in [m].
\end{equation*}
Hence, standardization preserves the coordinate order and the relative value order of the entries while bijectively mapping the alphabet to $[m]$.

\begin{definition}[Deletion Descendants]\label{def:deletion-models}
Let $x=(x_1,\ldots,x_m)$ and $I\subseteq[m]$. An SID, also called a stable deletion, retains the surviving labels, defined by
\begin{equation*}
\operatorname{del}^{\SID}_I(x) \triangleq x_{[m]\setminus I}.
\end{equation*}
If the entries of $x$ are pairwise distinct, a PID standardizes the surviving sequence according to
\begin{equation*}
  \operatorname{del}^{\PID}_I(x) \triangleq \std\!\left(x_{[m]\setminus I}\right).
\end{equation*}
For $\pi\in\Sn_m$, $\mathsf x\in\{\SID,\PID\}$, and $0\le s\le m$, let $\mathcal D_s^{\mathsf x}(\pi)$ denote the set of all descendants obtainable from $\pi$ after exactly $s$ $\mathsf x$-deletions, given by
\begin{equation*}
  \mathcal D_s^{\mathsf x}(\pi)
  \triangleq \left\{\operatorname{del}^{\mathsf x}_I(\pi):I\subseteq[m],\ |I|=s\right\}.
\end{equation*}
For $t\le m$, we further define $\mathcal D_{\le t}^{\mathsf x}(\pi) \triangleq \bigcup_{s=0}^t\mathcal D_s^{\mathsf x}(\pi)$.
\end{definition}

Unlike an SID, a PID standardizes the surviving sequence after the deletion operation. To compose multiple PID operations, it is necessary to verify that intermediate standardizations do not alter the relative order of entries retained subsequently. The following lemma establishes this consistency, enabling the reduction of the up-to-$t$ PID model to the exact-$t$ model.

\begin{lemma}[Transitivity of Standardization]\label{lem:std-transitive}
Let $x=(x_1,\ldots,x_m)$ have pairwise distinct entries and let $J=\{j_1<\cdots<j_s\}\subseteq[m]$. Then
\begin{equation*}
    \std\left(\std(x)_{j_1},\ldots,\std(x)_{j_s}\right)=\std(x_{j_1},\ldots,x_{j_s}).
\end{equation*}
\end{lemma}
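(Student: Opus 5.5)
The plan is to reduce the statement to a single structural fact: $\std(y)$ is the unique permutation whose entries are ordered exactly like the entries of $y$. Concretely, I will first record a characterization. For a sequence $y=(y_1,\ldots,y_s)$ with pairwise distinct entries, $\std(y)$ is the unique $\sigma\in\Sn_s$ such that, for all $i,k\in[s]$,
\begin{equation*}
  \sigma_i<\sigma_k \iff y_i<y_k .
\end{equation*}
Both directions are direct. If $y_i<y_k$, then $\{j:y_j<y_i\}\subsetneq\{j:y_j<y_k\}$, since $i$ lies in the second set but not the first. Hence $\std(y)_i<\std(y)_k$, and the converse follows from distinctness by exchanging $i$ and $k$. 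For uniqueness, any $\sigma\in\Sn_s$ satisfying the displayed condition has $\sigma_i-1=|\{k:\sigma_k<\sigma_i\}|=|\{k:y_k<y_i\}|$, because the values of $\sigma$ are exactly $[s]$. This gives $\sigma=\std(y)$.

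Next, I apply the characterization with $y=(x_{j_1},\ldots,x_{j_s})$ and the candidate $\sigma=\std(\std(x)_{j_1},\ldots,\std(x)_{j_s})$. Applying the order-equivalence (already proved) twice, first to $\std$ on the length-$s$ sequence $z=(\std(x)_{j_1},\ldots,\std(x)_{j_s})$ and then to $\std(x)$ on $x$ itself, gives for all $a,b\in[s]$
\begin{equation*}
  \sigma_a<\sigma_b \iff \std(x)_{j_a}<\std(x)_{j_b} \iff x_{j_a}<x_{j_b}.
\end{equation*}
The entries of $z$ are pairwise distinct because $\std(x)$ is a permutation, so $\std(z)$ is well defined. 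By uniqueness, $\sigma=\std(x_{j_1},\ldots,x_{j_s})$, which is the claim.

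There is no real obstacle here. The only point needing care is the uniqueness half of the characterization, where I must use that $\sigma$ takes exactly the values $[s]$ to convert rank counts into values. Once that is in place, the transitivity is a two-line chaining of order-equivalences.
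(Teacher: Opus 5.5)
Your proposal is correct and follows the same route as the paper, which observes that $\std(x)_i<\std(x)_j \iff x_i<x_j$ for $i,j\in J$, so both subsequences induce the same strict total order and hence have the same standardization; your argument is this chaining, with the order-characterization of $\std$ proved explicitly. The uniqueness half of your characterization could be skipped, since the defining formula $\std(z)_a = 1+|\{b : z_b<z_a\}|$ already depends only on the induced order, but including it does no harm.
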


\begin{proof}
For any $i, j \in J$, the relation $\std(x)_i < \std(x)_j$ holds if and only if $x_i < x_j$. Consequently, both subsequences induce the same strict total order and therefore have identical standardizations.
\end{proof}

For $\pi,\sigma\in\Sn_m$, let $\LCS(\pi,\sigma)$ denote the length of a longest common subsequence of $\pi$ and $\sigma$. Its order-isomorphic counterpart is defined as
\begin{equation*}
  \LPat(\pi,\sigma) \triangleq \max\left\{s : I,J\subseteq[m],\, |I|=|J|=s,\, \std(\pi_I)=\std(\sigma_J)\right\}.
\end{equation*}
The Ulam distance and the pattern distance between $\pi$ and $\sigma$ are defined respectively by
\begin{equation*}
  d_{\mathrm U}(\pi,\sigma) \triangleq m-\LCS(\pi,\sigma), \quad d_{\mathrm P}(\pi,\sigma) \triangleq m-\LPat(\pi,\sigma).
\end{equation*}
Two permutations share a common exact-$t$ SID descendant if and only if $d_{\mathrm U}(\pi,\sigma)\le t$, whereas they share a common exact-$t$ PID descendant if and only if $d_{\mathrm P}(\pi,\sigma)\le t$.

\begin{example}
Let $\pi=(1,3,2)$ and $\sigma=(2,3,1)$. Setting $I=J=\{1,2\}$, we obtain
\begin{equation*}
\std(\pi_I)=\std(1,3)=(1,2)=\std(2,3)=\std(\sigma_J),
\end{equation*}
which implies $\LPat(\pi,\sigma)=2$ and $\LCS(\pi,\sigma)=1$. Thus, $d_{\mathrm U}(\pi,\sigma)=2$ and $d_{\mathrm P}(\pi,\sigma)=1$. Consequently, $\pi$ and $\sigma$ are confusable under a single PID, but not under a single SID. This shows that $\LPat(\pi,\sigma)$ only requires the selected subsequences to have the same relative order, whereas $\LCS(\pi,\sigma)$ requires identical values.
\end{example}

Both deletion operations compose. For the SID channel this holds trivially, while for the PID channel it follows from Lemma~\ref{lem:std-transitive}. Specifically, if $\tau\in \mathcal D_s^{\mathsf x}(\pi)\cap \mathcal D_s^{\mathsf x}(\sigma)$ for some $s < t$, deleting any additional $t-s$ symbols from $\tau$ produces a descendant $\rho \in \mathcal D_t^{\mathsf x}(\pi) \cap \mathcal D_t^{\mathsf x}(\sigma)$. Therefore, for $m > t$, two permutations share a common descendant after at most $t$ deletions if and only if they share a common descendant after exactly $t$ deletions.

To analyze the independence number of the exact-$t$ confusability graph, we bound the number of parent permutations that can produce a given deletion descendant.

\begin{lemma}[Parent Enumeration]\label{lem:parent-enumeration}
Let $t\ge0$.
\begin{enumerate}
  \item Every sequence $x\in\mathcal I_{m+t,m}$ has exactly $t!\binom{m+t}{t}$ SID parents in $\Sn_{m+t}$.
  \item Every permutation $\tau\in\Sn_m$ has at most $t!\binom{m+t}{t}^2$ PID parents in $\Sn_{m+t}$.
\end{enumerate}
Furthermore, all such parents can be enumerated in $(m+t)^{O(t+1)}$ time.
\end{lemma}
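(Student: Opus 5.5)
A parent is determined by its deletion positions and the symbols occupying them, so the plan is to count these choices in each model. In both parts a parent $\pi\in\Sn_{m+t}$ of a descendant is fixed by a set $I\subseteq[m+t]$ of $t$ deleted coordinates together with the entries placed there, and the surviving coordinates $[m+t]\setminus I$ are then determined by the descendant.

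\textbf{Part 1 (SID).} Fix $x\in\mathcal I_{m+t,m}$ and let $A=[m+t]\setminus\{x_1,\ldots,x_m\}$, so $|A|=t$.
\begin{itemize}
\item The parent $\pi$ satisfies $\pi_{[m+t]\setminus I}=x$, so the surviving entries are forced.
\item The $t$ deleted entries are exactly the symbols of $A$, placed in some order at the positions of $I$.
\end{itemize}
There are $\binom{m+t}{t}$ choices of $I$ and $t!$ orderings of $A$. Distinct pairs $(I,\text{ordering})$ give distinct permutations, because $I=\pi^{-1}(A)$ is recovered from $\pi$. Conversely, every such choice yields a permutation of $[m+t]$ whose SID descendant at $I$ is $x$. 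This gives exactly $t!\binom{m+t}{t}$ parents.

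\textbf{Part 2 (PID).} Fix $\tau\in\Sn_m$. A PID parent $\pi$ with deletion set $I$ is determined by three data:
\begin{itemize}
\item the position set $I$;
\item the value set $V=\{\pi_i:i\in I\}\subseteq[m+t]$, $|V|=t$;
\item the bijection from $I$ to $V$.
\end{itemize}
Given these, the surviving positions must carry the values $[m+t]\setminus V$ arranged in the relative order prescribed by $\tau$. Explicitly, if $c_1<\cdots<c_m$ enumerate $[m+t]\setminus V$ and $j_1<\cdots<j_m$ enumerate $[m+t]\setminus I$, then $\pi_{j_k}=c_{\tau_k}$. This is the only assignment whose standardization is $\tau$, since $\std$ is a bijection on each fixed value set. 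Hence the map $(I,V,\text{bijection})\mapsto\pi$ is surjective onto the parent set. That gives at most $\binom{m+t}{t}\binom{m+t}{t}t!$ parents. It is only an upper bound because different triples may give the same $\pi$: for instance, deleting either of two adjacent consecutive values produces the same descendant. No injectivity is needed for an upper bound.

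\textbf{Enumeration.} The algorithm loops over all triples, builds each candidate $\pi$ in $O(m+t)$ time, and removes duplicates by sorting or hashing. For SID it loops over $(I,\text{ordering})$ instead. The number of triples is at most $t!\binom{m+t}{t}^2\le (m+t)^{2t}\cdot t!$, which is $(m+t)^{O(t+1)}$ since $t!\le (m+t)^t$. Duplicate removal costs at most a polynomial factor in the list size. This is also where the $+1$ in the exponent comes from: it covers $t=0$, where the count is $1$ and the work is still linear.

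The main subtlety is justifying that the surviving entries are uniquely forced in the PID case. This relies on $\std$ restricted to sequences over a fixed value set $C$ being a bijection onto $\Sn_m$, which follows directly from the definition of standardization as the order-preserving relabeling $C\to[m]$. Everything else is routine counting.
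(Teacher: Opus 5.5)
Your proposal is correct and follows essentially the same argument as the paper: you choose the $t$ deleted positions and the deleted values (forced for SID, freely chosen for PID) together with their $t!$ orderings, and the surviving entries are fixed by the order-preserving relabeling of the descendant onto the complementary value set. You also make explicit two points the paper leaves implicit: the recovery $I=\pi^{-1}(A)$, which justifies the exact SID count, and the duplicate removal needed when enumerating PID parents, since the map from triples to parents is not injective.
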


\begin{proof}
For the SID channel, the $t$ missing symbols are uniquely identified by $[m+t] \setminus \{x_1, \ldots, x_m\}$. There are $\binom{m+t}{t}$ ways to choose their positions in $\Sn_{m+t}$ and $t!$ ways to arrange them. For the PID channel, a parent permutation is determined by choosing $t$ positions for the inserted symbols, choosing $t$ values from $[m+t]$ for these symbols, and ordering them in $t!$ ways. The remaining $m$ values are assigned by applying the unique order-preserving bijection from $[m]$ to the complement of the inserted value set to the symbols of $\tau$. These combinatorial choices yield the claimed counts and the corresponding enumeration algorithm.
\end{proof}

\subsection{Permutation Codes and Systematicity}

\begin{definition}[Deletion-Correcting Permutation Code]
For $\mathsf x\in\{\SID,\PID\}$, a code $\mathcal C\subseteq\Sn_m$ corrects up to $t$ $\mathsf x$-deletions if
\begin{equation*}
    \mathcal D_{\le t}^{\mathsf x}(\pi)\cap\mathcal D_{\le t}^{\mathsf x}(\sigma)=\varnothing
\end{equation*}
for all distinct $\pi,\sigma\in\mathcal C$. The redundancy of $\mathcal C$ is defined as $\red(\mathcal C)\triangleq\log\bigl(m!/|\mathcal C|\bigr)$.
\end{definition}

Let $G^{\mathsf x}_{m,t}$ denote the exact-$t$ confusability graph on the vertex set $\Sn_m$, where distinct permutations are adjacent if and only if their exact-$t$ $\mathsf x$-descendant sets intersect. By the exact-$t$ reduction property, up-to-$t$-deletion-correcting codes correspond to independent sets in $G^{\mathsf x}_{m,t}$. Let $\alpha(G)$ denote the independence number of a graph $G$, and let $M_{\mathsf x}(m,t)$ be the maximum cardinality of a $t$-$\mathsf x$-deletion-correcting code in $\Sn_m$. Then,
\begin{equation*}
  M_{\mathsf x}(m,t) = \alpha(G^{\mathsf x}_{m,t}),
\end{equation*}
and the optimal redundancy is defined by
\begin{equation*}
  R_{\mathsf x}(m,t) \triangleq \log\frac{m!}{M_{\mathsf x}(m,t)}.
\end{equation*}

\begin{definition}[Full-Systematic Encoder]\label{def:full-systematic}
Let $r\ge0$ be an integer and define $\mathcal R_{n,r} \triangleq [n+1,n+r]$, with the convention $\mathcal R_{n,0} \triangleq \varnothing$. An injective map $\Enc_n$ from $\Sn_n$ to $\Sn_{n+r}$ is full-systematic if
\begin{equation*}
  \Enc_n(\mu)|_{[n]} = \mu
\end{equation*}
for every message permutation $\mu\in\Sn_n$. The set $\mathcal R_{n,r}$ comprises the redundancy symbols. Since $|\Enc_n(\Sn_n)| = n!$, the redundancy of $\Enc_n$ is given by
\begin{equation*}
  \red(\Enc_n) \triangleq \log \frac{(n+r)!}{n!} = \sum_{j=1}^r\log(n+j).
\end{equation*}
\end{definition}

\subsection{Bounds and Benchmarks}

The asymptotic bounds on the maximum code sizes for both deletion models are summarized in the following theorem. The lower bound for the SID model is the improved Gilbert-Varshamov bound established in~\cite[Theorem~1]{wang2025permutation}. We establish the PID counterpart in Appendix~\ref{app:proof-thm-bounds} by analyzing the independence number via triangle enumeration in the confusability graph.

\begin{theorem}[Bounds for SID and PID Codes]\label{thm:prelim-bounds}
For every fixed integer $t\ge1$ and all sufficiently large $m$,
\begin{align*}
  \Omega_t\left(\frac{m!\log m}{m^{2t}}\right) &\le M_{\SID}(m,t) \le O_t\left(\frac{m!}{m^{t}}\right),
  \\
  \Omega_t\left(\frac{m!\log m}{m^{3t}}\right) &\le M_{\PID}(m,t) \le O_t\left(\frac{m!}{m^{t}}\right).
\end{align*}
\end{theorem}

The difference between the degree exponents in the lower bounds originates from the structural information lost under the two channels. In particular, a given SID descendant has $O_t(m^t)$ potential parents, whereas a PID descendant has $O_t(m^{2t})$ parents because the absolute values of the deleted symbols cannot be identified from the standardized sequence.

%% file: main_paper/pid.tex
\section{Systematic Codes For Permutation-Invariant Deletions}\label{sec:pid}

In this section, we present a construction of full-systematic permutation codes capable of correcting a fixed number of PIDs. Recall that, unlike the SID model studied in~\cite{wang2025permutation}, a PID standardizes the surviving symbols after each deletion, thereby obscuring the absolute values of the deleted elements. Our construction relies on a two-stage framework. First, we design a polynomial-time computable inner syndrome map whose syndrome classes each correct up to $t$ PIDs, where the total number of syndromes is bounded by $O_t(n^{6t-2})$. Second, we introduce an algebraic outer code that embeds this syndrome into the relative positions of $7t-1$ distinct redundancy markers. In the outer stage, $t$ integer moments narrow the parent candidates down to a list of constant size, after which a greedy coloring scheme guarantees unambiguous decoding. Consequently, the resulting code achieves a redundancy of $(7t-1)\log n+O_t(1)$ bits.

\subsection{A Polynomial-Time Inner PID Syndrome}
\label{subsec:inner-pid-syndrome}

For a permutation $\pi=(\pi_1,\ldots,\pi_n)\in\mathcal S_n$, its binary signature $\alpha(\pi)\in\{0,1\}^{n-1}$ is defined by
\begin{equation*}
   \alpha(\pi)_i \triangleq \mathbf 1\{\pi_{i+1}>\pi_i\},
   \qquad i\in[n-1].
\end{equation*}
We denote by $\mathbb Z_n$ the ring of integers modulo $n$. For a binary vector $x\in\{0,1\}^{n-1}$, we define
\begin{equation*}
   \VT_n(x) \triangleq \sum_{i=1}^{n-1} i x_i\pmod n,
\end{equation*}
and the signature syndrome
\begin{equation}
 \nu_n(\pi) \triangleq
 \bigl(\VT_n(\alpha(\pi)),\VT_n(\alpha(\pi^{-1}))\bigr)
 \in\mathbb Z_n^2.
 \label{eq:single-pid-syndrome}
\end{equation}
The following lemma recalls the single-PID-correcting code established in~\cite{gabrys2015codes}.

\begin{lemma}[Single-PID Correcting Codes~\cite{gabrys2015codes}]
\label{lem:single-pid-partition}
For every $v\in\mathbb Z_n^2$, the syndrome class $\nu_n^{-1}(v)$ corrects a single PID. In particular, for every $\rho\in\mathcal S_{n-1}$ and every $v\in\mathbb Z_n^2$,
\begin{equation*}
 \bigl|\{\pi\in\mathcal S_n :
       \rho\in\cD_1^{\PID}(\pi),\ \nu_n(\pi)=v\}\bigr|\leq1.
\end{equation*}
Moreover, $\nu_n(\pi)$ and the corresponding single-PID decoding algorithm are computable in polynomial time.
\end{lemma}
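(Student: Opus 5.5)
Since the statement is attributed to~\cite{gabrys2015codes}, I would reconstruct its proof rather than build a new one. The plan is to show that a single PID acts on the pair of signatures $(\alpha(\pi),\alpha(\pi^{-1}))$ like a single deletion on two Varshamov--Tenengolts-protected binary words. Then I would use the two $\VT_n$ checksums to pin down the deleted position and the deleted value.

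\textbf{Step 1: the row signature.} Fix $\rho\in\cD_1^{\PID}(\pi)$, obtained by deleting position $i$, whose value is $a=\pi_i$. Then $\rho_j$ equals $\pi_j$ or $\pi_j-1$, and the shift is uniform on every entry that survives. So for each surviving adjacent pair, the comparison $\pi_{j+1}>\pi_j$ is preserved. The only new comparison is between $\pi_{i-1}$ and $\pi_{i+1}$, and its bit equals either $\alpha(\pi)_{i-1}$ or $\alpha(\pi)_i$. Hence $\alpha(\rho)$ is obtained from $\alpha(\pi)$ by deleting one bit, in the usual sense that deleting $x_{i-1}$ or $x_i$ leaves the same result after the merge. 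This is the known fact that a PID induces a single deletion on the descent sequence.

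\textbf{Step 2: the column signature.} Since $\pi^{-1}$ is the position sequence of the values, deleting position $i$ and value $a$ from $\pi$ corresponds to deleting position $a$, with value $i$, from $\pi^{-1}$. Moreover, $\rho^{-1}=\std(\pi^{-1}\text{ with entry }a\text{ removed})$. So Step 1 applies verbatim: $\alpha(\rho^{-1})$ is a single-bit deletion of $\alpha(\pi^{-1})$.

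\textbf{Step 3: decoding.} The class $\{x\in\{0,1\}^{n-1}:\VT_n(x)=v_1\}$ is a Levenshtein single-deletion code. Its modulus $n$ is at least the length plus one, so $\alpha(\rho)$ and $v_1$ determine $\alpha(\pi)$ uniquely. The standard VT decoder also recovers the deleted bit and a run of admissible deletion positions. Likewise, $\alpha(\rho^{-1})$ and $v_2$ determine $\alpha(\pi^{-1})$. It remains to show that $\rho$, $\alpha(\pi)$ and $\alpha(\pi^{-1})$ together determine $\pi$.

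\textbf{The main obstacle.} This last reconstruction is where I expect the real difficulty. Inserting the value $a$ at position $i$ into $\rho$ (shifting the entries $\ge a$ up by one) must reproduce $\alpha(\pi)$ and $\alpha(\pi^{-1})$. The row signature confines $i$ to a run of $\alpha(\rho)$ at the correct index, and the column signature confines $a$ similarly. The worry is that several pairs $(i,a)$ inside these runs could yield the same two signatures. I would handle this by case analysis on the inserted bit of $\alpha(\pi)$. Take the case where the inserted bit is $1$ in a run of ascents: inserting $a$ strictly between its neighbours is then forced by the values, and the column signature fixes $a$ exactly. The descent case is symmetric. Different choices of $i$ within the run give distinct inverse signatures, because they move $a$ relative to the neighbouring values $a\pm1$ in position order. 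If this case analysis fails, I would fall back on the argument given in~\cite{gabrys2015codes} and merely cite it.

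\textbf{Complexity.} Computing $\nu_n$ is $O(n)$. Decoding needs two VT decodings followed by an $O(n^2)$ scan over the candidate pairs $(i,a)$, checking each against $\rho$. This is polynomial time, which gives the claimed at-most-one bound and the decoder together.
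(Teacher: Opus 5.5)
The paper gives no proof of this lemma beyond the citation to \cite{gabrys2015codes}, so your reconstruction has to stand on its own. Steps 1--3 are correct. They reduce the claim to the following. Write $\rho\oplus(i,a)$ for the permutation obtained by inserting value $a$ at position $i$ (raising the entries $\ge a$). If $\pi=\rho\oplus(i,a)$ and $\sigma=\rho\oplus(j,b)$ satisfy $\alpha(\pi)=\alpha(\sigma)$ and $\alpha(\pi^{-1})=\alpha(\sigma^{-1})$, then $\pi=\sigma$.

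Your plan for this step has a genuine gap, because it tries to pin down the pair $(i,a)$, and that pair is not unique. For $\rho=(1,2,\ldots,n-1)$, the permutation $\rho\oplus(i,i)$ is the identity for every $i\in[n]$. There the ascending run is everything, the column signature does not fix $a$, and different $i$ give identical inverse signatures. So the two claims your case analysis rests on are false. In general, for an interior $i$ of an ascending run, $a$ is only confined to $\rho_{i-1}<a\le\rho_i$, not forced by the values. What is missing is an argument that couples the position run with the value run and shows two admissible insertions either give the same permutation or contradict each other. Your sketch also never treats an ascending position run paired with a descending value run.

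\emph{Closing the gap.} The positions of $\alpha(\pi)$ whose deletion yields $\alpha(\rho)$ form a single maximal run $[s,s']$, so $i,j\in[s,s'+1]$. Likewise $a,b\in[u,u'+1]$ for a single maximal run $[u,u']$ of $\alpha(\pi^{-1})$.

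Using the symmetries $\pi\mapsto\pi^{-1}$ and reversal, and swapping $\pi$ and $\sigma$ if needed, assume $i<j$ and that $[s,s']$ is a run of ascents. Each inserted point lies in the increasing chain at positions $[s,s'+1]$ of its permutation. Hence $\rho_k\ge a$ for $k\in[i,s']$, and $\rho_k\le b-1$ for $k\in[s,j-1]$. So the nonempty block of positions $[i,j-1]$ carries values in $[a,b-1]$, which forces $a<b$.

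If $[u,u']$ is a run of descents, the values in $[a,b-1]$ must sit at positions $\le i-1$ (from the first insertion) and $\ge j$ (from the second), which is impossible.

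If $[u,u']$ is a run of ascents, those values sit at positions in $[i,j-1]$. Then $j-i=b-a$ and $\rho_{i+k}=a+k$ for $0\le k<j-i$. Inserting at either end of this diagonal block yields the same permutation, so $\pi=\sigma$.

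With this in place, your VT decoding and your $O(n^2)$ scan give the claimed polynomial-time decoder.
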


For $v\in\mathbb Z_n^2$, let $G_{n,t}(v)$ denote the residual graph on the vertex set $\nu_n^{-1}(v)$, where distinct permutations $\pi$ and $\sigma$ are adjacent if and only if $\cD_t^{\PID}(\pi)\cap\cD_t^{\PID}(\sigma)\neq\varnothing$. To bound the maximum degree of $G_{n,t}(v)$, we first bound the number of compatible parents of a fixed exact-$t$ descendant within the syndrome class $\nu_n^{-1}(v)$.

\begin{lemma}[Residual Parent List]
\label{lem:residual-parent-list}
Let $\tau\in\mathcal S_{n-t}$ and $v\in\mathbb Z_n^2$. Then
\begin{equation*}
 \bigl|\{\pi\in\mathcal S_n : \tau\in\cD_t^{\PID}(\pi),\ \nu_n(\pi)=v\}\bigr|\leq L_{n,t},
\end{equation*}
where
\begin{equation}
 L_{n,t} \triangleq \binom{n-1}{t-1}^{\!2}(t-1)!.
 \label{eq:Lnt}
\end{equation}
\end{lemma}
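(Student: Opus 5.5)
Every parent $\pi\in\Sn_n$ of $\tau$ in the syndrome class $\nu_n^{-1}(v)$ will be reached through an intermediate permutation $\rho\in\Sn_{n-1}$ that is a PID parent of $\tau$ after $t-1$ deletions. Lemma~\ref{lem:single-pid-partition} then pins down $\pi$ uniquely from $\rho$. So the count is reduced to counting the possible intermediates $\rho$.

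First, fix such a $\pi$ and an index set $I\subseteq[n]$ with $|I|=t$ and $\operatorname{del}^{\PID}_I(\pi)=\tau$. Pick any $i_0\in I$ and set $\rho\triangleq\operatorname{del}^{\PID}_{\{i_0\}}(\pi)\in\Sn_{n-1}$. By Lemma~\ref{lem:std-transitive}, deleting from $\rho$ the $t-1$ positions corresponding to $I\setminus\{i_0\}$ and then standardizing gives $\std$ of the same surviving subsequence of $\pi$. Hence $\tau\in\cD_{t-1}^{\PID}(\rho)$ and $\rho\in\cD_1^{\PID}(\pi)$. This defines a map $\pi\mapsto\rho(\pi)$ from the set of compatible parents into the set $P$ of PID parents of $\tau$ in $\Sn_{n-1}$ at distance $t-1$.

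Second, this map is injective. If $\pi\ne\pi'$ both have $\nu_n(\pi)=\nu_n(\pi')=v$ and $\rho(\pi)=\rho(\pi')=\rho$, then $\rho\in\cD_1^{\PID}(\pi)\cap\cD_1^{\PID}(\pi')$ inside the same syndrome class. This contradicts the bound of at most one parent in Lemma~\ref{lem:single-pid-partition}. Therefore the number of compatible parents is at most $|P|$.

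Third, bound $|P|$ with the second part of Lemma~\ref{lem:parent-enumeration}, applied with $m=n-t$ and $t-1$ insertions. This gives
\begin{equation*}
 |P|\le (t-1)!\binom{n-1}{t-1}^{2},
\end{equation*}
which is exactly $L_{n,t}$ in \eqref{eq:Lnt}.

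The only delicate point is the first step: checking that removing one deleted coordinate first and standardizing is consistent with the overall PID descendant. This is precisely where Lemma~\ref{lem:std-transitive} is used. Once $\rho$ is well defined, the injectivity and counting steps are immediate.
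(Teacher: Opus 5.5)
Your proposal is correct and follows the paper's proof. Like the paper, you route each compatible parent through an intermediate $\rho\in\mathcal S_{n-1}$ with $\rho\in\cD_1^{\PID}(\pi)$ and $\tau\in\cD_{t-1}^{\PID}(\rho)$, use Lemma~\ref{lem:single-pid-partition} to allow at most one parent per $\rho$ in $\nu_n^{-1}(v)$, and bound the number of such $\rho$ by Lemma~\ref{lem:parent-enumeration} with $m=n-t$ and $t-1$ insertions. You additionally spell out the existence of $\rho$ via Lemma~\ref{lem:std-transitive} and phrase the union bound as an injective map, which the paper only asserts.
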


\begin{proof}
For every parent $\pi$ in the considered set, there exists an intermediate permutation $\rho\in\mathcal S_{n-1}$ such that $\rho\in\cD_1^{\PID}(\pi)$ and $\tau\in\cD_{t-1}^{\PID}(\rho)$. By Lemma~\ref{lem:parent-enumeration}, the number of candidate permutations $\rho$ is at most $L_{n,t}$. For each fixed $\rho$, Lemma~\ref{lem:single-pid-partition} implies that at most one parent $\pi$ can have syndrome $v$. Taking the union over all possible intermediate permutations $\rho$ completes the proof.
\end{proof}

Summing the bound in Lemma~\ref{lem:residual-parent-list} over the exact-$t$ descendants of a given permutation yields the following degree bound for the residual graph.

\begin{corollary}[Residual Degree] \label{cor:residual-degree}
Every graph $G_{n,t}(v)$ has maximum degree at most
\begin{equation}
K_{n,t} \triangleq \binom nt\binom{n-1}{t-1}^{\!2}(t-1)!\leq (t-1)!\,n^{3t-2}.\label{eq:Knt}
\end{equation}
Furthermore, its neighborhood relation can be enumerated in $n^{O(t)}$ time.
\end{corollary}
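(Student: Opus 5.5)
The plan is to bound the degree by a union over common descendants, following the remark preceding the statement. Fix $v\in\mathbb Z_n^2$ and $\pi\in\nu_n^{-1}(v)$. Every neighbor $\sigma$ of $\pi$ in $G_{n,t}(v)$ satisfies $\nu_n(\sigma)=v$ and shares some $\tau\in\cD_t^{\PID}(\pi)\cap\cD_t^{\PID}(\sigma)$. Hence
\begin{equation*}
 N(\pi)\subseteq\bigcup_{\tau\in\cD_t^{\PID}(\pi)}\bigl\{\sigma\in\mathcal S_n:\tau\in\cD_t^{\PID}(\sigma),\ \nu_n(\sigma)=v\bigr\}.
\end{equation*}

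Two counts then give the bound. Each exact-$t$ PID descendant is determined by the deleted index set $I\subseteq[n]$ with $|I|=t$, so $|\cD_t^{\PID}(\pi)|\le\binom nt$. By Lemma~\ref{lem:residual-parent-list}, each inner set has at most $L_{n,t}$ elements. Multiplying gives $\deg(\pi)\le\binom nt L_{n,t}=K_{n,t}$; the set also contains $\pi$ itself, so the bound is not tight, which is harmless.

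For the explicit bound I will use $\binom nt\le n^t/t!$ and $\binom{n-1}{t-1}\le n^{t-1}/(t-1)!$. These give
\begin{equation*}
K_{n,t}\le \frac{n^t}{t!}\cdot\frac{n^{2t-2}}{(t-1)!^2}\,(t-1)!=\frac{n^{3t-2}}{t!\,(t-1)!}\le (t-1)!\,n^{3t-2},
\end{equation*}
where the last step holds because $t!\,(t-1)!\,(t-1)!\ge1$.

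For the algorithm, given $\pi$ the procedure has three steps:
\begin{enumerate}
\item Enumerate the $\binom nt$ index sets $I$ and compute $\tau=\std(\pi_{[n]\setminus I})$.
\item For each $\tau$, enumerate all PID parents $\sigma\in\mathcal S_n$ via Lemma~\ref{lem:parent-enumeration}, which costs $n^{O(t)}$ time.
\item Keep those $\sigma\neq\pi$ with $\nu_n(\sigma)=v$; computing $\nu_n$ is polynomial by Lemma~\ref{lem:single-pid-partition}.
\end{enumerate}
The total time is $n^{O(t)}$. An adjacency test between two given permutations works the same way: enumerate $\cD_t^{\PID}$ of both and intersect. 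There is no substantive obstacle; the only care needed is that the enumeration lists all PID parents (up to $t!\binom nt^2$ of them), not just the $L_{n,t}$ residual ones, which is still $n^{O(t)}$.
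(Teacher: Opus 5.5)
Your proposal is correct and follows the paper's proof: a union bound over the at most $\binom nt$ exact-$t$ descendants using Lemma~\ref{lem:residual-parent-list}, then listing all $t$-PID parents of each descendant via Lemma~\ref{lem:parent-enumeration} and keeping those with syndrome $v$. The only difference is cosmetic: you use the sharper bound $\binom nt\le n^t/t!$ where the paper uses $\binom nt\le n^t$, which improves the constant and still gives the stated inequality.
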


\begin{proof}
A fixed permutation $\pi$ has at most $\binom nt$ exact-$t$ descendants. For each descendant, Lemma~\ref{lem:residual-parent-list} guarantees at most $L_{n,t}$ parents within the same $\nu_n$-syndrome class. A union bound shows that the maximum degree of $G_{n,t}(v)$ cannot exceed $K_{n,t}$. The inequality follows from $\binom nt\leq n^t$ and $\binom{n-1}{t-1}\leq n^{t-1}$.

For the enumeration algorithm, we list all exact-$t$ descendants of $\pi$, enumerate their $t$-PID parents by Lemma~\ref{lem:parent-enumeration}, and retain those that match the prescribed syndrome $\nu_n$. The total running time is bounded by $n^{O(t)}$.
\end{proof}

By the construction of $G_{n,t}(v)$, every color class in a proper vertex coloring contains no two permutations sharing an exact-$t$ PID descendant. Hence, by Lemma~\ref{lem:std-transitive}, each color class corrects up to $t$ PIDs. Constructing the second-stage syndrome therefore reduces to properly coloring each residual graph with an alphabet of small cardinality that is efficiently computable. While the Lehmer rank yields a trivial proper coloring, its palette has size $n!$. Fortunately, Corollary~\ref{cor:residual-degree} provides bounded and efficiently enumerable neighborhoods, which allows this large coloring to be compressed locally. The following lemma adapts the polynomial cover-free coloring technique from~\cite[Lemmas 13 and 15]{li2025distributed} to our setting.

\begin{lemma}[Local Algebraic Fingerprint]
\label{lem:local-fingerprint}
Let $G=(V,E)$ be a graph with maximum degree at most $K$, and let $c$ be a proper coloring from $V$ to $\{0,\ldots,U-1\}$. Let $q$ be a prime number and $\ell \triangleq \lceil\log_q U\rceil$. Assume $q>K(\ell-1)$. Let the base-$q$ expansion of $c(x)$ be written as $c(x)=\sum_{i=0}^{\ell-1}c_i(x)q^i$, and define the polynomial
\begin{equation*}
    P_x(X) \triangleq \sum_{i=0}^{\ell-1}c_i(x)X^i\in\mathbb F_q[X].
\end{equation*}
For each $x\in V$, let $N_G(x) \triangleq \{y\in V : \{x,y\}\in E\}$ denote the neighborhood of $x$ in $G$. Let $a(x)$ be the smallest element in $\mathbb F_q$ under the standard representative ordering such that $P_x(a(x))\neq P_y(a(x))$ for every $y\in N_G(x)$. Then $a(x)$ exists, and the mapping
\begin{equation*}
    \Phi_c(x) \triangleq (a(x),P_x(a(x)))\in\mathbb F_q^2
\end{equation*}
is a proper coloring. Given the values $c(y)$ for all $y\in\{x\}\cup N_G(x)$, the color $\Phi_c(x)$ can be computed in time polynomial in $q$, $\ell$, and $K$.
\end{lemma}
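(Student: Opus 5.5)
The plan is a standard polynomial-identity argument in three steps: existence of $a(x)$, properness of $\Phi_c$, and the running time.

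\textbf{Existence of $a(x)$.} Fix $x\in V$ and a neighbor $y\in N_G(x)$. Because $c$ is proper, $c(x)\neq c(y)$. Both values lie in $\{0,\ldots,U-1\}$ and $U\le q^\ell$, so each has a unique base-$q$ expansion with $\ell$ digits. Hence the digit vectors differ, and $P_x-P_y$ is a nonzero polynomial in $\mathbb F_q[X]$. Here we use that $q$ is prime, so the digits $0,\ldots,q-1$ are distinct elements of $\mathbb F_q$. Its degree is at most $\ell-1$, so it has at most $\ell-1$ roots in $\mathbb F_q$. Call $a\in\mathbb F_q$ bad if $P_x(a)=P_y(a)$ for some neighbor $y$. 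A union bound over the at most $K$ neighbors shows there are at most $K(\ell-1)$ bad elements. Since $q>K(\ell-1)$, some element of $\mathbb F_q$ is good, and $a(x)$ is the least such element in the standard ordering.

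\textbf{Properness.} Let $\{x,y\}\in E$. If $a(x)\neq a(y)$, then $\Phi_c(x)\neq\Phi_c(y)$ already in the first coordinate. Otherwise $a(x)=a(y)=a$, and because $y\in N_G(x)$, the defining property of $a(x)$ gives $P_x(a)\neq P_y(a)$. The second coordinates then differ. This case check uses only the neighbor list of $x$, not that of $y$, so no symmetry argument is needed.

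\textbf{Complexity.} Given $c(y)$ for all $y\in\{x\}\cup N_G(x)$, computing the base-$q$ digits costs $O(\ell)$ arithmetic operations per vertex. For each of the $q$ candidates $a$, evaluating the at most $K+1$ polynomials by Horner's rule costs $O(K\ell)$ field operations, for $O(qK\ell)$ in total. The input integers $c(y)<U\le q^\ell$ have bit length $O(\ell\log q)$, so digit extraction is polynomial in $\ell$ and $\log q$. The overall bound is polynomial in $q$, $\ell$ and $K$.

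The only delicate point is the requirement that distinct colors give distinct polynomials. This needs $\ell=\lceil\log_q U\rceil$ to guarantee $U\le q^\ell$, so that expansions are unique and not truncated. It also needs the edge case $U=1$ (so $\ell=0$) to be harmless: the graph then has no edges, since a proper coloring with one color forces that, and $a(x)=0$ works vacuously. Otherwise the argument is routine.
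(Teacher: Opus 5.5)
Your proposal is correct and follows the paper's proof essentially step for step: a union bound over the at most $K$ neighbors, each contributing at most $\ell-1$ roots of $P_x-P_y$, gives existence; a check on the first coordinate using only $N_G(x)$ gives properness; and sequential evaluation over the $q$ field elements gives the running time. Your extra remarks are details the paper leaves implicit: that $U\le q^\ell$ makes $c\mapsto P$ injective, and that the degenerate case $U=1$ is harmless.
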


\begin{proof}
For any neighbor $y$ of $x$, the properness of $c$ ensures that $P_x\neq P_y$. Thus, $P_x-P_y$ is a nonzero polynomial of degree at most $\ell-1$, which has at most $\ell-1$ roots in $\mathbb F_q$. Since $|N_G(x)|\le K$, the union of the root sets of all such differences has cardinality at most $K(\ell-1)$. The condition $q>K(\ell-1)$ implies that at least one element of $\mathbb F_q$ is not a root, which proves the existence of $a(x)$.

If $\{x,y\}\in E$ and $\Phi_c(x)=\Phi_c(y)$, then their first coordinates must be equal, say to $a$. Since $a=a(x)$ and $y\in N_G(x)$, the definition of $a(x)$ enforces $P_x(a)\neq P_y(a)$, which contradicts the equality of their second coordinates. Hence, $\Phi_c$ is proper. The algorithm computes $\Phi_c(x)$ by sequentially evaluating the $q$ field elements in their fixed order.
\end{proof}

We apply Lemma~\ref{lem:local-fingerprint} twice. Assume $n\geq t+1$ and set $K \triangleq K_{n,t}$. Let $q_0$ and $q_1$ be the smallest primes satisfying $q_0>2nK$ and $q_1>4K$. By Bertrand's postulate, $q_0<4nK$ and $q_1<8K$. For a fixed syndrome $v\in\mathbb Z_n^2$, let $c_{0,v}(\pi)$ be the Lehmer rank of $\pi$, viewed as an integer in $\{0,\ldots,n!-1\}$. This provides an injective, and therefore proper, coloring of $G_{n,t}(v)$. Applying Lemma~\ref{lem:local-fingerprint} with $q=q_0$ yields a proper color $c_{1,v}(\pi)\in\mathbb F_{q_0}^2$. We then represent this pair as an integer in $\{0,\ldots,q_0^2-1\}$ and apply the lemma a second time with $q=q_1$, obtaining $c_{2,v}(\pi)\in\mathbb F_{q_1}^2$.

Both applications of the lemma satisfy the degree constraints. Since $q_0>n$, we have
\begin{equation*}
 \ell_0 \triangleq \lceil\log_{q_0}(n!)\rceil\leq n,\qquad K(\ell_0-1)<nK<q_0.
\end{equation*}
Furthermore, $K\geq n$ holds for all $n\geq t+1$. This is obvious when $t=1$. For $t\geq2$, one has $\binom{n-1}{t-1}\geq\binom{t}{t-1}=t$, and hence $\binom nt=\frac nt\binom{n-1}{t-1}\geq n$. Consequently,
\begin{equation*}
 q_0^2<(4nK)^2\leq16K^4<(4K)^4<q_1^4.
\end{equation*}
This yields $\ell_1 \triangleq \lceil\log_{q_1}(q_0^2)\rceil\leq4$ and $K(\ell_1-1)\leq3K<q_1$.

These two compression stages establish the inner syndrome map and its decoding capability.

\begin{theorem}[Polynomial-Time Inner PID Syndrome]
\label{thm:inner-pid-syndrome}
For every fixed integer $t\geq1$ and $n\geq t+1$, the inner syndrome map $\kappa_{n,t}$ from $\mathcal S_n$ to $\mathbb Z_n^2\times\mathbb F_{q_1}^2$ is defined by
\begin{equation}
  \kappa_{n,t}(\pi) \triangleq
  \bigl(\nu_n(\pi),c_{2,\nu_n(\pi)}(\pi)\bigr).
  \label{eq:inner-syndrome}
\end{equation}
Every syndrome class of $\kappa_{n,t}$ corrects up to $t$ PIDs. The cardinality of the syndrome space satisfies
\begin{equation*}
  M_{n,t} \triangleq n^2q_1^2<64n^2K_{n,t}^2
  \leq C_t n^{6t-2},\qquad
  C_t \triangleq 64((t-1)!)^2.
\end{equation*}
The map $\kappa_{n,t}$, the associated decoding procedure, and all design parameters can be computed in $n^{O(t)}$ time.
\end{theorem}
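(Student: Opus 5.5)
The proof splits into three parts: correctness of each syndrome class, the cardinality bound, and the complexity claims. All three follow from the two-stage fingerprint construction set up just before the statement.

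\textbf{Correctness.} Suppose $\kappa_{n,t}(\pi)=\kappa_{n,t}(\sigma)$ for distinct $\pi,\sigma$ that share an exact-$t$ PID descendant.
\begin{enumerate}[leftmargin=*,label=(\roman*)]
\item Equality of the first components gives $\nu_n(\pi)=\nu_n(\sigma)=v$. So both permutations are vertices of $G_{n,t}(v)$, and by definition they are adjacent there.
\item The Lehmer rank $c_{0,v}$ is injective, hence a proper coloring of $G_{n,t}(v)$.
\item Corollary~\ref{cor:residual-degree} bounds the maximum degree by $K=K_{n,t}$. The computations preceding the theorem give $K(\ell_0-1)<q_0$ and $K(\ell_1-1)<q_1$.
\item Applying Lemma~\ref{lem:local-fingerprint} twice, $c_{1,v}$ and then $c_{2,v}$ are proper colorings of $G_{n,t}(v)$. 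For the second application, the identification $\mathbb F_{q_0}^2\to\{0,\dots,q_0^2-1\}$ is injective, so properness is preserved.
\item Hence $c_{2,v}(\pi)\neq c_{2,v}(\sigma)$, which contradicts $\kappa_{n,t}(\pi)=\kappa_{n,t}(\sigma)$.
\end{enumerate}
So no two members of a class share an exact-$t$ descendant. By Lemma~\ref{lem:std-transitive} and the exact-$t$ reduction discussed after it (valid since $n>t$), each class corrects up to $t$ PIDs.

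\textbf{Cardinality.} The syndrome space is $\mathbb Z_n^2\times\mathbb F_{q_1}^2$, of size $n^2q_1^2$. Bertrand's postulate gives $q_1<8K$, so $n^2q_1^2<64n^2K^2$. Inserting $K\le (t-1)!\,n^{3t-2}$ from \eqref{eq:Knt} yields $64((t-1)!)^2 n^{6t-2}$.

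\textbf{Complexity.} This is the step needing the most care, because computing $c_{2,v}(\pi)$ requires $c_{1,v}(y)$ for every neighbor $y$ of $\pi$. Each of those in turn requires the Lehmer ranks of the neighbors of $y$, so the cost involves a two-hop neighborhood.
\begin{itemize}[leftmargin=*]
\item The primes $q_0<4nK$ and $q_1<8K$ are found by trial division over an interval of size $n^{O(t)}$, in $n^{O(t)}$ time.
\item Corollary~\ref{cor:residual-degree} enumerates each neighborhood in $n^{O(t)}$ time. Each neighbor's syndrome $\nu_n$ is computable in polynomial time by Lemma~\ref{lem:single-pid-partition}, and Lehmer ranks are computable in polynomial time.
\item The radius-two neighborhood has size at most $K^2=n^{O(t)}$. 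Each fingerprint evaluation costs $\mathrm{poly}(q,\ell,K)=n^{O(t)}$ by Lemma~\ref{lem:local-fingerprint}. The total time for $\kappa_{n,t}$ is therefore $n^{O(t)}$.
\end{itemize}

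\textbf{Decoding.} Given the received descendant $\tau$ (possibly after fewer than $t$ deletions) and the syndrome $\kappa_{n,t}(\pi)$, the decoder proceeds as follows.
\begin{enumerate}[leftmargin=*,label=(\roman*)]
\item If $\tau$ has length $n-s$ with $s<t$, delete $t-s$ further symbols arbitrarily to obtain an exact-$t$ descendant; this is justified by Lemma~\ref{lem:std-transitive}.
\item Enumerate all PID parents of that descendant by Lemma~\ref{lem:parent-enumeration}, at most $t!\binom{n}{t}^2$ of them.
\item Compute $\kappa_{n,t}$ of each parent and output the unique parent matching the given syndrome. Uniqueness follows from the correctness argument above.
\end{enumerate}
This takes $n^{O(t)}\cdot n^{O(t)}=n^{O(t)}$ time.
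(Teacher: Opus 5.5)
Your proposal is correct and follows the paper's proof essentially step for step. You use the same ingredients: properness is preserved through both fingerprint rounds, giving the exact-$t$ and then up-to-$t$ correction; $q_1<8K_{n,t}$ gives the size bound; radius-two residual neighborhoods give the complexity; and parent enumeration with syndrome matching gives decoding. Your only deviation is in decoding a shorter descendant: you first delete $t-s$ further symbols and then enumerate exact-$t$ parents, whereas the paper enumerates the $s$-PID parents directly. Both variants are valid, and your uniqueness argument covers the padded version.
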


\begin{proof}
For each syndrome vector $v$, each fingerprinting operation preserves properness, ensuring that $c_{2,v}$ is a proper coloring of $G_{n,t}(v)$. Consequently, two distinct permutations with the same value of $\kappa_{n,t}$ cannot share an exact-$t$ PID descendant. By Lemma~\ref{lem:std-transitive}, their up-to-$t$ PID descendant sets are also mutually disjoint.

The size bound follows from $q_1<8K_{n,t}$ and Corollary~\ref{cor:residual-degree}. Regarding computational complexity, evaluating the first fingerprint at $\pi$ requires only the residual neighbors of $\pi$. The second fingerprint requires the first fingerprints of these neighbors, which restricts the search to radius-two residual neighborhoods. By Corollary~\ref{cor:residual-degree}, these neighborhoods can be enumerated in $n^{O(t)}$ time. Computing Lehmer ranks, evaluating the single-PID syndrome, performing finite-field operations, and conducting deterministic primality tests are all polynomial-time tasks. Given a descendant sequence $x\in\mathcal S_{n-s}$ for $s\leq t$ along with its syndrome value, we enumerate all $s$-PID parents of $x$ using Lemma~\ref{lem:parent-enumeration}, compute $\kappa_{n,t}$ for each candidate, and select the unique matching permutation. The number of candidate parents is at most $\binom ns^2s!=n^{O(t)}$, and uniqueness is guaranteed by the code property.
\end{proof}

\subsection{An Algebraic Outer Code for PID}
\label{subsec:pid-outer}

Because any permutation contains each alphabet symbol exactly once, inner syndromes cannot be appended as traditional parity-check symbols. Instead, we embed the syndrome into the positions occupied by distinct redundancy markers.

\subsubsection{Canonical labeled-gap words}

Fix $r\geq2t$. Let $D$ denote a data symbol placeholder, and let $1,\ldots,r$ be distinct marker labels. For a vector $z=(z_1,\ldots,z_r)\in\{0,\ldots,n\}^r$, the word $W_n(z)$ is formed by iterating $a$ from $0$ to $n$. At each step $a$, we place all markers $j$ satisfying $z_j=a$ in increasing order of their labels. Whenever $a<n$, this is followed by a single copy of $D$. Under this convention, $z_j$ indicates the number of data placeholders that precede marker $j$.

For a message $\mu=(\mu_1,\ldots,\mu_n)\in\mathcal S_n$, let $\Int_z(\mu)\in\mathcal S_{n+r}$ be the permutation obtained by substituting the $i$-th copy of $D$ in $W_n(z)$ with $\mu_i$ and replacing marker $j$ with the symbol $n+j$. Deleting the fixed redundancy symbols $n+1,\ldots,n+r$ recovers $\mu$.

\begin{example}\label{ex:marker}
Let $t=2$, $n=4$, $r=4$, and $z=(0,2,2,4)\in\{0,\ldots,4\}^4$. Then $W_4(z)=(1,D,D,2,3,D,D,4)$. In this word, marker $1$ has no preceding data placeholder, markers $2$ and $3$ each have two preceding placeholders, and marker $4$ has four preceding placeholders. For $\mu=(3,1,4,2)\in\mathcal S_4$, replacing the four copies of $D$ in order by the elements of $\mu$, and each marker $j$ by $4+j$, yields $\mathrm{Int}_z(\mu) =(5,3,1,6,7,4,2,8)\in\mathcal S_8$. Deleting the marker symbols $5,6,7,8$ recovers $(3,1,4,2)=\mu$.
\end{example}

\subsubsection{The projected PID outer channel}

Let $N \triangleq n+r$. For $0\leq d\leq t$ and $y\in\mathcal S_{N-d}$, we define the coordinate projection
\begin{equation}
 \Theta_n(y)_i \triangleq
 \begin{cases}
 D, & y_i\leq n,\\
 y_i-n, & y_i>n.
 \end{cases}
 \label{eq:theta-projection}
\end{equation}
We now characterize the set of words that can arise through this projection. For a fixed vector $z$, let $\delta$ and $e$ be non-negative integers satisfying $\delta+e=d\leq t$, where $\delta$ represents the number of deleted data symbols and $e$ represents the number of deleted markers. Choose a subset $E\subseteq[r]$ of size $e$ to be deleted, along with any $\delta$ copies of $D$, from $W_n(z)$. Let the surviving markers be indexed as $[r]\setminus E=\{s_1<s_2<\cdots<s_{r-e}\}$. We convert the first $\delta$ markers $s_1,\ldots,s_\delta$ into copies of $D$, and relabel $s_{\delta+k}$ as $k$ for each $k\in[r-d]$. The assumption $r\geq2t$ guarantees $r-e\geq\delta$. We denote by $\cT_{\leq t}(z)$ the set of all words that can be produced in this manner.

\begin{lemma}[PID Projection]
\label{lem:pid-projection}
Let $\mu\in\mathcal S_n$, let $y$ be obtained from $\Int_z(\mu)$ after $d\leq t$ PIDs, and let $\delta$ and $e$ denote the numbers of deleted data and marker symbols, respectively. Then
\begin{equation*}
       \Theta_n(y)\in\cT_{\leq t}(z).
\end{equation*}
In particular, the projected outer sequence depends on $\mu$ solely through the locations of the deleted data positions, and is completely invariant to the values of the surviving data symbols.
\end{lemma}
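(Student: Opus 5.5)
Since $\operatorname{del}^{\PID}_I(x)=\std(x_{[m]\setminus I})$, I would first reduce the $d$ successive PIDs to a single one. By Lemma~\ref{lem:std-transitive}, applying $d$ PIDs to $x\triangleq\Int_z(\mu)$ gives $y=\std(x_J)$, where $J\subseteq[N]$ is the set of surviving coordinates and $|J|=N-d$. Let $\delta$ and $e$ be the numbers of deleted data and marker coordinates, and let $E\subseteq[r]$ be the labels of the deleted markers. The surviving subsequence $x_J$ then contains $n-\delta$ data values from $[n]$ and the marker values $n+s_1<\cdots<n+s_{r-e}$, where $[r]\setminus E=\{s_1<\cdots<s_{r-e}\}$. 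The task is to find which coordinates of $y$ have value at most $n$ and which exceed $n$, because this is all that $\Theta_n$ records.

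The key step is a rank computation. Standardization preserves relative order, so the data values of $x_J$ occupy exactly the smallest $n-\delta$ ranks, that is, the values $1,\ldots,n-\delta$ in $y$, in some order depending on $\mu$ and on which data were deleted. The surviving marker $n+s_k$ gets rank $n-\delta+k$ for $k\in[r-e]$. I would then compare with the threshold $n$. The rank $n-\delta+k$ is at most $n$ exactly when $k\le\delta$, and in that case $\Theta_n$ maps it to $D$. For $k=\delta+j$ with $j\ge1$ the rank exceeds $n$, and $\Theta_n$ maps it to $(n-\delta+\delta+j)-n=j$. All data coordinates also become $D$.

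Next I would check that this agrees with the description of $\cT_{\le t}(z)$. Because $J$ keeps the coordinate order, $\Theta_n(y)$ is obtained from $W_n(z)$ in three moves: delete the $\delta$ copies of $D$ at the deleted data coordinates; delete the markers in $E$; turn $s_1,\ldots,s_\delta$ into $D$ and relabel $s_{\delta+k}$ as $k$. This is exactly one of the words allowed in the definition. The assumption $r\ge 2t$ gives $r-e\ge\delta$, so the conversion step is well defined. Hence $\Theta_n(y)\in\cT_{\le t}(z)$.

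The ``in particular'' claim follows from the same computation. The word is determined by $z$, $E$, and the set of deleted data coordinates, and the values of the surviving data never enter it. The only delicate point I expect is the bookkeeping that ranks $n-\delta+1,\ldots,n$ go to the smallest surviving markers. This depends on every data value lying below every marker value, which holds because data values are in $[n]$ and marker values are in $\mathcal R_{n,r}=[n+1,n+r]$.
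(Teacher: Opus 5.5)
Your proof is correct and follows the paper's argument essentially step for step. Every data value lies below every marker value, so standardization sends the surviving data to $[n-\delta]$ and the $k$-th surviving marker to $n-\delta+k$; hence $\Theta_n$ masks exactly $s_1,\ldots,s_\delta$ and relabels $s_{\delta+k}$ as $k$, which is the defining operation of $\cT_{\le t}(z)$ (your appeal to Lemma~\ref{lem:std-transitive} is harmless but unnecessary, since $\operatorname{del}^{\PID}_I$ already standardizes the whole surviving set once).
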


\begin{proof}
Let $s_1<\cdots<s_{r-e}$ denote the labels of the surviving markers. All surviving data symbols have smaller values than all surviving markers. Following standardization, the surviving data symbols take values in $[n-\delta]$, while the marker $s_k$ is assigned the value $n-\delta+k$. Whenever $k\leq\delta$, this value is at most $n$, so the projection in \eqref{eq:theta-projection} converts the marker into $D$. When $k>\delta$, the value equals $n+(k-\delta)$, which projects to the label $k-\delta$. Deleting the corresponding data positions from the placeholders of $W_n(z)$ reproduces this exact operation.
\end{proof}

\begin{example}\label{ex:projection}
Continuing Example~\ref{ex:marker} with $t=2, n=r=4, z=(0,2,2,4)$, and $\mu=(3,1,4,2)$, we have $W_4(z)=(1,D,D,2,3,D,D,4)$ and $\Int_z(\mu)=(5,3,1,6,7,4,2,8)$. Suppose we delete the second data symbol $\mu_2=1$ and marker $2$, whose symbol value is $n+2=6$. Then $\delta=e=1$ and $d=2$. After standardization, the resulting sequence is $y=\std(5,3,7,4,2,8)=(4,2,5,3,1,6)$. Applying the projection operator yields
\begin{equation*}
 \Theta_4(y)=(D,D,1,D,D,2).
\end{equation*}

The identical sequence is obtained directly from $W_4(z)$. Deleting its second copy of $D$ and marker $2$ leaves $(1,D,3,D,D,4)$. The surviving markers are $1,3,4$. Since $\delta=1$, the first surviving marker becomes $D$, while markers $3$ and $4$ are relabeled as $1$ and $2$, respectively. This again results in $\Theta_4(y)=(D,D,1,D,D,2)\in\mathcal T_{\leq 2}(z)$. Thus, the projected output is determined entirely by $z$ and the positions of the deletions, independent of the actual data values.
\end{example}

\subsubsection{Moment checks and constant-list reduction}

For each $\ell\in\{0,\ldots,t-1\}$, we define the moments
\begin{equation*}
      h_\ell(z) \triangleq \sum_{j=1}^r j^\ell z_j,\qquad H_{r,t}(z) \triangleq (h_0(z),\ldots,h_{t-1}(z))
      \in\mathbb Z_{\geq0}^t.
\end{equation*}
For any moment vector $h\in\mathbb Z_{\geq0}^t$, let
\begin{equation*}
      \cZ_h(n,r,t) \triangleq \{z\in\{0,\ldots,n\}^r : H_{r,t}(z)=h\}.
\end{equation*}
Although standardization conceals the labels of certain surviving markers, the moment vector restricts the number of compatible parents to a constant that depends only on $r$ and $t$.

\begin{lemma}[Constant Parent List in a Moment Class]
\label{lem:outer-parent-list}
Fix $h$ and an outer word $w$ of length $n+r-d$ with $0\leq d\leq t$. Then
\begin{equation*}
    \bigl|\{z\in\cZ_h(n,r,t) : w\in\cT_{\leq t}(z)\}\bigr|\leq P_{r,t},
\end{equation*}
where
\begin{equation}
 P_{r,t} \triangleq \max_{0\leq d\leq t}
 \sum_{\delta=0}^{d}
 \binom r{d-\delta}(2\delta+1)^{r-d}.
 \label{eq:Prt}
\end{equation}
In particular, $P_{r,t}$ is strictly independent of $n$.
\end{lemma}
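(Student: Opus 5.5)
Fix $w$ and $h$. The length of $w$ equals $n+r-d$, so $d$ is determined by $w$, and it suffices to show that for this $d$ the number of compatible $z$ is at most $\sum_{\delta=0}^{d}\binom{r}{d-\delta}(2\delta+1)^{r-d}$. The plan is a ``guess the error pattern, then solve for the rest'' argument. First enumerate the number $\delta\in\{0,\ldots,d\}$ of deleted data symbols, which fixes $e=d-\delta$. Then enumerate the set $E\subseteq[r]$ of deleted markers, which gives $\binom{r}{d-\delta}$ choices. Once $E$ is fixed, the surviving labels $s_1<\cdots<s_{r-e}$ are known. By the definition of $\cT_{\le t}(z)$, the marker labelled $k$ in $w$ is exactly $s_{\delta+k}$ for $k\in[r-d]$, so the identities of all non-converted surviving markers are determined.

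Next, bound the number of possible values of each known coordinate $z_{s_{\delta+k}}$. Let $c_k$ be the number of $D$'s preceding label $k$ in $w$. Compared with $W_n(z)$, the $D$'s preceding marker $s_{\delta+k}$ in $w$ are its original preceding data placeholders, minus the deleted data placeholders among them, plus the converted markers $s_1,\ldots,s_\delta$ that precede it. Both corrections lie in $[0,\delta]$, so
\begin{equation*}
z_{s_{\delta+k}}=c_k+(\#\text{deleted $D$ before it})-(\#\text{converted markers before it})\in[c_k-\delta,\,c_k+\delta].
\end{equation*}
This gives at most $2\delta+1$ choices for each of the $r-d$ known coordinates, and hence at most $(2\delta+1)^{r-d}$ joint choices.

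The remaining unknowns are the coordinates $z_j$ with $j\in U\triangleq E\cup\{s_1,\ldots,s_\delta\}$, and $|U|=e+\delta=d\le t$. Once all the other coordinates have been guessed, the moment constraint $H_{r,t}(z)=h$ becomes the linear system
\begin{equation*}
\sum_{j\in U}j^\ell z_j=h_\ell-\sum_{j\notin U}j^\ell z_j,\qquad \ell=0,\ldots,t-1.
\end{equation*}
We keep only the first $d$ equations, $\ell=0,\ldots,d-1$. Their coefficient matrix $(j^\ell)_{\ell<d,\,j\in U}$ is a $d\times d$ Vandermonde matrix on distinct labels, so it is invertible over $\mathbb Q$. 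Hence each guess admits at most one solution for $(z_j)_{j\in U}$. The case $d=0$ is trivial, since then $z$ is read off directly from $w$.

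Multiplying the counts and summing over $\delta$ yields the bound for the given $d$. Taking the maximum over $d\le t$ gives $P_{r,t}$, which involves only $r$, $t$ and binomial coefficients, so it is independent of $n$. Every $z$ in the set is counted by the guess consisting of its true error pattern $(\delta,E)$ together with its true values on $[r]\setminus U$, so the enumeration is exhaustive.

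The main obstacle is the offset bookkeeping in the second step. One has to verify carefully, via the construction of $\cT_{\le t}(z)$, that converted markers are turned into $D$'s in place, and that the placement of deleted data relative to the markers can shift $c_k$ only within $[-\delta,\delta]$. This holds regardless of how markers sharing the same $z$-value are ordered.
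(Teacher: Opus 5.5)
Your proposal is correct and follows essentially the same argument as the paper. You fix $\delta$ and the deleted marker set $E$, which gives the factor $\binom{r}{d-\delta}$; you pin each visible coordinate $z_{s_{\delta+k}}$ to within $\delta$ of the observed count of preceding $D$'s, which gives $(2\delta+1)^{r-d}$ (your offset identity is exactly the paper's relation $q_k=z_j-a_j+b_j$ with $0\le a_j,b_j\le\delta$); and you recover the $d$ coordinates indexed by $E\cup\{s_1,\ldots,s_\delta\}$ uniquely from the first $d$ moment equations, whose coefficient matrix is a nonsingular Vandermonde matrix.
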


\begin{proof}
The length of the output word uniquely determines $d$. Fix a choice of $\delta$, let $e=d-\delta$, and fix the deleted marker set $E\subseteq[r]$ of size $e$. We partition the surviving markers as
\begin{equation*}
 [r]\setminus E=\{s_1<\cdots<s_{r-e}\},\quad L=\{s_1,\ldots,s_\delta\},\quad
 G=([r]\setminus E)\setminus L.
\end{equation*}
The markers in $L$ are precisely those transformed into $D$, while those in $G$ remain visible. For each $k\in[r-d]$, the visible output marker $k$ corresponds to the original marker $j=s_{\delta+k}$. Let $q_k$ denote the number of placeholders $D$ preceding marker $k$ in $w$.

Suppose $z$ is consistent with these choices. Let $a_j$ be the number of deleted data placeholders preceding marker $j$, and let $b_j$ be the number of markers in $L$ preceding $j$ in the sequence. Then $q_k=z_j-a_j+b_j$, where $0\leq a_j,b_j\leq\delta$. This implies $|z_j-q_k|\leq\delta$. Consequently, each coordinate $z_j$ for $j\in G$ admits at most $2\delta+1$ possibilities, which bounds the number of choices for $z_G$ by $(2\delta+1)^{r-d}$.

It remains to determine the coordinates indexed by $U \triangleq E\cup L$, where $|U|=d\leq t$. For any fixed choice of $z_G$, the moment constraint $H_{r,t}(z)=h$ gives
\begin{equation*}
  \sum_{j\in U}j^\ell z_j
  =h_\ell-\sum_{j\in G}j^\ell z_j,
  \qquad \ell=0,\ldots,t-1.
\end{equation*}
When $d=0$, all coordinates are known. When $d\geq1$, the submatrix formed by the first $d$ equations and the columns in $U$ constitutes a $d\times d$ Vandermonde matrix. Its determinant is nonzero because all indices in $U$ are distinct. Therefore, the coordinates $z_U$ are uniquely determined if a solution exists. Summing over the $\binom r{d-\delta}$ choices for $E$ and all valid values of $\delta$ yields the bound in \eqref{eq:Prt}.
\end{proof}

\begin{example}
We continue with $n=r=4, t=2$, and $z=(0,2,2,4)$. The moment vector is $H_{4,2}(z)=(8,26)$, and the deletion pattern $\delta=e=1$ with $E=\{2\}$ yields the output $w=(D,D,1,D,D,2)\in\mathcal T_{\leq2}(z)$.

In this configuration, we have $L=\{1\}, G=\{3,4\}$, and $U=\{1,2\}$. The two visible markers are preceded by $q_1=2$ and $q_2=4$ copies of $D$, so $|z_3-2|\leq1$ and $|z_4-4|\leq1$. Fixing $z_3$ and $z_4$ leads to the linear system
\begin{equation*}
\begin{pmatrix}
1&1\\
1&2
\end{pmatrix}
\begin{pmatrix}
z_1\\z_2
\end{pmatrix}
=
\begin{pmatrix}
8-z_3-z_4\\
26-3z_3-4z_4
\end{pmatrix}.
\end{equation*}
Because the matrix is nonsingular, each selection of the visible coordinates has at most one completion. Setting $(z_3,z_4)=(2,4)$ uniquely yields $(z_1,z_2)=(0,2)$.

Note that completions need not be unique across different deletion patterns. For instance, the vector $\tilde z=(0,1,4,3)$ also satisfies $H_{4,2}(\tilde z)=(8,26)$, and deleting its fourth copy of $D$ along with marker $3$ generates the identical word $w$. The moment checks reduce candidates to a list of constant size, and the subsequent graph coloring guarantees unique identification.
\end{example}

\begin{lemma}[Output Bound]
\label{lem:outer-output-count}
For every vector $z\in\{0,\ldots,n\}^r$,
\begin{equation}
 |\cT_{\leq t}(z)|\leq 
 \sum_{d=0}^{t}\sum_{\delta=0}^{d}
 \binom r{d-\delta}\binom{\delta+r}{r} \triangleq A_{r,t}.
 \label{eq:Art}
\end{equation}
\end{lemma}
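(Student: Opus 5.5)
We bound $|\cT_{\leq t}(z)|$ by counting the deletion patterns in the definition of $\cT_{\leq t}(z)$ modulo the fact that many patterns give the same output. Fix $d\in\{0,\ldots,t\}$ and a split $\delta+e=d$. Each output word is determined by two pieces of data: which markers are deleted, and which copies of $D$ are deleted. First choose the deleted marker set $E\subseteq[r]$ with $|E|=e=d-\delta$; there are $\binom r{d-\delta}$ choices. Once $E$ and $\delta$ are fixed, the relabeling step (turning the first $\delta$ surviving markers into $D$ and renumbering the rest as $1,\ldots,r-d$) is deterministic. So the output depends only on which $\delta$ copies of $D$ are removed.

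The key observation is that deleting copies of $D$ from $W_n(z)$ depends only on how many deleted copies fall into each gap between consecutive surviving markers. All copies of $D$ within one maximal run between markers are interchangeable. After removing the $e$ deleted markers, the surviving markers in their order in the word (ties broken by label) cut the $D$'s into $r-e+1$ gaps. A deletion of $\delta$ data symbols is thus described, as far as the resulting word is concerned, by a vector $(g_0,\ldots,g_{r-e})$ of non-negative integers summing to $\delta$; some vectors may be infeasible if a gap is too short, which only helps the upper bound. The number of such vectors is $\binom{\delta+r-e}{r-e}$.

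We then need $\binom{\delta+r-e}{r-e}\le\binom{\delta+r}{r}$. This holds because $\binom{\delta+k}{k}$ is nondecreasing in $k$; it is the number of multisets of size $\delta$ from $k+1$ types. Summing over $E$, $\delta$ and $d$ gives $|\cT_{\leq t}(z)|\le\sum_{d}\sum_{\delta}\binom r{d-\delta}\binom{\delta+r}{r}=A_{r,t}$. The union over $d$ is harmless: distinct $d$ yield words of different lengths, and even without this, a union bound suffices.

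The only step needing care is the claim that the word after data deletions is determined by the gap-count vector. I would argue this directly from the definition of $W_n(z)$ as an interleaving of markers and $D$'s. Positions of $D$ within a gap are indistinguishable symbols, so removing any $g_i$ of them from gap $i$ yields the same word. The removed marker set $E$ is already fixed, so the remaining structure is fixed too. No algebra is needed beyond this standard multiset count.
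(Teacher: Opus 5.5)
Your proof is correct and follows essentially the same route as the paper: fix $d$, $\delta$, and the deleted marker set $E$, note that the output word depends only on how many copies of $D$ are removed from each run, and count these distributions as weak compositions. The only difference is that you count the $r-e+1$ gaps between surviving markers and then relax $\binom{\delta+r-e}{r-e}\le\binom{\delta+r}{r}$, whereas the paper counts the at most $r+1$ runs of $D$ in $W_n(z)$ directly; both give the same final bound $A_{r,t}$.
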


\begin{proof}
Fix $d$, $\delta$, and a deleted marker set $E$. The placeholders $D$ in $W_n(z)$ form at most $r+1$ runs. Once the deleted markers are specified, the resulting word is determined entirely by how the $\delta$ deletions of $D$ are distributed among these runs. The number of such distributions corresponds to the number of weak compositions, which is at most $\binom{\delta+r}{r}$. All subsequent masking and relabeling steps are deterministic. Summing over all valid parameters establishes \eqref{eq:Art}.
\end{proof}

For each moment vector $h$, define the residual outer graph $\cG_h(n,r,t)$ on the vertex set $\cZ_h(n,r,t)$, where distinct vectors $z,z'$ are adjacent if and only if $\cT_{\leq t}(z)\cap\cT_{\leq t}(z')\neq\varnothing$. Lemmas~\ref{lem:outer-parent-list} and~\ref{lem:outer-output-count} imply that the maximum degree satisfies
\begin{equation}
  \Delta(\cG_h(n,r,t)) \leq A_{r,t}(P_{r,t}-1).
 \label{eq:outer-degree}
\end{equation}
Let $B_{r,t} \triangleq A_{r,t}(P_{r,t}-1)+1$. We order $\cZ_h(n,r,t)$ lexicographically and greedily assign each vertex the smallest color in $[B_{r,t}]$ that does not appear among its previously colored neighbors. We denote this canonical coloring by $\chi_h$. The complete outer index map is then defined as
\begin{equation}
   \Omega_{n,r,t}(z) \triangleq
   \bigl(H_{r,t}(z),\chi_{H_{r,t}(z)}(z)\bigr).
 \label{eq:outer-index}
\end{equation}

\begin{theorem}[Algebraic PID Outer Partition]
\label{thm:pid-outer-partition}
Let $r\geq2t$. Every syndrome class of $\Omega_{n,r,t}$ corrects the projected outer channel $\cT_{\leq t}$. Furthermore, if
\begin{equation*}
 S_\ell(r) \triangleq \sum_{j=1}^r j^\ell,
 \qquad
 Q_{r,t} \triangleq \prod_{\ell=0}^{t-1}(S_\ell(r)+1),
\end{equation*}
then $|\operatorname{im}\Omega_{n,r,t}|\leq B_{r,t}Q_{r,t}(n+1)^t$. Consequently, there exists a syndrome class $\cA_{n,r,t}$ with size satisfying
\begin{equation}
 |\cA_{n,r,t}|
 \geq\frac{(n+1)^{r-t}}{B_{r,t}Q_{r,t}}
 =\Omega_{r,t}(n^{r-t}).
 \label{eq:outer-syndrome class-size}
\end{equation}
For fixed parameters $r$ and $t$, the partition, the lexicographically first largest syndrome class, and the ranking and unranking algorithms within that class can all be computed in time polynomial in $n$.
\end{theorem}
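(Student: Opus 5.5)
The proof splits into three parts: correctness of each class, the counting bound, and the algorithmic claims. I will treat them in that order.

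\emph{Correctness.} Take two distinct vectors $z\neq z'$ with $\Omega_{n,r,t}(z)=\Omega_{n,r,t}(z')$. They have the same moment vector $h$, so both lie in $\cZ_h(n,r,t)$, and they receive the same greedy color $\chi_h$. First I will check that the greedy procedure always finds a color in $[B_{r,t}]$. By \eqref{eq:outer-degree}, every vertex has at most $A_{r,t}(P_{r,t}-1)$ neighbours in $\cG_h(n,r,t)$, so at most $B_{r,t}-1$ colors are ever forbidden. Hence $\chi_h$ is a well-defined proper coloring. Since $z$ and $z'$ share a color, they are nonadjacent, which means $\cT_{\le t}(z)\cap\cT_{\le t}(z')=\varnothing$. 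This is exactly the statement that the class corrects the projected outer channel.

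I will also verify the degree bound \eqref{eq:outer-degree} explicitly. A neighbour $z'$ of $z$ shares some output word $w\in\cT_{\le t}(z)$, and there are at most $A_{r,t}$ such words by Lemma~\ref{lem:outer-output-count}. For each $w$, Lemma~\ref{lem:outer-parent-list} gives at most $P_{r,t}$ vectors in $\cZ_h$, and one of these is $z$ itself. This yields at most $A_{r,t}(P_{r,t}-1)$ neighbours.

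\emph{Counting.} Since $0\le z_j\le n$, each moment satisfies $0\le h_\ell(z)\le nS_\ell(r)$, so $h_\ell$ takes at most $nS_\ell(r)+1\le (n+1)(S_\ell(r)+1)$ values. Taking the product over $\ell=0,\dots,t-1$ bounds the number of moment vectors by $Q_{r,t}(n+1)^t$. Multiplying by the $B_{r,t}$ colors gives the bound on $|\operatorname{im}\Omega_{n,r,t}|$. The domain $\{0,\dots,n\}^r$ has $(n+1)^r$ elements, so by pigeonhole some class has size at least $(n+1)^{r-t}/(B_{r,t}Q_{r,t})$, which is \eqref{eq:outer-syndrome class-size}.

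\emph{Algorithms.} This is the main obstacle, because the classes $\cZ_h$ can have size $\Theta(n^{r-t})$, and a full greedy coloring of them is only polynomial for fixed $r$ and $t$. Since $r,t$ are fixed, that is acceptable. The plan is as follows.
\begin{enumerate}
\item Enumerate all $(n+1)^r$ vectors in lexicographic order.
\item For each vector $z$, enumerate $\cT_{\le t}(z)$ directly. This takes $O_{r,t}(1)$ patterns, and each word is produced in $O(n)$ time.
\item Find the neighbours of $z$ within $\cZ_h$. For each word $w$, I will reconstruct the candidate parents exactly as in the proof of Lemma~\ref{lem:outer-parent-list}: choose $\delta$ and $E$, try the $(2\delta+1)^{r-d}$ choices of $z_G$, and solve the Vandermonde system for $z_U$. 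Each candidate is kept only if it is integral, lies in $\{0,\dots,n\}^r$, and actually produces $w$.
\item Run the greedy coloring in lexicographic order. This costs $n^{r}\cdot\mathrm{poly}(n)$ in total.
\item Tabulate the class sizes, pick the lexicographically first largest class, and store its sorted list of members. Ranking and unranking then reduce to binary search and indexing in a table of polynomial size.
\end{enumerate}
All of these steps are polynomial in $n$ for fixed $r$ and $t$, which completes the plan.
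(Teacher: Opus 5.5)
Your proposal is correct and follows essentially the same route as the paper. The matching steps are: equal moment vector and equal greedy color imply nonadjacency in $\cG_h(n,r,t)$; the moment-range product bound is multiplied by $B_{r,t}$ colors and combined with pigeonhole; and a lexicographic greedy coloring is run over all $(n+1)^r$ vectors. The one minor variation is how you find residual neighbours: you invert each output word via the Vandermonde reconstruction from Lemma~\ref{lem:outer-parent-list}, whereas the paper hashes every output word together with its moment vector. Both are polynomial for fixed $r,t$, but in your version you should also check the remaining $t-d$ moment equations, i.e.\ membership in $\cZ_h(n,r,t)$.
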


\begin{proof}
If two distinct vectors $z,z'$ have identical values of $\Omega_{n,r,t}$, they share the same moment vector and receive the same proper color in the residual graph. Thus, they cannot be adjacent in $\cG_{H_{r,t}(z)}(n,r,t)$, which guarantees that their outer output sets are disjoint. Each moment coordinate $h_\ell(z)$ ranges from $0$ to $nS_\ell(r)$, which implies
\begin{equation*}
 |\operatorname{im}(H_{r,t})|
 \leq\prod_{\ell=0}^{t-1}(nS_\ell(r)+1)
 \leq Q_{r,t}(n+1)^t.
\end{equation*}
Multiplying by the $B_{r,t}$ residual colors gives the bound on $|\operatorname{im}(\Omega_{n,r,t})|$. Because the domain contains $(n+1)^r$ vectors, the pigeonhole principle guarantees that the largest syndrome class meets the average size bound in \eqref{eq:outer-syndrome class-size}.

For the algorithmic complexity, we generate all $(n+1)^r$ vectors in lexicographic order. For each vector, we compute its moment vector and enumerate its at most $A_{r,t}$ outer outputs. By hashing each output together with its moment vector, all residual neighbors are identified, after which the canonical colors are assigned. We then select the lexicographically first syndrome value that achieves the maximal class size. Because $r$ and $t$ are constants, the total time and storage complexities are polynomial in $n$.
\end{proof}

The following proposition provides a matching upper bound, showing that the scaling order $\Omega_{r,t}(n^{r-t})$ in \eqref{eq:outer-syndrome class-size} is order-optimal.

\begin{proposition}[Optimal Outer Exponent]
\label{prop:optimal-outer-exponent}
Let $\cA\subseteq\{0,\ldots,n\}^r$ be a code such that the output sets $\cT_{\leq t}(z)$ are mutually disjoint for all $z\in\cA$. Then
\begin{equation*}
       |\cA|\leq\frac{(n+r-t)!}{n!}=O_{r,t}(n^{r-t}).
\end{equation*}
\end{proposition}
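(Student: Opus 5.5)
Since the output sets $\cT_{\leq t}(z)$, $z\in\cA$, are pairwise disjoint, the plan is a packing argument at the exact-$t$ level. We restrict to the particular deletion pattern that deletes exactly $t$ markers and no data placeholders, i.e.\ $\delta=0$, $e=t$, $d=t$. In this case no surviving marker is converted into $D$; the surviving markers are simply relabeled $1,\ldots,r-t$ in the order of their labels. The corresponding outputs are words of length $n+r-t$ with exactly $n$ copies of $D$ and one occurrence of each label $k\in[r-t]$. Let $\mathcal W$ be the set of such words in which the visible labels appear in the canonical order of $W_n(\cdot)$, namely nondecreasing gap index with ties broken by increasing label. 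It suffices to exhibit, for each $z\in\cA$, an injective assignment into $\mathcal W$, i.e.\ one output per codeword, with disjointness coming from the hypothesis.

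For each $z\in\cA$, choose the deleted marker set $E=\{r-t+1,\ldots,r\}$, so that the surviving markers $1,\ldots,r-t$ keep their labels. The resulting word $w(z)\in\cT_{\leq t}(z)$ is $W_n(z_1,\ldots,z_{r-t})$ with $n$ placeholders. Since the sets $\cT_{\leq t}(z)$ are pairwise disjoint, the map $z\mapsto w(z)$ is injective on $\cA$. Its image lies among words with $n$ copies of $D$ and $r-t$ distinct labels, and the number of such arrangements is
\begin{equation*}
\binom{n+r-t}{r-t}(r-t)!=\frac{(n+r-t)!}{n!}.
\end{equation*}
Hence $|\cA|\le (n+r-t)!/n!=\prod_{j=1}^{r-t}(n+j)=O_{r,t}(n^{r-t})$, which is the claim.

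The main point to check is that this particular output lies in $\cT_{\leq t}(z)$ under the paper's construction. With $\delta=0$, $e=t$, the set $E$ is an arbitrary subset of size $e$ and no $D$ is deleted, so this is immediate; the constraint $r\ge 2t$ is not even needed here. A sharper count using only the canonical ordering would give $\binom{n+r-t}{r-t}$. The stated, weaker bound follows without needing any ordering argument.
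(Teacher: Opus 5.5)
Your argument is correct and is essentially the paper's proof: the paper deletes the first $t$ markers and relabels the rest, whereas you delete the last $t$ so that the output is literally $W_n(z_1,\ldots,z_{r-t})$, and this difference is immaterial. One correction to your closing aside: words with $n$ copies of $D$ and labels $1,\ldots,r-t$ in canonical order are in bijection with $\{0,\ldots,n\}^{r-t}$, so the ordering argument sharpens the bound to $(n+1)^{r-t}$, not $\binom{n+r-t}{r-t}$ (already for $n=1$, $r-t=2$ there are four such words, not three).
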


\begin{proof}
Consider the deletion pattern that removes the first $t$ markers $\{1,\ldots,t\}$ and no data placeholders from each $W_n(z)$. The resulting word contains $n$ copies of $D$ and $r-t$ distinct relabeled markers. This mapping must be injective on $\cA$, and the total number of such sequences is $(n+r-t)!/n!$.
\end{proof}

\subsection{The Full-Systematic Construction}
\label{subsec:full-systematic-composition}

Set $r \triangleq 7t-1$, $B_t \triangleq B_{7t-1,t}$, and $Q_t \triangleq Q_{7t-1,t}$. Let $\cA_{n,t} \triangleq \cA_{n,7t-1,t}$ denote the lexicographically first largest syndrome class obtained from Theorem~\ref{thm:pid-outer-partition}. By \eqref{eq:outer-syndrome class-size}, we have $|\cA_{n,t}|\geq\frac{(n+1)^{6t-1}}{B_tQ_t}$. In addition, Theorem~\ref{thm:inner-pid-syndrome} guarantees $M_{n,t}\leq C_tn^{6t-2}$. Hence, for all
\begin{equation}
 n\geq n_0(t) \triangleq
 \max\{t+1,\lceil C_tB_tQ_t\rceil\},
 \label{eq:n0}
\end{equation}
we have the inequality $|\cA_{n,t}|\geq M_{n,t}$. We express the inner syndrome as
\begin{equation*}
 \kappa_{n,t}(\mu)=(a_1,a_2,b_1,b_2)\in\mathbb Z_n^2\times\mathbb F_{q_1}^2
\end{equation*}
using the standard integer representatives of the field elements. This tuple is mapped to the integer index
\begin{equation}
 \operatorname{idx}(a_1,a_2,b_1,b_2) \triangleq ((a_1n+a_2)q_1+b_1)q_1+b_2
 \in\{0,\ldots,M_{n,t}-1\}.\label{eq:mixed-radix-index}
\end{equation}
Let $z^{(q)}$ denote the vector of rank $q$ in the lexicographically ordered syndrome class $\cA_{n,t}$.

\begin{construction}[Systematic PID Encoder]
\label{con:full-systematic-pid-encoder}
Given a message permutation $\mu\in\mathcal S_n$, the encoding procedure is executed in three steps.
\begin{enumerate}[label=\arabic*)]
\item Compute the inner syndrome $\kappa_{n,t}(\mu)$ and its integer index $q$ via \eqref{eq:mixed-radix-index}.
\item Determine the vector $z=z^{(q)}\in\cA_{n,t}$ by unranking the index $q$.
\item Output the codeword $\Enc_{n,t}^{\PID}(\mu) \triangleq \Int_z(\mu)\in\mathcal S_{n+7t-1}$.
\end{enumerate}
\end{construction}

The encoder is full-systematic by definition of $\Int_z$. The next lemma shows that a common sequence output implies a common pattern in the underlying message permutations.

\begin{lemma}[Common PID Output Induces Common Subsequence Pattern]
\label{lem:common-data-pattern}
Suppose that $\Int_z(\mu)$ and $\Int_{z'}(\nu)$ produce a common descendant $y\in\mathcal S_{n+r-d}$ after $d\leq t$ PIDs. Let $\delta_\mu$ and $\delta_\nu$ denote the numbers of deleted data symbols in each codeword, respectively. Then $\mu$ and $\nu$ share a common PID descendant of length $n-\max\{\delta_\mu,\delta_\nu\}\geq n-t$.
\end{lemma}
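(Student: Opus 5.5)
Let $x=\Int_z(\mu)$ and $x'=\Int_{z'}(\nu)$. We are given index sets $I,I'\subseteq[n+r]$ with $|I|=|I'|=d$ and
\begin{equation*}
\std(x_{[n+r]\setminus I})=y=\std(x'_{[n+r]\setminus I'}).
\end{equation*}
The plan is to read off, from the common word $y$, which positions carry surviving data symbols of each codeword. We then compare the data subsequences through the equal relative order inside $y$.

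First, as in the proof of Lemma~\ref{lem:pid-projection}, in the surviving subsequence of $x$ all $n-\delta_\mu$ data symbols are smaller than all surviving markers. Hence after standardization the surviving data of $\mu$ occupy exactly the positions $P_\mu\triangleq\{i: y_i\le n-\delta_\mu\}$ of $y$. These data keep their relative order, so $\std(y_{P_\mu})$ is the standardization of the surviving subsequence of $\mu$, and this lies in $\cD^{\PID}_{\delta_\mu}(\mu)$. In the same way $P_\nu\triangleq\{i:y_i\le n-\delta_\nu\}$ satisfies $\std(y_{P_\nu})\in\cD^{\PID}_{\delta_\nu}(\nu)$.

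Next, assume without loss of generality that $\delta_\mu\ge\delta_\nu$. Then $P_\mu\subseteq P_\nu$, since these are value thresholds on the same sequence $y$. We now apply Lemma~\ref{lem:std-transitive} to the sequence $y_{P_\nu}$ and the index subset corresponding to $P_\mu$. This gives
\begin{equation*}
\std\bigl(\std(y_{P_\nu})\big|_{P_\mu}\bigr)=\std(y_{P_\mu}).
\end{equation*}
Since $\std(y_{P_\nu})$ is a PID descendant of $\nu$, deleting its further $\delta_\mu-\delta_\nu$ entries yields $\std(y_{P_\mu})$. Composition of PIDs, which follows from Lemma~\ref{lem:std-transitive}, then gives $\std(y_{P_\mu})\in\cD^{\PID}_{\delta_\mu}(\nu)\cap\cD^{\PID}_{\delta_\mu}(\mu)$. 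This is a common PID descendant of length $n-\max\{\delta_\mu,\delta_\nu\}$, and the bound $\ge n-t$ follows from $\delta_\mu,\delta_\nu\le d\le t$.

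The main obstacle is the first step. One must check carefully that standardization places surviving data exactly at the smallest values. This holds because every marker value $n+j$ exceeds every data value, and standardization preserves order. A second point to check is that the identification of $P_\mu$ does not require knowing $\delta_\mu$ from $y$ alone. It does not: we only need that this set exists for each codeword separately, and nestedness comes for free from the thresholds.
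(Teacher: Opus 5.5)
Your proposal is correct and follows the paper's proof. As there, you assume $\delta_\mu\ge\delta_\nu$ and note that the surviving data of the two codewords occupy exactly the entries of $y$ with values in $[n-\delta_\mu]$ and $[n-\delta_\nu]$, respectively; restricting $y$ to the smaller value range and applying Lemma~\ref{lem:std-transitive} then gives a common PID descendant, and your explicit nested position sets $P_\mu\subseteq P_\nu$ and count of $\delta_\mu-\delta_\nu$ extra deletions spell out what the paper states in one line.
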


\begin{proof}
Without loss of generality, assume $\delta_\mu\geq\delta_\nu$, and define $a \triangleq n-\delta_\mu$ and $b \triangleq n-\delta_\nu$. In the standardized sequence obtained from the first codeword, the surviving data symbols correspond to the values in $[a]$. In the sequence from the second codeword, they correspond to the values in $[b]$, with $a\leq b$. Restricting $y$ to the entries with values in $[a]$ yields the standardized sequence of all surviving data symbols for the first word, and represents the deletion of an additional $b-a$ symbols from the second. By Lemma~\ref{lem:std-transitive}, this sequence in $\mathcal S_a$ is a PID descendant of both $\mu$ and $\nu$. The total number of data deletions is $n-a=\delta_\mu\leq d\leq t$.
\end{proof}

We now state the decoding procedure. During the construction of $\cA_{n,t}$, we maintain a dictionary that maps each sequence $w\in\bigcup_{z\in\cA_{n,t}}\cT_{\leq t}(z)$ to its unique parent vector $z$.

\begin{construction}[Systematic PID Decoder]
\label{con:full-systematic-pid-decoder}
Given a received sequence $y\in\mathcal S_{n+r-d}$ produced by at most $d\leq t$ PIDs, decoding is performed in three steps.
\begin{enumerate}[label=\arabic*)]
\item Compute $w=\Theta_n(y)$ and retrieve the unique vector $z\in\cA_{n,t}$ satisfying $w\in\cT_{\leq t}(z)$ from the outer dictionary.
\item Determine the rank $q$ of $z$ in $\cA_{n,t}$. If $q\geq M_{n,t}$, declare a decoding failure. Otherwise, invert \eqref{eq:mixed-radix-index} to recover the inner syndrome $\gamma\in\mathbb Z_n^2\times\mathbb F_{q_1}^2$.
\item For each $\delta\in\{0,\ldots,d\}$, extract from $y$ the subsequence $x_\delta\in\mathcal S_{n-\delta}$ consisting of all symbols with values at most $n-\delta$. Enumerate all $\delta$-PID parents $\mu'\in\mathcal S_n$ of $x_\delta$. Finally, output the unique candidate satisfying $\kappa_{n,t}(\mu')=\gamma$ and $y\in\cD_d^{\PID}(\Int_z(\mu'))$.
\end{enumerate}
\end{construction}

The final consistency check is implemented by inspecting the $\binom{n+r}{d}$ deletion choices. Thus, the decoder runs in $n^{O(t)}$ time.

\begin{theorem}[Full-Systematic $t$-PID Correcting Code]
\label{thm:main-systematic-pid}
For every fixed integer $t\geq1$ and all $n\geq n_0(t)$, the encoder $\Enc_{n,t}^{\PID}$ from $\mathcal S_n$ to $\mathcal S_{n+7t-1}$ specified in Construction~\ref{con:full-systematic-pid-encoder} is full-systematic, and its codebook corrects up to $t$ PIDs. The code family is uniformly constructible, encodable, and decodable in $n^{O(t)}$ time. The redundancy satisfies
\begin{equation*}
 \red(\Enc_{n,t}^{\PID}) = (7t-1)\log n+O_t(1).
\end{equation*}
\end{theorem}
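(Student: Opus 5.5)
The proof splits into three parts: systematicity and injectivity, correctness of decoding, and redundancy and complexity, carried out in that order.

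\emph{Systematicity and injectivity.} Systematicity is immediate. For every $z$, the definition of $\Int_z$ gives $\Int_z(\mu)|_{[n]}=\mu$. Because restriction to $[n]$ returns $\mu$, distinct messages give distinct codewords, so the encoder is injective. The encoder is well defined for $n\ge n_0(t)$ because \eqref{eq:n0} gives $|\cA_{n,t}|\ge M_{n,t}$, so every index $q\in\{0,\ldots,M_{n,t}-1\}$ from \eqref{eq:mixed-radix-index} can be unranked. Note that \eqref{eq:mixed-radix-index} is a bijection onto $\{0,\ldots,M_{n,t}-1\}$, since $M_{n,t}=n^2q_1^2$.

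\emph{Correctness.} Suppose $\Int_z(\mu)$ and $\Int_{z'}(\nu)$ share a descendant $y$ after $d\le t$ PIDs. By the exact-$t$ reduction it suffices to consider equal $d$, since both outputs have length $n+r-d$. I would prove $\mu=\nu$ in two stages.

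First, by Lemma~\ref{lem:pid-projection}, $\Theta_n(y)\in\cT_{\le t}(z)\cap\cT_{\le t}(z')$. Both $z$ and $z'$ lie in the single syndrome class $\cA_{n,t}$ of $\Omega_{n,r,t}$, so Theorem~\ref{thm:pid-outer-partition} forces $z=z'$. Consequently the two codewords carry the same rank $q$ and hence the same inner syndrome, $\kappa_{n,t}(\mu)=\kappa_{n,t}(\nu)$.

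Second, Lemma~\ref{lem:common-data-pattern} shows that $\mu$ and $\nu$ share a common PID descendant after at most $t$ deletions. Combined with Lemma~\ref{lem:std-transitive} and Theorem~\ref{thm:inner-pid-syndrome}, which says each $\kappa_{n,t}$ class corrects $t$ PIDs, this gives $\mu=\nu$. So the codebook corrects up to $t$ PIDs.

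For the decoder of Construction~\ref{con:full-systematic-pid-decoder}, I would argue as follows. Step 1 returns the true $z$ by the outer property. Step 2 then returns the true $\gamma$, and since the true rank is below $M_{n,t}$, no spurious failure is declared. In Step 3, the true $\delta$ gives $x_\delta=\std$ of the surviving data, which is a $\delta$-PID descendant of $\mu$ by the argument in the proof of Lemma~\ref{lem:pid-projection}. Hence $\mu$ appears among the enumerated candidates and passes both checks. Any other candidate passing both checks would be a second codeword with descendant $y$, contradicting the code property, so the output is unique.

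\emph{Redundancy and complexity.} The redundancy is $\sum_{j=1}^{7t-1}\log(n+j)=(7t-1)\log n+O_t(1)$ by Definition~\ref{def:full-systematic}. For complexity, encoding costs $n^{O(t)}$ via Theorem~\ref{thm:inner-pid-syndrome} plus polynomial unranking from Theorem~\ref{thm:pid-outer-partition}. Building the dictionary is polynomial, with $|\cA_{n,t}|A_{r,t}$ entries. Decoding costs $n^{O(t)}$: it enumerates $O(n^{2t}t!)$ parents per $\delta$ by Lemma~\ref{lem:parent-enumeration}, computes $\kappa_{n,t}$ for each, and performs the $\binom{n+r}{d}$ consistency checks. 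Uniform constructibility holds because all parameters $q_0,q_1,\cA_{n,t}$ are computed deterministically.

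The main obstacle is the reduction in the correctness part from ``common descendant of codewords'' to ``same outer vector and common inner descendant''. The subtle points are that the two codewords may delete different numbers of data symbols, $\delta_\mu\ne\delta_\nu$, and that Lemma~\ref{lem:common-data-pattern} must be applied only after $z=z'$ is established. I would check carefully that the outer property rules out mismatched $(\delta,e)$ splits, which it does because $\cT_{\le t}$ already ranges over all splits.
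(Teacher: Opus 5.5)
Your proposal is correct and follows the paper's proof essentially step for step. Lemma~\ref{lem:pid-projection} together with the outer partition forces $z=z'$, injectivity of the mixed-radix index then gives equal inner syndromes, and Lemma~\ref{lem:common-data-pattern} with Theorem~\ref{thm:inner-pid-syndrome} yields $\mu=\nu$. Your treatment of well-definedness and decoder uniqueness just spells out what the paper compresses into one sentence.

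Two harmless imprecisions are worth noting. First, Lemma~\ref{lem:common-data-pattern} holds for arbitrary $z,z'$; only the conclusion $\kappa_{n,t}(\mu)=\kappa_{n,t}(\nu)$ needs $z=z'$. Second, you do not need the exact-$t$ reduction to get equal $d$, because a common output has a single length.
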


\begin{proof}
Suppose two encoded permutations $\Enc_{n,t}^{\PID}(\mu)$ and $\Enc_{n,t}^{\PID}(\nu)$ share an output $y$ after $d\leq t$ PIDs. By Lemma~\ref{lem:pid-projection}, the projected word $w=\Theta_n(y)$ lies in the outer output sets of both vectors $z$ and $z'$. Because both vectors belong to the same outer class $\cA_{n,t}$, Theorem~\ref{thm:pid-outer-partition} enforces $z=z'$. Their ranks must therefore coincide, and the injectivity of the indexing in \eqref{eq:mixed-radix-index} implies $\kappa_{n,t}(\mu)=\kappa_{n,t}(\nu)$. According to Lemma~\ref{lem:common-data-pattern}, $\mu$ and $\nu$ share a common PID descendant after at most $t$ deletions. Theorem~\ref{thm:inner-pid-syndrome} then establishes that $\mu=\nu$.

This argument also ensures the uniqueness of the decoded message in Construction~\ref{con:full-systematic-pid-decoder}. The inner syndrome evaluation and decoding run in polynomial time by Theorem~\ref{thm:inner-pid-syndrome}. The outer partition, dictionary lookup, and rank conversions operate in polynomial time by Theorem~\ref{thm:pid-outer-partition}. All remaining parent enumerations require at most $n^{O(t)}$ operations, which proves the complexity bound.

Finally, setting $r=7t-1$, the code redundancy evaluates to
\begin{align*}
 \red(\Enc_{n,t}^{\PID})
 &= \log\frac{(n+r)!}{n!}
   = \sum_{j=1}^{r}\log(n+j)
   = (7t-1)\log n+O_t(1).
\end{align*}\qedhere
\end{proof}

%% file: main_paper/sid.tex
\section{Systematic Codes For Symbol-Invariant Deletions}\label{sec:sid}

We next construct full-systematic permutation codes capable of correcting up to $t$ SIDs. The labeled-marker outer coding framework developed in Section~\ref{subsec:pid-outer} applies directly to this setting, as an SID output retains the absolute labels of the surviving markers and can therefore be mapped deterministically to a sequence in the projected outer channel. The primary technical distinction lies in the design of the inner syndrome. By leveraging the cyclic-successor mapping introduced in~\cite{wang2025permutation}, we construct an inner SID syndrome that takes at most $O_t(n^{3t-1})$ possible values. Storing this syndrome in an outer code class with $4t$ redundancy markers achieves an overall redundancy of $4t\log n+O_t(1)$ bits.

\subsection{A Polynomial-Time Inner SID Syndrome}

Let $N \triangleq n+1$, let $p$ be the smallest prime satisfying $p\geq N$, and let $\iota$ denote the canonical injection from $[N]$ to $\mathbb F_p$ defined by $\iota(a) \triangleq a\pmod p$. When $N=p$, the element $p$ is identified with $0$, so that all $N$ images remain pairwise distinct. For an integer $d$ with $2\leq d\leq N$ and a permutation $\rho=(\rho_1,\ldots,\rho_N)\in\mathcal S_N$, we define the power-sum syndrome
\begin{equation}
 J_{N,d}(\rho) \triangleq \left(\sum_{i=1}^{N}\iota(i)^k\iota(\rho_i)\right)_{k=1}^{d-2}
 \in\mathbb F_p^{d-2}.
 \label{eq:sid-power-sum-index}
\end{equation}
Thus, $J_{N,d}(\rho)$ records $d-2$ weighted power sums of the entries of $\rho$; when $d=2$, it represents the empty tuple.

\begin{lemma}[Canonical Hamming Partition]
\label{lem:canonical-hamming-partition}
For every vector $\bm a\in\mathbb F_p^{d-2}$, the set $\bigl\{\rho\in\mathcal S_N : J_{N,d}(\rho)=\bm a\bigr\}$ has minimum Hamming distance at least $d$. Consequently, $J_{N,d}$ partitions $\mathcal S_N$ into at most $p^{d-2}$ permutation codes of minimum Hamming distance at least $d$. Furthermore, $J_{N,d}(\rho)$ can be evaluated using $O(Nd)$ operations over $\mathbb F_p$.
\end{lemma}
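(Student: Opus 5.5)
The plan is to argue by contradiction with a Vandermonde determinant. Suppose $\rho\neq\sigma$ both lie in the class $J_{N,d}^{-1}(\bm a)$ and satisfy $d_{\mathrm H}(\rho,\sigma)=s\le d-1$. Because both are permutations of $[N]$, they cannot differ in exactly one coordinate, so $s\ge 2$. Let $S=\{i_1,\ldots,i_s\}$ be the set of differing positions. Set $e_i\triangleq\iota(\rho_i)-\iota(\sigma_i)\in\mathbb F_p$; since $\iota$ is injective, $e_i\neq0$ exactly for $i\in S$.

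Subtracting the two syndromes gives $\sum_{i\in S}\iota(i)^k e_i=0$ for $k=1,\ldots,d-2$. We also have the two extra relations that come for free from the permutation structure. The first is $\sum_{i\in S}e_i=\sum_i\iota(\rho_i)-\sum_i\iota(\sigma_i)=0$, because $\rho$ and $\sigma$ carry the same multiset of values, and this is the $k=0$ equation. The second would be a $k=-1$ type equation, but that is not available in general. So the $k=0$ equation is the one extra relation we gain.

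Together we have $d-1$ equations, indexed by $k=0,\ldots,d-2$, in the $s\le d-1$ unknowns $(e_i)_{i\in S}$. Take the first $s$ of them, $k=0,\ldots,s-1$. Their coefficient matrix is $(\iota(i_j)^k)$, an $s\times s$ Vandermonde matrix. It is nonsingular because the points $\iota(i_1),\ldots,\iota(i_s)$ are distinct in $\mathbb F_p$: positions lie in $[N]$ with $N\le p$, and the convention $p\mapsto 0$ keeps $\iota$ injective. Hence $e_S=0$, contradicting $e_i\neq0$. So the minimum distance is at least $d$. The only case needing care is $s\le d-1$ with $s-1\le d-2$, which always holds, so the first $s$ equations are available. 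The $d=2$ case is trivial: distinct permutations already differ in at least $2$ positions.

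The partition count follows immediately, since the codomain has $p^{d-2}$ elements. For complexity, compute the powers $\iota(i)^k$ iteratively, multiplying by $\iota(i)$ each step. That costs $O(d)$ field operations per index $i$, for $O(Nd)$ in total.

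The main obstacle is obtaining the $k=0$ relation, since the syndrome starts at $k=1$. The permutation-sum identity resolves it, and everything else is the standard Reed–Solomon/Vandermonde argument.
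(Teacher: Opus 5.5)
Your proposal is correct and follows the paper's proof essentially step for step. You obtain the $k=0$ relation $\sum_{i\in S} e_i=0$ from the two permutations sharing the same multiset of values, combine it with the first $s-1$ syndrome equations into a nonsingular $s\times s$ Vandermonde system (using injectivity of $\iota$ on positions), and derive a contradiction; the class count and the $O(Nd)$ evaluation cost are handled the same way as in the paper.
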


\begin{proof}
Let $\rho,\sigma\in\mathcal S_N$ be two distinct permutations satisfying $J_{N,d}(\rho)=J_{N,d}(\sigma)$. Let $I \triangleq \{i\in[N] : \rho_i\neq\sigma_i\}=\{i_1,\ldots,i_h\}$, where $h=d_{\rm H}(\rho,\sigma)$. For each $j\in[h]$, define $c_j \triangleq \iota(\rho_{i_j})-\iota(\sigma_{i_j})\in\mathbb F_p$.

Because $\rho_{i_j}\neq\sigma_{i_j}$ and $\iota$ is injective, we have $c_j\neq 0$ for all $j\in[h]$. Since both $\rho$ and $\sigma$ are permutations of the set $[N]$, their coordinate sums satisfy $\sum_{i=1}^{N}\iota(\rho_i)=\sum_{i=1}^{N}\iota(\sigma_i)$. The two permutations coincide outside $I$, which implies $\sum_{j=1}^{h} c_j=0$. Moreover, the condition $J_{N,d}(\rho)=J_{N,d}(\sigma)$ enforces $\sum_{i=1}^{N}\iota(i)^k\bigl(\iota(\rho_i)-\iota(\sigma_i)\bigr)=0$ for every $k\in[d-2]$.

Using the fact that $\rho_i=\sigma_i$ for all $i\notin I$, we obtain
\begin{equation}
    \sum_{j=1}^{h}
        \iota(i_j)^k c_j=0,
    \qquad
    k=1,\ldots,d-2.
    \label{eq:sum0}
\end{equation}

Suppose, for contradiction, that $h\le d-1$. Then $h-1\le d-2$. The relation $\sum_{j=1}^{h} c_j=0$ together with the first $h-1$ equations in \eqref{eq:sum0} yields the linear system
\begin{equation}
\begin{pmatrix}
1
    & 1
    & \cdots
    & 1
\\
\iota(i_1)
    & \iota(i_2)
    & \cdots
    & \iota(i_h)
\\
\iota(i_1)^2
    & \iota(i_2)^2
    & \cdots
    & \iota(i_h)^2
\\
\vdots
    & \vdots
    & \ddots
    & \vdots
\\
\iota(i_1)^{h-1}
    & \iota(i_2)^{h-1}
    & \cdots
    & \iota(i_h)^{h-1}
\end{pmatrix}
\begin{pmatrix}
c_1\\
c_2\\
\vdots\\
c_h
\end{pmatrix}
=
\begin{pmatrix}
0\\
0\\
\vdots\\
0
\end{pmatrix}.
\label{eq:vanmagtrix}
\end{equation}
The coefficient matrix in \eqref{eq:vanmagtrix} is a Vandermonde matrix with determinant $\prod_{1\le u<v\le h}\bigl(\iota(i_v)-\iota(i_u)\bigr)$. Since the indices $i_1,\ldots,i_h$ are distinct and $\iota$ is injective, we have $\iota(i_u)\neq\iota(i_v)$ for all $u\neq v$, ensuring that the determinant is nonzero in $\mathbb F_p$. Hence, the matrix is nonsingular, forcing $c_1=\cdots=c_h=0$, which contradicts $c_j\neq0$ for each $j\in[h]$. We conclude that $h\ge d$. Because $h=d_{\rm H}(\rho,\sigma)$, every fiber of $J_{N,d}$ has minimum Hamming distance at least $d$.

Finally, $J_{N,d}$ takes values in $\mathbb F_p^{d-2}$ and thus admits at most $p^{d-2}$ distinct values. Computing its $d-2$ components directly requires $O(Nd)$ field operations.
\end{proof}

For a permutation $\pi=(\pi_1,\ldots,\pi_n)\in\mathcal S_n$, we append the sentinel symbol $N=n+1$ to form $\bar\pi \triangleq (\pi_1,\ldots,\pi_n,N)\in\mathcal S_N$. Its cyclic-successor permutation $f(\bar\pi)\in\mathcal S_N$ is defined by
\begin{equation}
f(\bar\pi)_{\bar\pi_i} \triangleq \bar\pi_{i+1},
 \qquad i\in[N],
 \qquad \bar\pi_{N+1} \triangleq \bar\pi_1.
 \label{eq:sid-successor-map}
\end{equation}
Under this definition, $f(\bar\pi)_a$ is the symbol immediately succeeding $a$ in the cyclic ordering determined by $\bar\pi$. It was proved in~\cite[Claim~1 and Theorem~2]{wang2025permutation} that $f$ is injective on permutations ending in $N$, and satisfies
\begin{equation}
 d_{\mathrm H}\bigl(f(\bar\pi),f(\bar\sigma)\bigr)
 \leq 3d_{\mathrm U}(\pi,\sigma)
 \label{eq:sid-successor-distance}
\end{equation}
for all $\pi,\sigma\in\mathcal S_n$. For $n\geq3t$, we therefore define the inner SID syndrome map
\begin{equation}
 \kappa^{\mathrm{SID}}_{n,t}(\pi)
 \triangleq J_{n+1,3t+1}\bigl(f(\bar\pi)\bigr)
 \in\mathbb F_p^{3t-1}.
 \label{eq:inner-sid-syndrome}
\end{equation}

\begin{theorem}[Polynomial-Time Inner SID Syndrome]
\label{thm:inner-sid-syndrome}
Every syndrome class of $\kappa^{\mathrm{SID}}_{n,t}$ corrects up to $t$ SIDs. The size of the ambient syndrome space satisfies
\begin{equation}
 M^{\mathrm{SID}}_{n,t} \triangleq p^{3t-1}
 <2^{3t-1}(n+1)^{3t-1}.
 \label{eq:inner-sid-range}
\end{equation}
The syndrome evaluation and the corresponding decoding algorithm can be computed in $n^{O(t)}$ time.
\end{theorem}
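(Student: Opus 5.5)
The argument has three parts: a distance argument for correctability, a prime-gap estimate for the syndrome-space size, and an enumeration argument for the algorithms.

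\textbf{Correctability.} Let $\pi\neq\sigma$ lie in the same syndrome class of $\kappa^{\mathrm{SID}}_{n,t}$, and suppose for contradiction that they share a common descendant after at most $t$ SIDs. By the exact-$t$ reduction of Section~\ref{subsec:sidpid_model}, this gives a common exact-$t$ SID descendant, so $d_{\mathrm U}(\pi,\sigma)\le t$. We have $\bar\pi\neq\bar\sigma$, and both end in $N$, so injectivity of $f$ on such permutations gives $f(\bar\pi)\neq f(\bar\sigma)$. Then \eqref{eq:sid-successor-distance} yields
\begin{equation*}
1\le d_{\mathrm H}\bigl(f(\bar\pi),f(\bar\sigma)\bigr)\le 3t.
\end{equation*}
The two images also satisfy $J_{n+1,3t+1}(f(\bar\pi))=J_{n+1,3t+1}(f(\bar\sigma))$. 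Lemma~\ref{lem:canonical-hamming-partition} with $d=3t+1$ applies, since $2\le 3t+1\le N$ follows from $n\ge 3t$. It says distinct members of a fiber are at Hamming distance at least $3t+1$, which is a contradiction. The only care needed is that $\kappa^{\mathrm{SID}}_{n,t}$ is well defined, i.e.\ that $d=3t+1$ lies in the allowed range; this is why the hypothesis $n\ge3t$ appears.

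\textbf{Size bound.} The syndrome lies in $\mathbb F_p^{3t-1}$, so $M^{\mathrm{SID}}_{n,t}=p^{3t-1}$. By Bertrand's postulate the smallest prime $p\ge N$ satisfies $p<2N=2(n+1)$ (for $N=p$ this is trivial). Raising to the power $3t-1$ gives \eqref{eq:inner-sid-range}.

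\textbf{Complexity.} This is the step I expect to be the main, though modest, obstacle, because the decoder must rely only on the parent enumeration and not on an algebraic decoder for $J$. The evaluation side is straightforward. Finding $p$ takes $O(n)$ deterministic primality tests. Computing $f(\bar\pi)$ takes $O(n)$ time, and $J$ takes $O(nt)$ field operations by Lemma~\ref{lem:canonical-hamming-partition}. For decoding, suppose we receive $x\in\mathcal I_{n,n-s}$ with $s\le t$, together with the syndrome. I would enumerate its at most $s!\binom{n}{s}=n^{O(t)}$ SID parents in $\mathcal S_n$ via Lemma~\ref{lem:parent-enumeration}, evaluate $\kappa^{\mathrm{SID}}_{n,t}$ on each, and output the unique match. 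The transmitted message is a parent with the given syndrome, so a match exists. Uniqueness follows from the correctability part: two matching parents would share the descendant $x$. The total time is $n^{O(t)}$.
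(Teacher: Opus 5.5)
Your proposal is correct and follows the paper's proof essentially step for step. For correctability you use the Ulam-to-Hamming contradiction, combining injectivity of $f$, \eqref{eq:sid-successor-distance}, and Lemma~\ref{lem:canonical-hamming-partition} with $d=3t+1$. For the size bound you apply Bertrand's postulate to get $p<2(n+1)$, and for decoding you enumerate the $s!\binom{n}{s}$ SID parents and select the unique syndrome match. Your explicit checks that $3t+1\le N$ follows from $n\ge 3t$, and that a matching parent exists and is unique, are small details the paper leaves implicit.
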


\begin{proof}
Suppose that distinct permutations $\pi,\sigma\in\mathcal S_n$ satisfy $\kappa^{\mathrm{SID}}_{n,t}(\pi)=\kappa^{\mathrm{SID}}_{n,t}(\sigma)$. By the injectivity of $f$ and Lemma~\ref{lem:canonical-hamming-partition}, their cyclic successors satisfy $d_{\mathrm H}\bigl(f(\bar\pi),f(\bar\sigma)\bigr)\geq3t+1$. If their $t$-SID descendant sets were to intersect, their Ulam distance would satisfy $d_{\mathrm U}(\pi,\sigma)\leq t$ by the metric characterization in Section~\ref{subsec:sidpid_model}. However, \eqref{eq:sid-successor-distance} would then imply a Hamming distance of at most $3t$, yielding a contradiction.

By Bertrand's postulate, the prime satisfies $p<2(n+1)$, which establishes \eqref{eq:inner-sid-range}. To decode a received sequence $x\in\mathcal I_{n,n-s}$ with $s\leq t$, we enumerate its $s!\binom ns\leq n^t$ SID parents in $\mathcal S_n$ as in Lemma~\ref{lem:parent-enumeration}, compute \eqref{eq:inner-sid-syndrome} for each candidate, and select the unique parent matching the given syndrome. All field operations involve operands of $O_t(\log n)$ bits, ensuring that the decoding procedure runs in polynomial time.
\end{proof}

\subsection{Algebraic Outer Partition}

We retain the canonical labeled-gap words $W_n(z)$, the interleaving mapping $\operatorname{Int}_z(\mu)$, the projected outer channel $\mathcal T_{\leq t}(z)$, and the outer syndrome $\Omega_{n,r,t}$ from Section~\ref{subsec:pid-outer}. Recall that, for $r\geq2t$, Theorem~\ref{thm:pid-outer-partition} guarantees the existence of a syndrome class $\mathcal A_{n,r,t}\subseteq\{0,\ldots,n\}^r$ with cardinality
\begin{equation}
 |\mathcal A_{n,r,t}|
 \geq\frac{(n+1)^{r-t}}{B_{r,t}Q_{r,t}},
 \label{eq:sid-reused-outer-size}
\end{equation}
while admitting uniform polynomial-time construction, ranking, and unranking algorithms. The next lemma demonstrates that this class also corrects the outer channel under SIDs.

Let $w$ be a sequence obtained from $W_n(z)$ through the deletion of $d\leq t$ symbols, each being either a copy of $D$ or a marker. Let $E\subseteq[r]$ denote the set of deleted marker labels, let $e \triangleq |E|$, and let $\delta \triangleq d-e$ be the number of deleted copies of $D$. We write the surviving markers as $[r]\setminus E=\{s_1<s_2<\cdots<s_{r-e}\}$.

\begin{lemma}[SID-to-Projected-PID Reduction]
\label{lem:sid-to-outer-projection}
Let $\Lambda_{\delta,E}(w)$ be the word obtained from $w$ by replacing the surviving markers $s_1,\ldots,s_\delta$ with copies of $D$ and relabeling $s_{\delta+k}\mapsto k$ for each $k\in[r-d]$. Then
\begin{equation*}
 \Lambda_{\delta,E}(w)\in\mathcal T_{\leq t}(z).
\end{equation*}
Consequently, if $z,z'$ belong to the same syndrome class of $\Omega_{n,r,t}$ and $W_n(z),W_n(z')$ share an SID descendant after at most $t$ deletions, then $z=z'$.
\end{lemma}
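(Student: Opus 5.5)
The first claim follows by direct comparison with the definition of $\cT_{\leq t}(z)$. The word $w$ arises from $W_n(z)$ by deleting the marker set $E$ with $|E|=e$ together with $\delta=d-e$ copies of $D$, where $d=\delta+e\le t$. The construction of $\cT_{\leq t}(z)$ performs exactly these deletions. It then converts the first $\delta$ surviving markers $s_1,\ldots,s_\delta$ into $D$ and relabels $s_{\delta+k}\mapsto k$ for $k\in[r-d]$, which is precisely the map $\Lambda_{\delta,E}$. The hypothesis $r\ge 2t$ gives $r-e\ge\delta$, so the masking step is well defined and $\Lambda_{\delta,E}(w)$ is one of the words generated in that definition. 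We therefore conclude $\Lambda_{\delta,E}(w)\in\cT_{\leq t}(z)$.

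For the consequence, suppose $z,z'$ lie in the same class of $\Omega_{n,r,t}$ and that $W_n(z),W_n(z')$ share a common SID descendant $w$. The key step is to show that $\delta$ and $E$ are determined by $w$ alone, so that the same map $\Lambda_{\delta,E}$ applies on both sides.

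Since SIDs retain labels, the surviving marker labels are read off directly from $w$. Hence $E=[r]\setminus\{\text{labels in }w\}$ is common to both parents. Counting copies of $D$ in $w$ gives $n-\delta$, so $\delta$ is also common. The total number of deletions is then $d=\delta+e$ in both cases, regardless of whether one first reduced the at-most-$t$ case to a common deletion count.

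Now set $u\triangleq\Lambda_{\delta,E}(w)$. By the first part, $u\in\cT_{\leq t}(z)\cap\cT_{\leq t}(z')$. Theorem~\ref{thm:pid-outer-partition} states that each syndrome class of $\Omega_{n,r,t}$ has pairwise disjoint output sets $\cT_{\leq t}$. This forces $z=z'$.

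The only step needing care is this uniqueness of $(\delta,E)$. If it failed, the two parents could project to different words, and the disjointness in Theorem~\ref{thm:pid-outer-partition} would give nothing. Here it holds because both the marker labels and the multiplicity of $D$ are visible in an SID output.

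A remaining point is the common length. Two SID descendants that are equal as words necessarily have the same length, and the deletion count of each parent is fixed by $e$ and $\delta$ as computed above. So no separate appeal to the exact-$t$ reduction is needed.
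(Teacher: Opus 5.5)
Your proposal is correct and follows essentially the same route as the paper. The first claim is read off directly from the definition of $\mathcal T_{\leq t}(z)$. For the consequence, an SID output reveals $E$ and $\delta$, so the same word $\Lambda_{\delta,E}(w)$ lies in both $\mathcal T_{\leq t}(z)$ and $\mathcal T_{\leq t}(z')$, and Theorem~\ref{thm:pid-outer-partition} then forces $z=z'$. Your explicit recovery of $\delta$ from the number of copies of $D$, and your remark on equal lengths, simply spell out what the paper summarizes as ``determined uniquely by $w$.''
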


\begin{proof}
Recall that the operation defining $\mathcal T_{\leq t}(z)$ in Section~\ref{subsec:pid-outer} deletes a set $E$ of markers along with $\delta$ placeholders $D$ from $W_n(z)$, converts the first $\delta$ surviving markers into $D$, and relabels the remaining markers in increasing order. Applying these last two steps to the stable sequence $w$ yields precisely $\Lambda_{\delta,E}(w)$, proving the first assertion.

For the second assertion, suppose $w$ is a common stable descendant of $W_n(z)$ and $W_n(z')$. The set $E$, and consequently the integers $e$ and $\delta$, are determined uniquely by $w$. Thus, the word $\Lambda_{\delta,E}(w)$ belongs to both $\mathcal T_{\leq t}(z)$ and $\mathcal T_{\leq t}(z')$. Theorem~\ref{thm:pid-outer-partition} establishes that these projected output sets are mutually disjoint within each syndrome class of $\Omega_{n,r,t}$, which enforces $z=z'$.
\end{proof}

\subsection{The Full-Systematic SID Construction}

Set $r \triangleq 4t$, $B_t^{\mathrm{SID}} \triangleq B_{4t,t}$, and $Q_t^{\mathrm{SID}} \triangleq Q_{4t,t}$. Let $\mathcal A^{\mathrm{SID}}_{n,t} \triangleq \mathcal A_{n,4t,t}$ be the lexicographically first largest syndrome class provided by Theorem~\ref{thm:pid-outer-partition}. Applying \eqref{eq:sid-reused-outer-size} yields
\begin{equation}
 |\mathcal A^{\mathrm{SID}}_{n,t}|
 \geq\frac{(n+1)^{3t}}{B_t^{\mathrm{SID}}Q_t^{\mathrm{SID}}}.
 \label{eq:selected-sid-outer-size}
\end{equation}
The inner syndrome in Theorem~\ref{thm:inner-sid-syndrome} takes values in an ambient space of size $M^{\mathrm{SID}}_{n,t}=p^{3t-1}$. Comparing \eqref{eq:inner-sid-range} with \eqref{eq:selected-sid-outer-size} demonstrates that $|\mathcal A^{\mathrm{SID}}_{n,t}|\geq M^{\mathrm{SID}}_{n,t}$ holds whenever
\begin{equation}
 n\geq n_{\mathrm{SID}}(t) \triangleq
 \max\left\{3t,
 \left\lceil 2^{3t-1}B_t^{\mathrm{SID}}Q_t^{\mathrm{SID}}\right\rceil-1
 \right\}.
 \label{eq:sid-n0}
\end{equation}

For $a=(a_1,\ldots,a_{3t-1})\in\mathbb F_p^{3t-1}$, let $\widetilde a_k\in\{0,\ldots,p-1\}$ denote the standard integer representative of $a_k$. We map $a$ to the integer index
\begin{equation}
 \operatorname{idx}_p(a)
 \triangleq \sum_{k=1}^{3t-1}\widetilde a_kp^{k-1}
 \in\{0,\ldots,M^{\mathrm{SID}}_{n,t}-1\}.
 \label{eq:sid-index}
\end{equation}
Let $z_{\mathrm{SID}}^{(q)}$ denote the vector of rank $q$ in the lexicographically ordered class $\mathcal A^{\mathrm{SID}}_{n,t}$.

\begin{construction}[Systematic SID Encoder]
\label{con:systematic-sid-encoder}
Given a message permutation $\mu\in\mathcal S_n$, encoding is performed in three steps.
\begin{enumerate}[label=\arabic*)]
 \item Compute the inner syndrome $a = \kappa^{\mathrm{SID}}_{n,t}(\mu)$ and its index $q = \operatorname{idx}_p(a)$.
 \item Retrieve the vector $z = z^{(q)}_{\mathrm{SID}}\in \mathcal A^{\mathrm{SID}}_{n,t}$ by unranking $q$.
 \item Output the codeword $\Enc^{\mathrm{SID}}_{n,t}(\mu) \triangleq \Int_z(\mu)\in \mathcal S_{n+4t}$.
\end{enumerate}
\end{construction}

To decode, we utilize the outer dictionary established in Construction~\ref{con:full-systematic-pid-decoder}, where each sequence in $\bigcup_{z\in\mathcal A^{\mathrm{SID}}_{n,t}} \mathcal T_{\leq t}(z)$ is associated with its unique parent vector $z$. Uniqueness is guaranteed by Theorem~\ref{thm:pid-outer-partition}.

\begin{construction}[Systematic SID Decoder]
\label{con:systematic-sid-decoder}
Given a received sequence $y\in\mathcal I_{n+4t,n+4t-d}$ produced by at most $d\leq t$ SIDs, decoding proceeds in three steps.
\begin{enumerate}[label=\arabic*)]
 \item Identify the deleted markers $E \triangleq \{j\in[4t] : n+j\text{ does not occur in }y\}$, set $e \triangleq |E|$, and calculate $\delta \triangleq d-e$. Replace each surviving data symbol in $[n]$ with $D$ and each surviving marker $n+j$ with $j$, obtaining a stable outer sequence $w$. Construct $\widehat w \triangleq \Lambda_{\delta,E}(w)$ and consult the outer dictionary to retrieve the unique vector $z\in\mathcal A^{\mathrm{SID}}_{n,t}$ satisfying $\widehat w\in\mathcal T_{\leq t}(z)$.
 \item Determine the rank $q$ of $z$ within $\mathcal A^{\mathrm{SID}}_{n,t}$. If $q\geq M^{\mathrm{SID}}_{n,t}$, declare a decoding failure. Otherwise, invert \eqref{eq:sid-index} to recover the syndrome $a\in\mathbb F_p^{3t-1}$.
 \item Extract the data subsequence $x \triangleq y|_{[n]}\in\mathcal I_{n,n-\delta}$. Enumerate its $\delta!\binom n\delta$ SID parents in $\mathcal S_n$, and output the unique candidate $\mu'$ satisfying $\kappa^{\mathrm{SID}}_{n,t}(\mu')=a$.
\end{enumerate}
\end{construction}

\begin{theorem}[Systematic $t$-SID Correcting Code]
\label{thm:systematic-sid}
For every fixed integer $t\geq1$ and all $n\geq n_{\mathrm{SID}}(t)$, the encoder $\Enc^{\mathrm{SID}}_{n,t}$ from $\mathcal S_n$ to $\mathcal S_{n+4t}$ defined in Construction~\ref{con:systematic-sid-encoder} is full-systematic, and its codebook corrects up to $t$ symbol-invariant deletions. The code family is uniformly constructible, encodable, and decodable in $n^{O(t)}$ time. The redundancy satisfies
\begin{equation*}
\red\bigl(\Enc^{\mathrm{SID}}_{n,t}\bigr) = 4t\log n+O_t(1).
\end{equation*}
\end{theorem}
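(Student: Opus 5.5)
The plan follows the proof of Theorem~\ref{thm:main-systematic-pid} step by step. The PID projection argument is replaced by Lemma~\ref{lem:sid-to-outer-projection}, and the PID pattern argument by the fact that SIDs act directly on the data subsequence.

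\emph{Full-systematicity.} This is immediate from the definition of $\Int_z$. The encoder is well defined for $n\ge n_{\mathrm{SID}}(t)$: by \eqref{eq:sid-n0}, \eqref{eq:inner-sid-range} and \eqref{eq:selected-sid-outer-size}, every index $q=\operatorname{idx}_p(a)<p^{3t-1}$ is at most $|\mathcal A^{\mathrm{SID}}_{n,t}|-1$. The map $\operatorname{idx}_p$ is injective, so unranking is well defined.

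\emph{Correctness.} Suppose $\Int_z(\mu)$ and $\Int_{z'}(\nu)$, with $z,z'\in\mathcal A^{\mathrm{SID}}_{n,t}$, share a common SID descendant $y$ after $d\le t$ deletions; by the exact-$t$ reduction we may fix one $d$ for both.
\begin{enumerate}[label=\arabic*)]
\item Replace each data symbol of $y$ by $D$ and each marker $n+j$ by $j$. The resulting word is a common stable descendant of $W_n(z)$ and $W_n(z')$.
\item Lemma~\ref{lem:sid-to-outer-projection} then gives $z=z'$. The set of absent markers $E$ is read off from $y$ itself, so $e$ and $\delta$ agree for both codewords.
\item Equal ranks in $\mathcal A^{\mathrm{SID}}_{n,t}$ and injectivity of $\operatorname{idx}_p$ give $\kappa^{\mathrm{SID}}_{n,t}(\mu)=\kappa^{\mathrm{SID}}_{n,t}(\nu)$.
\item The subsequence $y|_{[n]}$ is obtained from both $\mu$ and $\nu$ by exactly $\delta\le t$ stable deletions. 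Hence $\mu,\nu$ share a common SID descendant, which means $d_{\mathrm U}(\mu,\nu)\le t$.
\item Theorem~\ref{thm:inner-sid-syndrome} forces $\mu=\nu$.
\end{enumerate}

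\emph{Decoder.} The same argument shows the decoder in Construction~\ref{con:systematic-sid-decoder} is correct.
\begin{enumerate}[label=\arabic*)]
\item $\widehat w=\Lambda_{\delta,E}(w)$ lies in $\mathcal T_{\le t}(z)$ for the true $z$ by Lemma~\ref{lem:sid-to-outer-projection}, and the dictionary returns this unique $z$.
\item The rank of the true $z$ is $<M^{\mathrm{SID}}_{n,t}$, so failure is never declared on valid inputs.
\item The true $\mu$ is among the SID parents of $x=y|_{[n]}$. Uniqueness of the syndrome match follows from Theorem~\ref{thm:inner-sid-syndrome}.
\end{enumerate}

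\emph{Complexity and redundancy.}
\begin{enumerate}[label=\arabic*)]
\item Construction of $\mathcal A^{\mathrm{SID}}_{n,t}$, its dictionary, and ranking and unranking are polynomial in $n$ by Theorem~\ref{thm:pid-outer-partition} with $r=4t$. Note $4t\ge 2t$, so that theorem applies.
\item Computing $p$ by trial primality testing is polynomial.
\item Syndrome evaluation is polynomial and parent enumeration costs $n^{O(t)}$, both by Theorem~\ref{thm:inner-sid-syndrome}.
\item The redundancy is $\sum_{j=1}^{4t}\log(n+j)=4t\log n+O_t(1)$.
\end{enumerate}

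The only delicate point is the well-definedness step. One must confirm that the threshold $n_{\mathrm{SID}}(t)$ in \eqref{eq:sid-n0} really gives $(n+1)^{3t}/(B_t^{\mathrm{SID}}Q_t^{\mathrm{SID}})\ge 2^{3t-1}(n+1)^{3t-1}>p^{3t-1}$. Since $n+1\ge 2^{3t-1}B_t^{\mathrm{SID}}Q_t^{\mathrm{SID}}$, this reduces to a direct rearrangement. Also $n\ge 3t$ is needed for the syndrome $\kappa^{\mathrm{SID}}_{n,t}$ to be defined, since $J_{n+1,3t+1}$ requires $3t+1\le n+1$.
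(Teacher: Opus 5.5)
Your proposal is correct and follows the paper's proof essentially step for step. You reduce a common SID output to a common stable outer descendant, invoke Lemma~\ref{lem:sid-to-outer-projection} and Theorem~\ref{thm:pid-outer-partition} to get $z=z'$, use injectivity of $\operatorname{idx}_p$ to get equal inner syndromes, and conclude with Theorem~\ref{thm:inner-sid-syndrome} applied to the common data descendant $y|_{[n]}$. Your explicit check that $n\ge n_{\mathrm{SID}}(t)$ makes the encoder well defined is a useful addition; the paper does that comparison in the text before Construction~\ref{con:systematic-sid-encoder} rather than inside the proof.
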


\begin{proof}
Deleting the fixed redundancy symbols $\mathcal R_{n,4t}=[n+1,n+4t]$ from $\Int_z(\mu)$ yields $\mu$, which confirms that the encoder is full-systematic.

Suppose that two codewords $\Enc^{\mathrm{SID}}_{n,t}(\mu)$ and $\Enc^{\mathrm{SID}}_{n,t}(\nu)$ produce a common output $y$ after $d\leq t$ SIDs. Let $z,z'\in\mathcal A^{\mathrm{SID}}_{n,t}$ denote the outer vectors chosen for $\mu$ and $\nu$, respectively. Processing $y$ according to the first step of Construction~\ref{con:systematic-sid-decoder} generates a common stable outer output $w$. By Lemma~\ref{lem:sid-to-outer-projection}, $z$ and $z'$ share a common projected outer sequence. Because both vectors lie in the same outer class, Theorem~\ref{thm:pid-outer-partition} enforces $z=z'$. Their ranks must therefore coincide, and the injectivity of \eqref{eq:sid-index} guarantees that
\begin{equation*}
 \kappa^{\mathrm{SID}}_{n,t}(\mu) = \kappa^{\mathrm{SID}}_{n,t}(\nu).
\end{equation*}
The data subsequence $y|_{[n]}$ is a common SID descendant of $\mu$ and $\nu$ after at most $t$ deletions. Theorem~\ref{thm:inner-sid-syndrome} then establishes that $\mu=\nu$.

This argument also confirms that each step in Construction~\ref{con:systematic-sid-decoder} produces a unique result. The inner syndrome map and its decoder run in $n^{O(t)}$ time by Theorem~\ref{thm:inner-sid-syndrome}. Because $r=4t$, Theorem~\ref{thm:pid-outer-partition} constructs the outer class and performs rank conversions in $n^{O(t)}$ time. Enumerating the projected outputs of its vectors constructs the dictionary within the same time complexity. Finally, the parent candidate list has size $\delta!\binom n\delta\leq n^t$. Therefore, construction, encoding, and decoding all operate in $n^{O(t)}$ time.

Lastly, the code redundancy evaluates to
\begin{equation*}
\red\bigl(\Enc^{\mathrm{SID}}_{n,t}\bigr)
 = \log\frac{(n+4t)!}{n!}
 = \sum_{j=1}^{4t}\log(n+j)
 = 4t\log n+O_t(1).
\end{equation*}\qedhere
\end{proof}

%% file: main_paper/multipermutation.tex
\section{Extension to Multipermutations}
\label{sec:multipermutation}

In this section, we extend the systematic coding constructions from Sections~\ref{sec:pid} and~\ref{sec:sid} to fixed-composition multipermutations. The presence of repeated symbols introduces two key technical challenges. First, PID standardization must preserve ties among identical symbols rather than resolving them arbitrarily. Second, under the PID model, symbols that are completely deleted lead to an alphabet reduction that conceals which symbol classes have vanished. We address this uncertainty by introducing a profile-dependent ambiguity parameter. This formulation naturally interpolates between the permutation regime and the high-multiplicity multipermutation regime. In particular, for a $\lambda$-regular profile with $\lambda>t$, no symbol class can be completely deleted under at most $t$ deletions, causing the PID and SID channels to coincide.

\subsection{Deletion Models in Multipermutation and the Stable Lift}

Let $\boldsymbol m=(m_1,\ldots,m_k)\in\mathbb Z_{>0}^{k}$, let $n \triangleq \sum_{a=1}^{k}m_a$, and define the fixed-composition multipermutation space
\begin{equation*}
 \cM(\boldsymbol m) \triangleq \bigl\{x=(x_1,\ldots,x_n)\in[k]^n : N_a(x)=m_a \text{ for every } a\in[k]\bigr\},
\end{equation*}
where $N_a(x) \triangleq |\{i\in[n] : x_i=a\}|$. Its cardinality is given by
\begin{equation}
       |\cM(\boldsymbol m)| = \frac{n!}{\prod_{a=1}^{k}m_a!}.
 \label{mp:eq:fixed-composition-cardinality}
\end{equation}
When $k=1$, this space contains only a single word, rendering all coding statements trivial. We therefore assume $k\ge2$ in what follows.

For a nonempty word $u=(u_1,\ldots,u_s)$ over a linearly ordered alphabet, its tie-preserving standardization is defined by
\begin{equation*}
  \mstd(u)_i \triangleq 1+\bigl|\{a\in\supp(u) : a<u_i\}\bigr|,
  \qquad i\in[s].
\end{equation*}
Under this mapping, identical symbols remain identical, while the distinct symbols appearing in $u$ are relabeled in strictly increasing order by $1,\ldots,|\supp(u)|$. For an index set $I\subseteq[n]$, let $x_{[n]\setminus I}$ denote the surviving subsequence, with positions retained in increasing order. We formally define the deletion operations on multipermutations as
\begin{equation*}
 \operatorname{del}^{\SID}_{I}(x) \triangleq x_{[n]\setminus I},\qquad
 \operatorname{del}^{\PID}_{I}(x) \triangleq \mstd\bigl(x_{[n]\setminus I}\bigr).
\end{equation*}
For $\mathsf x\in\{\SID,\PID\}$ and $0\le s\le n$, we define the descendant sets
\begin{equation*}
 \cD_s^{\mathsf x}(x) \triangleq \bigl\{\operatorname{del}^{\mathsf x}_{I}(x) : I\subseteq[n],\, |I|=s\bigr\},\qquad
 \cD_{\le t}^{\mathsf x}(x) \triangleq \bigcup_{s=0}^{t}\cD_s^{\mathsf x}(x).
\end{equation*}
A subset of $\cM(\boldsymbol m)$ corrects up to $t$ deletions of type $\mathsf x$ if the corresponding sets $\cD_{\le t}^{\mathsf x}(x)$ are pairwise disjoint. The following transitivity property provides the multipermutation counterpart to Lemma~\ref{lem:std-transitive}.

\begin{lemma}[Transitivity of Tie-Preserving Standardization]
\label{mp:lem:mstd-transitivity}
Let $u=(u_1,\ldots,u_s)$ be a word and let $J=\{j_1<\cdots<j_h\}\subseteq[s]$. Then
\begin{equation*}
 \mstd\bigl(\mstd(u)_{j_1},\ldots,\mstd(u)_{j_h}\bigr)
 =\mstd(u_{j_1},\ldots,u_{j_h}).
\end{equation*}
Consequently, for either deletion model, two words of length $n$ share a common descendant after at most $t$ deletions if and only if they share a common descendant after exactly $t$ deletions, provided $n>t$.
\end{lemma}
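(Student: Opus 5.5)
The plan follows the proof of Lemma~\ref{lem:std-transitive}, with strict total orders on positions replaced by weak orders (total preorders).

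\textbf{Step 1: $\mstd$ is determined by the order pattern.} Two words $u,u'$ of the same length satisfy $\mstd(u)=\mstd(u')$ if and only if they induce the same weak order on positions, that is,
\begin{equation*}
u_i<u_j \iff u'_i<u'_j
\quad\text{and}\quad
u_i=u_j \iff u'_i=u'_j
\qquad\text{for all } i,j.
\end{equation*}
Sufficiency holds because $\mstd(u)_i$ is one plus the number of distinct values strictly below $u_i$, and the weak order determines this count. Necessity holds because $\mstd$ is a strictly increasing map from $\supp(u)$ onto $[|\supp(u)|]$, so it preserves $<$ and $=$ between entries.

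\textbf{Step 2: transitivity.} Let $v=\mstd(u)$. Since $a\mapsto \mstd$-value is strictly increasing on $\supp(u)$, for all $i,j\in J$ we have $v_i<v_j \iff u_i<u_j$ and $v_i=v_j\iff u_i=u_j$. Hence the subwords $v_J$ and $u_J$ induce the same weak order on positions $j_1<\cdots<j_h$. Step~1 then gives $\mstd(v_J)=\mstd(u_J)$.

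\textbf{Step 3: the exact-$t$ reduction.} For SIDs the reduction is immediate, since deleting further symbols from a common descendant gives a common descendant. For PIDs, suppose $\tau\in\cD_s^{\PID}(x)\cap\cD_s^{\PID}(y)$ with $s<t$. Write
\begin{equation*}
\tau=\mstd(x_{[n]\setminus I})=\mstd(y_{[n]\setminus I'}).
\end{equation*}
The length of $\tau$ is $n-s>n-t\ge 0$, so we may delete any further $t-s$ positions from $\tau$, keeping a set $J$. By Step~2, $\mstd(\tau_J)$ equals $\mstd$ of the corresponding subword of $x$, obtained by deleting $I$ together with the extra positions. The same holds for $y$. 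Thus $\mstd(\tau_J)\in\cD_t^{\PID}(x)\cap\cD_t^{\PID}(y)$. The converse direction is trivial, because exact-$t$ descendants are among the at-most-$t$ descendants.

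\textbf{Main obstacle.} The only subtle point is noting that standardization must preserve equalities as well as strict inequalities. Once Step~1 is stated for weak orders, the rest is routine.
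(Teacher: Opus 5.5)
Your proof is correct and follows the paper's argument. The paper likewise notes that $\mstd$ preserves both strict inequalities and equalities among entries, so the two subwords induce the same total preorder and hence have the same tie-preserving standardization; it then gets the exact-$t$ reduction for PIDs by deleting the same additional $t-s$ positions and applying the identity, while your Step~1 only makes explicit that $\mstd$ is determined by the induced weak order.
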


\begin{proof}
For any indices $i,j\in J$, the comparisons $u_i<u_j$ and $u_i=u_j$ hold if and only if the corresponding relations hold between $\mstd(u)_i$ and $\mstd(u)_j$. Hence, the two selected subsequences induce the same total preorder, and their tie-preserving standardizations are identical. The exact-$t$ reduction is immediate for SIDs. For PIDs, given a common $s$-PID descendant with $s<t$, deleting the same additional $t-s$ positions from both sequences and applying the displayed identity establishes the claim.
\end{proof}

To connect multipermutation PIDs with the permutation PID framework studied in Section~\ref{sec:pid}, we introduce a canonical stable lift. For a word $u=(u_1,\ldots,u_s)$, we define
\begin{equation}
\Lift(u)_i \triangleq \sum_{a<u_i}N_a(u)+\bigl|\{j\le i : u_j=u_i\}\bigr|,
 \qquad i\in[s].
 \label{mp:eq:stable-lift}
\end{equation}
The occurrences of each symbol are therefore labeled in strictly increasing order from left to right. For the composition profile $\boldsymbol m$, set $M_0 \triangleq 0$ and $M_a \triangleq \sum_{b=1}^{a}m_b$, and define the consecutive blocks $B_a \triangleq \{M_{a-1}+1,\ldots,M_a\}$ for each $a\in[k]$. We then define the subspace of permutations
\begin{equation*}
 \cL_{\boldsymbol m} \triangleq \bigl\{\pi\in\mathcal S_n : \pi|_{B_a}\text{ is increasing for every }a\in[k]\bigr\},
\end{equation*}
where $\pi|_{B_a}$ denotes the subsequence of $\pi$ restricted to the symbols in $B_a$.

\begin{example}
Let $k=3$, $\boldsymbol m=(2,2,2)$, and $u=(2,1,3,1,2,3)\in\mathcal M(\boldsymbol m)$. Reading from left to right within each symbol class, the occurrences of symbols $1,2,3$ receive the label blocks $\{1,2\}$, $\{3,4\}$, and $\{5,6\}$, respectively. Hence, $\Lift(u)=(3,1,5,2,4,6)\in\mathcal L_{\boldsymbol m}$.

Now retain the positions $J=\{2,3,4,6\}$, which deletes both occurrences of symbol $2$. Then $u_J=(1,3,1,3)$ and $\mstd(u_J)=(1,2,1,2)$, yielding $\Lift\bigl(\mstd(u_J)\bigr)=(1,3,2,4)$.

On the other hand, $\Lift(u)_J=(1,5,2,6)$ and $\std\bigl(\Lift(u)_J\bigr)=(1,3,2,4)$. Thus, $\Lift\bigl(\mstd(u_J)\bigr)=\std\bigl(\Lift(u)_J\bigr)$, illustrating the commutation relation formalized below.
\end{example}

\begin{lemma}[Stable-Lift Bijection and Commutation]
\label{mp:lem:stable-lift}
The map $\Lift$ restricts to a bijection from $\cM(\boldsymbol m)$ to $\cL_{\boldsymbol m}$. Moreover, for every word $u$ and every subset of surviving positions $J\subseteq[|u|]$,
\begin{equation}
\Lift\bigl(\mstd(u_J)\bigr)=\operatorname{std}\bigl(\Lift(u)_J\bigr).\label{eq:lift-commutation}
\end{equation}
In particular, if $y\in\cD_s^{\PID}(x)$, then $\Lift(y)\in\cD_s^{\PID}(\Lift(x))$ under the standard permutation PID definition.
\end{lemma}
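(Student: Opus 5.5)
The plan is to characterize $\Lift(u)$ as the unique permutation that is order-isomorphic to $u$ once ties are broken by position. For a word $u$ of length $s$, consider the strict total order on $[s]$ given by $i\prec j$ if and only if $u_i<u_j$, or $u_i=u_j$ and $i<j$. The first step is to show that $\Lift(u)_i$ equals $1+|\{j : j\prec i\}|$. The entries $j$ with $u_j<u_i$ contribute $\sum_{a<u_i}N_a(u)$. The entries with $u_j=u_i$ and $j<i$ contribute the occurrence count minus one. Adding these recovers \eqref{mp:eq:stable-lift} exactly. Hence $\Lift(u)$ is a permutation of $[s]$ and $\Lift(u)_i<\Lift(u)_j \iff i\prec j$.

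For the bijection, we take $u\in\cM(\boldsymbol m)$. The occurrences of symbol $a$ receive labels $\sum_{b<a}m_b+1,\ldots,\sum_{b\le a}m_b$, i.e.\ the block $B_a$, in increasing left-to-right order. So $\Lift(u)\in\cL_{\boldsymbol m}$. The inverse maps $\pi\in\cL_{\boldsymbol m}$ to the word $x$ with $x_i=a$ whenever $\pi_i\in B_a$. This $x$ has composition $\boldsymbol m$, and the increasing condition on $\pi|_{B_a}$ forces $\Lift(x)=\pi$. Conversely, $x$ is recovered from $\Lift(x)$ by reading off blocks, so the two maps are mutually inverse.

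For \eqref{eq:lift-commutation}, both sides are permutations of $[|J|]$, so it suffices to show they induce the same relative order on pairs of positions of $J$. Take $j_p<j_q$ in $J$. On the right, by the first step, $\std(\Lift(u)_J)_p<\std(\Lift(u)_J)_q$ iff $j_p\prec j_q$ in $u$. On the left, $\mstd$ preserves both strict inequalities and equalities among the entries of $u_J$; this is the preorder observation used in Lemma~\ref{mp:lem:mstd-transitivity}. Applying the first step to the word $\mstd(u_J)$, the comparison is decided by $u_{j_p}$ versus $u_{j_q}$, with ties broken by $p<q$, which is equivalent to $j_p<j_q$. The two orders therefore coincide. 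The final claim follows by taking $J=[n]\setminus I$: if $y=\mstd(x_J)$, then $\Lift(y)=\std(\Lift(x)_J)=\operatorname{del}^{\PID}_I(\Lift(x))$.

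The only delicate step is the tie-breaking. We must check that the positional tie-break in $u$ agrees with the positional tie-break inside the subsequence, which holds because $J$ is listed in increasing order. Everything else is bookkeeping.
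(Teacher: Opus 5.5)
Your proof is correct and follows the same route as the paper: block labels $B_a$ give the bijection onto $\cL_{\boldsymbol m}$, and the commutation identity follows because both sides induce the same strict total order (value first, ties broken by position, which $\mstd$ and the increasing listing of $J$ preserve). Your explicit rank characterization $\Lift(u)_i=1+|\{j : j\prec i\}|$ and your derivation of the final PID claim via $J=[n]\setminus I$ spell out steps the paper leaves implicit.
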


\begin{proof}
For any $x\in\cM(\boldsymbol m)$, the occurrences of symbol $a$ receive precisely the labels in $B_a$ in their order of appearance, ensuring $\Lift(x)\in\cL_{\boldsymbol m}$. Conversely, replacing every symbol of $B_a$ in a permutation $\pi\in\cL_{\boldsymbol m}$ by $a$ recovers a unique word in $\cM(\boldsymbol m)$, and the block-increasing condition guarantees that lifting this word yields $\pi$. This proves bijectivity.

For any two surviving positions $i,j\in J$, the relation $\Lift(u)_i<\Lift(u)_j$ holds if and only if either $u_i<u_j$, or $u_i=u_j$ with $i<j$. This coincides with the ordering employed by the stable lift of $\mstd(u_J)$, where distinct surviving symbol classes are ordered by value, and multiple occurrences within the same class are ordered by coordinate. Thus, the two permutations in \eqref{eq:lift-commutation} induce the same strict total order, proving their equality.
\end{proof}

We next quantify the structural ambiguity introduced by repeated symbols under PID standardization. For a subset $A\subseteq[k]$, let $w_{\boldsymbol m}(A) \triangleq \sum_{a\in A}m_a$, and for an integer $s\ge0$, we define
\begin{align*}
 \mathcal V_s(\boldsymbol m) &\triangleq \{A\subseteq[k] : w_{\boldsymbol m}(A)\le s\},\\
 B_s(\boldsymbol m) &\triangleq |\mathcal V_s(\boldsymbol m)|,\\
 \eta_s(\boldsymbol m) &\triangleq \max\{|A| : A\in\mathcal V_s(\boldsymbol m)\}.
\end{align*}
Because the empty set is included, we have $B_s(\boldsymbol m)\ge1$ and $\eta_s(\boldsymbol m)\ge0$. We also set
\begin{equation*}
 \varepsilon(\boldsymbol m) \triangleq \mathbf 1\left\{\min_{a\in[k]}m_a=1\right\}.
\end{equation*}

\begin{lemma}[Profile Ambiguity and Parent Enumeration]
\label{mp:lem:parent-enumeration}
For any fixed integer $s$,
\begin{equation}
 B_s(\boldsymbol m)\le\sum_{j=0}^{\eta_s(\boldsymbol m)}\binom kj\le (s+1)n^{\eta_s(\boldsymbol m)}.
 \label{eq:B-upper}
\end{equation}
Furthermore, the following two properties hold.
\begin{enumerate}[label=\textup{(\roman*)}]
\item Every sequence of length $n-s$ produced by SIDs has at most $s!\binom ns$ parents in $\cM(\boldsymbol m)$.
\item Every standardized sequence of length $n-s$ produced by PIDs has at most $B_s(\boldsymbol m)\,s!\binom ns$ parents in $\cM(\boldsymbol m)$.
\end{enumerate}
All parents in either case can be enumerated in $n^{O(s)}$ time.
\end{lemma}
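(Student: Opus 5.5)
The plan has three parts: bound $B_s(\boldsymbol m)$ directly from the definitions, then count SID parents by choosing deleted positions and symbols, then count PID parents by first fixing the set of vanished symbol classes and reducing to the SID count.

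\textbf{Bound on $B_s(\boldsymbol m)$.} Every $A\in\mathcal V_s(\boldsymbol m)$ has $|A|\le\eta_s(\boldsymbol m)$ by definition of $\eta_s$, so $B_s(\boldsymbol m)\le\sum_{j=0}^{\eta_s}\binom kj$. Since every $m_a\ge1$, we have $|A|\le w_{\boldsymbol m}(A)\le s$, so $\eta_s\le s$. We also have $\binom kj\le k^j\le n^j\le n^{\eta_s}$ because $k\le n$. The sum has $\eta_s+1\le s+1$ terms, which gives the factor $(s+1)n^{\eta_s}$.

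\textbf{SID parents, part (i).} Given $y$ of length $n-s$, the multiset of deleted symbols is forced: symbol $a$ must be reinserted exactly $m_a-N_a(y)$ times. A parent is therefore determined by two choices.
\begin{itemize}
\item The set of $s$ inserted positions in the length-$n$ word, at most $\binom ns$ choices.
\item An assignment of the forced multiset to those positions, at most $s!$ orderings (fewer when symbols repeat).
\end{itemize}
Enumerating all such choices and discarding duplicates takes $n^{O(s)}$ time.

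\textbf{PID parents, part (ii).} Write $y=\mstd(x_J)$ with $|J|=n-s$. The deleted classes are $A=[k]\setminus\supp(x_J)$. All occurrences of every class in $A$ were deleted, so $w_{\boldsymbol m}(A)\le s$, that is, $A\in\mathcal V_s(\boldsymbol m)$. Once $A$ is fixed, the surviving classes $[k]\setminus A$ correspond to $1,\ldots,k-|A|$ through the unique increasing bijection. This unstandardizes $y$ into a definite SID descendant $x_J$. By part (i), $x_J$ has at most $s!\binom ns$ parents. Summing over the at most $B_s(\boldsymbol m)$ admissible sets $A$ gives the bound $B_s(\boldsymbol m)\,s!\binom ns$.

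Candidates $A$ with $|\supp(y)|\ne k-|A|$ are inconsistent and can simply be skipped. Enumerating $A$ means listing subsets of size at most $\eta_s\le s$, which costs $n^{O(s)}$, so the whole procedure runs in $n^{O(s)}$ time.

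The only delicate point is that the unstandardization map is injective once $A$ is fixed. This holds because $\mstd$ preserves the relative order of the surviving symbols and ties among identical symbols, so the increasing relabeling recovers $x_J$ uniquely.
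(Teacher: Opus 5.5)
Your proposal is correct and follows the paper's proof essentially step for step. The three parts match: you bound $B_s(\boldsymbol m)$ via $|A|\le\eta_s(\boldsymbol m)\le s$ and $k\le n$; you count SID parents as $\binom ns$ insertion positions times $s!$ orderings of the forced deficit multiset; and you count PID parents by fixing the vanished class set $A\in\mathcal V_s(\boldsymbol m)$, undoing the standardization through the order-preserving bijection onto $[k]\setminus A$, and summing the SID bound over $A$. Your remark that $\mstd$ is injective on words with a fixed support, so that $y$ and $A$ determine the unstandardized descendant, is the fact the paper uses implicitly.
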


\begin{proof}
Every member of $\mathcal V_s(\boldsymbol m)$ has cardinality at most $\eta_s(\boldsymbol m)\le s$, which establishes the first inequality in \eqref{eq:B-upper}. Because $k\le n$ and the sum contains at most $s+1$ terms, the second inequality follows.

For an SID descendant $y$, the missing multiplicities $\delta_a \triangleq m_a-N_a(y)$ are uniquely known and sum to $s$. Every parent is formed by choosing $s$ insertion positions and placing the missing multiset in some order. The number of such ordered descriptions is at most $\binom ns s!$.

Now let $y$ be a standardized PID descendant, and let $A$ denote the subset of original symbol classes that were completely eliminated. Necessarily, $A\in\mathcal V_s(\boldsymbol m)$ and $|A|=k-|\supp(y)|$. For each candidate set $A$, there exists a unique order-preserving bijection $\phi_A$ from $[|\supp(y)|]$ to $[k]\setminus A$. Replacing each symbol $j$ in $y$ by $\phi_A(j)$ yields an uncompressed word $y^{(A)}$. We retain $A$ only when the deficits $\delta_a \triangleq m_a-N_a(y^{(A)})$ are all non-negative and sum to $s$. The parents corresponding to this set $A$ are then generated by inserting the deficit multiset into $y^{(A)}$, which involves at most $\binom ns s!$ possibilities. Enumerating subsets of size at most $s$, insertion positions, and insertion orderings requires $n^{O(s)}$ time.
\end{proof}

\subsection{From a Single-Deletion Code to a $t$-Deletion Syndrome}
\label{mp:subsec:amplification}

In this subsection, we use the local algebraic fingerprint of Lemma~\ref{lem:local-fingerprint} to extend single-deletion-correcting codes to multiple deletions. The following lemma formalizes this two-stage amplification.

\begin{lemma}[Residual Graph Amplification]
\label{mp:lem:residual-amplification}
Let $\mathsf x\in\{\SID,\PID\}$ and let $b$ be a polynomial-time syndrome mapping from $\cM(\boldsymbol m)$ to $\mathcal B$ such that each syndrome class of $b$ corrects a single $\mathsf x$-deletion. Suppose that $|\mathcal B|\le M_0$, and that for every exact-$t$ output $y$ and every $\beta\in\mathcal B$,
\begin{equation}
 \bigl|\{x\in\cM(\boldsymbol m) : y\in\cD_t^{\mathsf x}(x),\, b(x)=\beta\}\bigr| \le L.
 \label{mp:eq:residual-parent-assumption}
\end{equation}
Assume also that the parents in \eqref{mp:eq:residual-parent-assumption} can be enumerated in $n^{O(t)}$ time. Setting $K \triangleq \max\left\{n,\binom nt L\right\}$, there exists a polynomial-time syndrome $\mathsf K_{b,t}$ from $\cM(\boldsymbol m)$ to $\mathcal S$ whose syndrome classes correct up to $t$ $\mathsf x$-deletions, with ambient range size satisfying
\begin{equation}
|\mathcal S|<64M_0K^2.
 \label{mp:eq:generic-range}
\end{equation}
The corresponding decoder runs in $n^{O(t)}$ time.
\end{lemma}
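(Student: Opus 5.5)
The plan is to repeat, in the multipermutation setting, the argument used for Corollary~\ref{cor:residual-degree} and Theorem~\ref{thm:inner-pid-syndrome}, with the hypotheses of the lemma standing in for the permutation-specific counts.

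\textbf{Residual graph and degree bound.} For each $\beta\in\mathcal B$, let $G_\beta$ be the graph on $b^{-1}(\beta)$ in which distinct $x,x'$ are adjacent if and only if $\cD_t^{\mathsf x}(x)\cap\cD_t^{\mathsf x}(x')\neq\varnothing$. A vertex $x$ has at most $\binom nt$ exact-$t$ descendants, since each is determined by the deleted index set. By \eqref{mp:eq:residual-parent-assumption}, each descendant has at most $L$ parents in the class. A union bound therefore gives $\Delta(G_\beta)\le\binom ntL\le K$. Neighborhoods can be enumerated in $n^{O(t)}$ time: list the descendants, enumerate their parents with the assumed $n^{O(t)}$ procedure, and filter by $b$.

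\textbf{Two rounds of fingerprinting.} Start from an injective proper coloring $c_0$. The natural choice is the rank of $x$ in $\cM(\boldsymbol m)$ in lexicographic order, which takes fewer than $n!$ values and is computable in polynomial time using multinomial counting.

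1. Choose the smallest primes $q_0>2nK$ and $q_1>4K$. By Bertrand's postulate, $q_0<4nK$ and $q_1<8K$.
2. In the first round, $\ell_0=\lceil\log_{q_0}(n!)\rceil\le n$ because $q_0>n$. Hence $K(\ell_0-1)<nK<q_0$, so Lemma~\ref{lem:local-fingerprint} applies and gives a proper coloring $c_1$ into $\mathbb F_{q_0}^2$.
3. Encode this pair as an integer below $q_0^2$. Since $K\ge n$, we have $q_0^2<16n^2K^2\le16K^4<q_1^4$. This gives $\ell_1\le4$ and $3K<q_1$, so a second application of Lemma~\ref{lem:local-fingerprint} gives a proper coloring $c_2$ into $\mathbb F_{q_1}^2$.

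The clause $K\ge n$ is built into the definition of $K$ precisely so that this chain of inequalities holds without any case analysis in $t$. Set $\mathsf K_{b,t}(x)\triangleq(b(x),c_{2,b(x)}(x))$. Its range has size at most $M_0q_1^2<64M_0K^2$, which is \eqref{mp:eq:generic-range}.

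\textbf{Correctness and complexity.} If $\mathsf K_{b,t}(x)=\mathsf K_{b,t}(x')$ with $x\neq x'$, then the two words lie in the same class $b^{-1}(\beta)$ and receive the same proper color, so they are not adjacent in $G_\beta$. Thus their exact-$t$ descendant sets are disjoint, and Lemma~\ref{mp:lem:mstd-transitivity} extends this to up-to-$t$ deletions.

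Computing $c_2$ requires only the radius-two neighborhood in $G_\beta$, which has size at most $K^2+K+1=n^{O(t)}$. Each vertex in it needs a rank computation and polynomial evaluations over $\mathbb F_{q_0}$ and $\mathbb F_{q_1}$.

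Decoding works as follows. Given an output after $s\le t$ deletions, enumerate all candidate parents in $\cM(\boldsymbol m)$ using Lemma~\ref{mp:lem:parent-enumeration}, which takes $n^{O(s)}$ time. Compute $\mathsf K_{b,t}$ for each candidate and return the unique match.

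\textbf{Expected main obstacle.} The step I expect to be hardest is justifying the enumeration and complexity claims from the stated hypothesis. The hypothesis only covers enumeration of exact-$t$ parents within a fixed $b$-class, while decoding needs parents after $s<t$ deletions. Lemma~\ref{mp:lem:parent-enumeration} supplies that enumeration for every $s$, so this step should go through. A smaller point to check is that the single-deletion property of $b$ is not actually used in this argument: it only enters through $L$, which the lemma takes as a hypothesis.
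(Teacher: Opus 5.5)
Your proposal is correct and follows the paper's proof essentially step for step. The shared elements are the residual graphs with degree at most $\binom{n}{t}L\le K$, the lexicographic rank as the initial injective coloring, two rounds of Lemma~\ref{lem:local-fingerprint} with the smallest primes exceeding $2nK$ and $4K$, the same chain $q_0^2<16n^2K^2\le 16K^4<q_1^4$ that relies on $K\ge n$, and decoding by full parent enumeration via Lemma~\ref{mp:lem:parent-enumeration}. The paper additionally disposes of the degenerate case $|\cM(\boldsymbol m)|=1$ separately, which you omit, but this is immaterial under the standing assumption $k\ge2$.
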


\begin{proof}
For each $\beta\in\mathcal B$, we construct the residual graph on the vertex set $b^{-1}(\beta)$, where two distinct words are adjacent if and only if they share an exact-$t$ $\mathsf x$-descendant. Each vertex generates at most $\binom nt$ exact-$t$ outputs, and each output admits at most $L$ compatible parents within the same $b$-syndrome class. Hence, every residual graph has maximum degree at most $K$, and its neighborhoods are enumerable in $n^{O(t)}$ time.

If $|\cM(\boldsymbol m)|=1$, the result is trivial. Otherwise, we order $\cM(\boldsymbol m)$ lexicographically and let $c_0(x)$ denote the zero-based lexicographic rank of $x$. This provides an injective, and therefore proper, coloring with $2\le U=|\cM(\boldsymbol m)|\le n!$, which can be computed in polynomial time. Let $p_0$ and $p_1$ be the smallest primes satisfying $p_0>2nK$ and $p_1>4K$. By Bertrand's postulate, $p_0<4nK$ and $p_1<8K$. Since $p_0>n$, we have $\left\lceil\log_{p_0}U\right\rceil\le n$ and $K(n-1)<p_0$. Applying Lemma~\ref{lem:local-fingerprint} to $c_0$ over $\mathbb F_{p_0}$ produces a proper coloring $c_1$ with at most $p_0^2$ colors. Because $K\ge n$, we have $p_0^2<16n^2K^2\le16K^4<p_1^4$, which implies $\lceil\log_{p_1}(p_0^2)\rceil\le4$. A second application of Lemma~\ref{lem:local-fingerprint} over $\mathbb F_{p_1}$ is valid because $3K<p_1$, yielding a proper coloring $c_2$ with fewer than $p_1^2<64K^2$ colors.

We define $\mathsf K_{b,t}(x) \triangleq (b(x),c_2(x))$. Within each $b$-class, equality under $c_2$ excludes graph adjacency, ensuring that codewords with identical syndromes share no exact-$t$ descendant. By Lemma~\ref{mp:lem:mstd-transitivity}, this guarantees correction of up to $t$ deletions. The range bound is given by \eqref{mp:eq:generic-range}. Computing both fingerprints requires only radius-two residual neighborhoods, which can be enumerated in $n^{O(t)}$ time. To decode, we enumerate all parents of the received word and retain the unique candidate matching the syndrome.
\end{proof}

\subsection{Inner Syndromes for Multipermutation SIDs}
\label{mp:subsec:sid-inner}

For a sequence $u=(u_1,\ldots,u_n)\in[k]^n$, its non-decreasing signature $\alpha(u)\in\{0,1\}^{n-1}$ is defined by
\begin{equation*}
      \alpha(u)_i \triangleq \mathbf 1\{u_i\le u_{i+1}\},
      \qquad i\in[n-1],
\end{equation*}
and we define the associated Varshamov-Tenengolts syndrome as
\begin{equation*}
    \VT_n(u) \triangleq \sum_{i=1}^{n-1}i\alpha(u)_i\pmod n.
\end{equation*}
Tenengolts' non-binary construction established that constraining both $\VT_n(u)$ and $\sum_i u_i\pmod k$ yields a single-deletion-correcting code~\cite{tenengolts1984nonbinary}. On the multipermutation space $\cM(\boldsymbol m)$, the symbol sum is fixed by the composition profile. Consequently, each syndrome class of $\VT_n$ inside $\cM(\boldsymbol m)$ corrects a single SID, and $\VT_n$ takes at most $n$ distinct values.

\begin{lemma}[Residual SID Parent List]
\label{mp:lem:sid-residual-list}
For every sequence $y\in[k]^{n-t}$ and every $a\in\mathbb Z_n$,
\begin{equation*}
 \bigl|\{x\in\cM(\boldsymbol m) : y\in\cD_t^{\SID}(x),\, \VT_n(x)=a\}\bigr|
 \le t!\binom{n-1}{t-1}.
\end{equation*}
Furthermore, this set can be enumerated in $n^{O(t)}$ time.
\end{lemma}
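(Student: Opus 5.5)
We mirror the proof of Lemma~\ref{lem:residual-parent-list}: peel off one deletion and then use the single-deletion property of a $\VT_n$ syndrome class.

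Fix $y\in[k]^{n-t}$ and $a\in\mathbb Z_n$, and let $x$ lie in the set under consideration. Then $y=x_{[n]\setminus I}$ for some $I$ with $|I|=t$. Choose any $i_0\in I$ and set $\rho\triangleq x_{[n]\setminus\{i_0\}}$, a word of length $n-1$. By construction $\rho\in\cD_1^{\SID}(x)$ and $y\in\cD_{t-1}^{\SID}(\rho)$. Because SIDs keep symbol labels, $\rho$ has composition $\boldsymbol m-e_{c}$ for the deleted symbol $c=x_{i_0}$. Hence $\rho$ is an SID parent of $y$ in which exactly one further symbol is missing relative to $\boldsymbol m$.

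The main counting step bounds the number of possible $\rho$. Every such $\rho$ is obtained from $y$ by inserting $t-1$ symbols drawn from the known deficit multiset $\{\delta_b\}$, where $\delta_b=m_b-N_b(y)$ and $\sum_b\delta_b=t$; one unit of the deficit is left out. For the insertions, choose the $t-1$ positions of $\rho$ that are not in $y$, giving $\binom{n-1}{t-1}$ choices, and then fill those positions in some order. Ordered fillings by $t-1$ of the $t$ missing labelled occurrences number at most $t!/1!=t!$, which covers the choice of which deficit unit remains missing. This gives at most $t!\binom{n-1}{t-1}$ intermediate words $\rho$. This is the same type of bound as Lemma~\ref{mp:lem:parent-enumeration}(i), applied at length $n-1$ with an extra factor $t$ for the leftover symbol, since $(t-1)!\cdot t=t!$. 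The counting needs care only in not double-counting; an overcount is harmless because we need only an upper bound.

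For each fixed $\rho$, every parent $x\in\cM(\boldsymbol m)$ of $\rho$ under a single SID with $\VT_n(x)=a$ lies in the $\VT_n$ syndrome class $a$, and that class corrects a single SID by the Tenengolts argument recalled above. So at most one such $x$ exists. A union bound over all $\rho$ gives the claimed bound $t!\binom{n-1}{t-1}$.

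For enumeration, we list all $\rho$ as above in $n^{O(t)}$ time. For each $\rho$ we try every single-symbol insertion that restores composition $\boldsymbol m$, of which there are at most $nk\le n^2$. We keep those $x$ with $\VT_n(x)=a$, which costs polynomial time per candidate, and finally remove duplicates. The main obstacle is making sure the single-SID-correcting property of $\VT_n$ classes within $\cM(\boldsymbol m)$ applies to $\rho$ regardless of which symbol was deleted. This holds because the class is a code in $\cM(\boldsymbol m)$ and any two members sharing the descendant $\rho$ must coincide.
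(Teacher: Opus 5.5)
Your proposal is correct and follows essentially the same route as the paper's proof: peel off one deletion to get an intermediate word $\rho$, bound the number of such $\rho$ by $t\cdot\binom{n-1}{t-1}(t-1)!=t!\binom{n-1}{t-1}$, and use the single-SID property of each $\VT_n$ class in $\cM(\boldsymbol m)$ to get at most one parent per $\rho$, followed by a union bound. Your counting via ordered selections of $t-1$ of the $t$ missing occurrences gives the same number. Your enumeration step, which re-inserts the single missing symbol into each $\rho$, is just a more explicit version of the paper's remark.
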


\begin{proof}
Let $\Delta$ denote the missing multiset of size $t$ determined by $y$ and $\boldsymbol m$. For any parent $x$, select one of the $t$ deleted elements as the first deletion, and let $\rho$ denote the resulting word of length $n-1$. Then $y$ is obtained from $\rho$ via $t-1$ SIDs. To enumerate all valid intermediate sequences $\rho$, we select the symbol of the first deleted element and insert the remaining $t-1$ elements of $\Delta$ into $y$. This provides at most $t\binom{n-1}{t-1}(t-1)!=t!\binom{n-1}{t-1}$ configurations. For each fixed $\rho$, the single-deletion property of $\VT_n$ permits at most one parent $x$ with $\VT_n(x)=a$. This enumeration directly gives an $n^{O(t)}$-time algorithm.
\end{proof}

\begin{theorem}[Inner Multipermutation SID Syndrome]
\label{mp:thm:sid-inner}
For every fixed integer $t\ge1$, every composition profile $\boldsymbol m$, and all $n>t$, there exists an inner syndrome map $\mathsf K^{\SID}_{\boldsymbol m,t}$ from $\cM(\boldsymbol m)$ to $\mathcal S^{\SID}_{\boldsymbol m,t}$ whose syndrome classes correct up to $t$ SIDs, satisfying
\begin{equation*}
       |\mathcal S^{\SID}_{\boldsymbol m,t}| = O_t(n^{4t-1}).
\end{equation*}
The syndrome evaluation and its corresponding decoder run in $n^{O(t)}$ time.
\end{theorem}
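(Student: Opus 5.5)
The proof applies Lemma~\ref{mp:lem:residual-amplification} with $\mathsf x=\SID$ and the base syndrome $b\triangleq\VT_n$ restricted to $\cM(\boldsymbol m)$. The hypotheses of that lemma have to be checked one at a time.

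First, the single-deletion property. Tenengolts' construction constrains both $\VT_n(u)$ and $\sum_i u_i \pmod k$. On $\cM(\boldsymbol m)$ the symbol sum $\sum_a a\,m_a$ is fixed, so every class $\VT_n^{-1}(a)\cap\cM(\boldsymbol m)$ lies inside a single Tenengolts code and therefore corrects one SID. The syndrome takes values in $\mathcal B=\mathbb Z_n$, so we may take $M_0=n$. It is computable in $O(n)$ time.

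Second, the residual parent bound \eqref{mp:eq:residual-parent-assumption} and its enumerability. Lemma~\ref{mp:lem:sid-residual-list} gives both directly with $L=t!\binom{n-1}{t-1}$. The lemma is stated for $y\in[k]^{n-t}$, and every exact-$t$ SID output of a word in $\cM(\boldsymbol m)$ has this form, so it applies.

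With these inputs, $K=\max\{n,\binom nt t!\binom{n-1}{t-1}\}$. Using $\binom nt\le n^t$ and $\binom{n-1}{t-1}\le n^{t-1}$, the second term is at most $t!\,n^{2t-1}$, so $K\le t!\,n^{2t-1}$ for $n>t$ (this also dominates $n$). Lemma~\ref{mp:lem:residual-amplification} then yields $\mathsf K^{\SID}_{\boldsymbol m,t}\triangleq\mathsf K_{\VT_n,t}$, whose classes correct up to $t$ SIDs, with range
\[
|\mathcal S^{\SID}_{\boldsymbol m,t}|<64M_0K^2\le 64\,(t!)^2\,n\cdot n^{4t-2}=O_t(n^{4t-1}).
\]
The same lemma supplies the $n^{O(t)}$ evaluation and decoding times. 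For $t=1$ we could simply use $\VT_n$ itself, but the amplified bound $O(n^3)$ is still consistent with the claim.

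The only step that needs care is the first: confirming that Tenengolts' guarantee transfers to fixed-composition classes. This requires the multipermutation SID to be an ordinary non-binary deletion, which holds because symbol labels are retained. It also requires that fixing the composition pins down the symbol sum modulo $k$, which is immediate. The remaining edge case $|\cM(\boldsymbol m)|=1$ is already handled inside Lemma~\ref{mp:lem:residual-amplification}.
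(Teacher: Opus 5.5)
Your proposal is correct and is the same argument as the paper's: apply Lemma~\ref{mp:lem:residual-amplification} with $b=\VT_n$, $M_0=n$, and $L=t!\binom{n-1}{t-1}$ from Lemma~\ref{mp:lem:sid-residual-list}, which gives $K=O_t(n^{2t-1})$ and a range of size $64M_0K^2=O_t(n^{4t-1})$. Your explicit checks of the hypotheses (the Tenengolts guarantee transferring to fixed-composition classes, and $K\le t!\,n^{2t-1}$ dominating $n$) fill in details the paper leaves implicit.
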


\begin{proof}
Apply Lemma~\ref{mp:lem:residual-amplification} with $b=\VT_n$, $M_0=n$, and $L=t!\binom{n-1}{t-1}=O_t(n^{t-1})$. Then $K=O_t(n^{2t-1})$, and the bound in \eqref{mp:eq:generic-range} yields $64M_0K^2=O_t(n^{4t-1})$.
\end{proof}

\subsection{Profile-Sensitive Inner Syndromes for Multipermutation PIDs}
\label{mp:subsec:pid-inner}

When every multiplicity in $\boldsymbol m$ is at least two, a single deletion cannot eliminate an entire symbol class. Hence, tie-preserving standardization leaves all symbol values unchanged, rendering a single PID identical to a single SID. When singleton classes exist, we instead apply the stable lift and utilize the two-coordinate single-PID syndrome $\nu_n$ from Section~\ref{subsec:inner-pid-syndrome}. We define the base syndrome
\begin{equation}
 b^{\PID}_{\boldsymbol m}(x) \triangleq
 \begin{cases}
   \VT_n(x),&\varepsilon(\boldsymbol m)=0,\\
   \nu_n(\Lift(x)),&\varepsilon(\boldsymbol m)=1.
 \end{cases}
 \label{mp:eq:pid-base-syndrome}
\end{equation}
The range of $b^{\PID}_{\boldsymbol m}$ contains at most $n^{1+\varepsilon(\boldsymbol m)}$ values.

\begin{lemma}[One-PID Base Partition]
\label{mp:lem:one-pid-base}
Every syndrome class of $b^{\PID}_{\boldsymbol m}$ corrects a single multipermutation PID.
\end{lemma}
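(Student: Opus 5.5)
I would split according to the value of $\varepsilon(\boldsymbol m)$, as the definition \eqref{mp:eq:pid-base-syndrome} does.

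\emph{Case $\varepsilon(\boldsymbol m)=0$.} Every multiplicity satisfies $m_a\ge2$, so a single deletion from $x\in\cM(\boldsymbol m)$ leaves every symbol class with at least one occurrence. Then $\supp(x_{[n]\setminus I})=[k]$ whenever $|I|=1$. Hence $\mstd$ acts as the identity on the surviving word, and $\operatorname{del}^{\PID}_I(x)=\operatorname{del}^{\SID}_I(x)$. The one-PID descendant sets therefore coincide with the one-SID descendant sets for every $x\in\cM(\boldsymbol m)$. Two words in the same class of $b^{\PID}_{\boldsymbol m}=\VT_n$ with a common one-PID descendant would then share a one-SID descendant. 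This contradicts the single-SID correction of $\VT_n$-classes inside $\cM(\boldsymbol m)$ from Tenengolts' construction, as recalled before Lemma~\ref{mp:lem:sid-residual-list}, since the symbol sum is fixed on $\cM(\boldsymbol m)$. The up-to-one statement also covers zero deletions, and distinct words have distinct zero-deletion outputs.

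\emph{Case $\varepsilon(\boldsymbol m)=1$.} Suppose $x\ne x'$ in $\cM(\boldsymbol m)$ satisfy $\nu_n(\Lift(x))=\nu_n(\Lift(x'))$ and share a common one-PID descendant $y=\mstd(x_J)=\mstd(x'_{J'})$. Applying $\Lift$ and Lemma~\ref{mp:lem:stable-lift} gives
\begin{equation*}
\Lift(y)=\std(\Lift(x)_J)=\std(\Lift(x')_{J'}).
\end{equation*}
Thus $\Lift(y)$ is a common single PID descendant of the permutations $\Lift(x)$ and $\Lift(x')$. These permutations have the same $\nu_n$ value, so Lemma~\ref{lem:single-pid-partition} forces $\Lift(x)=\Lift(x')$. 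Injectivity of $\Lift$ on $\cM(\boldsymbol m)$, from Lemma~\ref{mp:lem:stable-lift}, then gives $x=x'$, a contradiction.

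The main point to check is that the commutation \eqref{eq:lift-commutation} is used with $\Lift$ applied to the standardized word $y$ itself. This is fine because that identity holds for every word $u$ and every $J$, not only for members of $\cM(\boldsymbol m)$. Lemma~\ref{lem:single-pid-partition} needs only that both lifts are permutations in $\mathcal S_n$ with equal syndrome, so no compatibility with $\cL_{\boldsymbol m}$ is required.
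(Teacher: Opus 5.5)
Your proof is correct and takes the same route as the paper. For $\varepsilon(\boldsymbol m)=0$ you identify single PIDs with single SIDs and invoke Tenengolts' $\VT_n$ property; for $\varepsilon(\boldsymbol m)=1$ you push a common single-PID output through the stable-lift commutation \eqref{eq:lift-commutation}, apply Lemma~\ref{lem:single-pid-partition}, and finish with injectivity of $\Lift$. Your remarks on the zero-deletion case and on applying the commutation to arbitrary words make explicit details that the paper leaves implicit.
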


\begin{proof}
If $\varepsilon(\boldsymbol m)=0$, all symbol classes survive one deletion, so that $\operatorname{del}^{\PID}_I(x)=\operatorname{del}^{\SID}_I(x)$ for $|I|=1$. The claim then follows from Tenengolts' construction. If $\varepsilon(\boldsymbol m)=1$ and two words share a single-PID output, Lemma~\ref{mp:lem:stable-lift} guarantees that their stable lifts share a single-PID output. The single-PID capability of $\nu_n$ separates the two lifted permutations, and the bijectivity of $\Lift$ separates the original words.
\end{proof}

\begin{lemma}[Residual PID Parent List]
\label{mp:lem:pid-residual-list}
For every standardized sequence $y$ of length $n-t$ and every syndrome value $\beta$ of $b^{\PID}_{\boldsymbol m}$,
\begin{equation*}
 \bigl|\{x\in\cM(\boldsymbol m) : y\in\cD_t^{\PID}(x),\, b^{\PID}_{\boldsymbol m}(x)=\beta\}\bigr|
 \le B_t(\boldsymbol m)\,t!\binom{n-1}{t-1}.
\end{equation*}
Furthermore, this set can be enumerated in $n^{O(t)}$ time.
\end{lemma}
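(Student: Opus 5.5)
We follow the template of Lemma~\ref{mp:lem:sid-residual-list}, combined with the alphabet-recovery step from Lemma~\ref{mp:lem:parent-enumeration}(ii). The plan has three steps: fix which symbol classes vanished, peel off one deletion to reach an intermediate word, and use the single-PID property of $b^{\PID}_{\boldsymbol m}$ to pin down the parent from that intermediate word.

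\textbf{Step 1: fix the set of eliminated classes.} Let $x$ be any parent in the set, with $y=\operatorname{del}^{\PID}_I(x)$ and $|I|=t$. Let $A\subseteq[k]$ be the set of symbol classes of $x$ that are completely eliminated. Then $w_{\boldsymbol m}(A)\le t$, so $A\in\mathcal V_t(\boldsymbol m)$, and there are at most $B_t(\boldsymbol m)$ choices for $A$. For a fixed $A$, the order-preserving bijection $\phi_A$ turns $y$ into the uncompressed word $y^{(A)}=x_{[n]\setminus I}$, exactly as in the proof of Lemma~\ref{mp:lem:parent-enumeration}. The deficit multiset $\Delta_A$ of size $t$ is then determined by $\boldsymbol m$. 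If some deficit is negative, or the deficits do not sum to $t$, we discard $A$.

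\textbf{Step 2: count intermediate words.} Fix $A$ and choose one element $i_0\in I$ as the ``first'' deletion. Put $\rho\triangleq\operatorname{del}^{\PID}_{\{i_0\}}(x)$, a word of length $n-1$. By Lemma~\ref{mp:lem:mstd-transitivity}, $y=\operatorname{del}^{\PID}_{I'}(\rho)$ for the corresponding set $I'$ of $t-1$ surviving indices.

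To bound the possible $\rho$, we work in the unstandardized world. The word $\rho^\circ\triangleq x_{[n]\setminus\{i_0\}}$ arises from $y^{(A)}$ by inserting the $t-1$ symbols of $\Delta_A$ other than $x_{i_0}$. There are at most $t$ choices for the symbol $x_{i_0}\in\Delta_A$, at most $\binom{n-1}{t-1}$ choices for the insertion positions in a word of length $n-1$, and at most $(t-1)!$ orderings. Since $\rho=\mstd(\rho^\circ)$ is a function of $\rho^\circ$, this gives at most $t!\binom{n-1}{t-1}$ intermediate words $\rho$ per choice of $A$.

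\textbf{Step 3: at most one parent per intermediate word.} By Lemma~\ref{mp:lem:one-pid-base}, the class $\{x : b^{\PID}_{\boldsymbol m}(x)=\beta\}$ corrects a single multipermutation PID. Hence each $\rho$ has at most one parent $x$ in this class with $\rho\in\cD_1^{\PID}(x)$. Taking the union over the choices of $A$ and $\rho$ yields the bound $B_t(\boldsymbol m)\,t!\binom{n-1}{t-1}$.

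\textbf{Enumeration.} The procedure is constructive. We enumerate $A$ over subsets of size at most $t$, then the candidates $\rho$, then the at most $B_1(\boldsymbol m)\cdot n$ single-PID parents of each $\rho$ via Lemma~\ref{mp:lem:parent-enumeration}. For each candidate we check that it lies in $\cM(\boldsymbol m)$, has syndrome $\beta$, and has $y$ as an exact-$t$ PID descendant. By \eqref{eq:B-upper}, all of this runs in $n^{O(t)}$ time.

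\textbf{Main obstacle.} The delicate point is Step 2. We must make sure that fixing $A$ really determines $y^{(A)}$ and the inserted multiset, so that counting unstandardized insertions legitimately bounds the standardized intermediate words $\rho$. This needs care because $\rho$ may itself have lost the class of $x_{i_0}$, and then its alphabet is compressed relative to $\rho^\circ$. Our approach avoids the issue by enumerating $\rho^\circ$ and applying $\mstd$ only at the end; the map $\rho^\circ\mapsto\rho$ can merge candidates but never create new ones, so the count is unaffected.
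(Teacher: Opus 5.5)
Your proposal is correct and matches the paper's proof step for step. Like the paper, you fix the eliminated class set $A\in\mathcal V_t(\boldsymbol m)$ and uncompress $y$ to $y^{(A)}$. You then form the length-$(n-1)$ intermediate word by inserting $t-1$ of the $t$ missing symbols before applying $\mstd$, and use Lemma~\ref{mp:lem:one-pid-base} to allow at most one parent per intermediate word. Your enumeration of the single-PID parents of each $\rho$ via Lemma~\ref{mp:lem:parent-enumeration} supplies an algorithmic detail the paper leaves implicit.
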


\begin{proof}
Fix a parent $x$ and let $A$ denote the set of its symbol classes that were completely deleted in forming $y$. Then $A\in\mathcal V_t(\boldsymbol m)$ and $|A|=k-|\supp(y)|$. As in the proof of Lemma~\ref{mp:lem:parent-enumeration}, we uncompress $y$ via the order-preserving bijection from its normalized alphabet to $[k]\setminus A$, denoting the result by $y^{(A)}$. Relative to the original profile, $y^{(A)}$ is missing a multiset $\Delta_A$ of exactly $t$ elements.

Select one of these $t$ elements as the first deletion from $x$. Prior to standardization, the corresponding sequence of length $n-1$ is obtained by inserting the remaining $t-1$ elements of $\Delta_A$ into $y^{(A)}$. Applying $\mstd$ produces an intermediate standardized word $\rho$ satisfying $\rho\in\cD_1^{\PID}(x)$ and $y\in\cD_{t-1}^{\PID}(\rho)$. For a fixed set $A$, there are at most $t\binom{n-1}{t-1}(t-1)!=t!\binom{n-1}{t-1}$ configurations for $\rho$, and there are at most $B_t(\boldsymbol m)$ choices for $A$. Finally, for each fixed $\rho$, Lemma~\ref{mp:lem:one-pid-base} ensures that at most one parent $x$ possesses the base syndrome $\beta$. Enumerating these configurations provides the claimed algorithm.
\end{proof}

\begin{theorem}[Profile-Sensitive Inner PID Syndrome]
\label{mp:thm:pid-inner-direct}
For every fixed integer $t\ge1$, there exists a polynomial-time syndrome $\mathsf K^{\PID,\mathrm{dir}}_{\boldsymbol m,t}$ from $\cM(\boldsymbol m)$ to $\mathcal S^{\PID,\mathrm{dir}}_{\boldsymbol m,t}$ whose syndrome classes correct up to $t$ multipermutation PIDs, with ambient range size satisfying
\begin{equation}
 |\mathcal S^{\PID,\mathrm{dir}}_{\boldsymbol m,t}|=O_t\!\left(B_t(\boldsymbol m)^2n^{4t-1+\varepsilon(\boldsymbol m)}\right)=O_t\!\left(n^{4t-1+\varepsilon(\boldsymbol m)+2\eta_t(\boldsymbol m)}\right).
 \label{mp:eq:direct-pid-range-exponent}
\end{equation}
The corresponding decoder runs in $n^{O(t)}$ time.
\end{theorem}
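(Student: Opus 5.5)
The plan is to apply Lemma~\ref{mp:lem:residual-amplification} with $\mathsf x=\PID$, base syndrome $b=b^{\PID}_{\boldsymbol m}$ from \eqref{mp:eq:pid-base-syndrome}, and residual list size $L$ taken from Lemma~\ref{mp:lem:pid-residual-list}. The syndrome $\mathsf K^{\PID,\mathrm{dir}}_{\boldsymbol m,t}$ will then be the amplified syndrome $\mathsf K_{b,t}$. Its correcting property, polynomial-time evaluation and $n^{O(t)}$ decoder all come directly from that lemma. What remains is to check the hypotheses and carry out the range arithmetic.

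\textbf{Checking the hypotheses.}
\begin{enumerate}[leftmargin=*,label=(\roman*)]
\item Single-deletion correction: by Lemma~\ref{mp:lem:one-pid-base}, every class of $b^{\PID}_{\boldsymbol m}$ corrects a single multipermutation PID.
\item Range of the base syndrome: it takes at most $M_0=n^{1+\varepsilon(\boldsymbol m)}$ values. In the case $\varepsilon(\boldsymbol m)=0$ this is $\VT_n\in\mathbb Z_n$. In the case $\varepsilon(\boldsymbol m)=1$ it is $\nu_n\in\mathbb Z_n^2$ applied to the stable lift.
\item Polynomial time: in the singleton case, $\nu_n(\Lift(x))$ is computable in polynomial time by Lemma~\ref{lem:single-pid-partition} and the explicit formula \eqref{mp:eq:stable-lift}.
\item Residual parent list: for every exact-$t$ standardized output $y$ and every $\beta$, Lemma~\ref{mp:lem:pid-residual-list} gives at most $L=B_t(\boldsymbol m)\,t!\binom{n-1}{t-1}$ compatible parents. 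These parents are enumerable in $n^{O(t)}$ time, which supplies the enumeration assumption of Lemma~\ref{mp:lem:residual-amplification}.
\end{enumerate}

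\textbf{Range arithmetic.} We have
\[
K=\max\Bigl\{n,\tbinom nt L\Bigr\}\le B_t(\boldsymbol m)\,t!\,n^{2t-1}
\]
for $n>t$. The bound holds because $B_t(\boldsymbol m)\ge1$ and $\binom nt\binom{n-1}{t-1}t!\ge n$. Substituting into \eqref{mp:eq:generic-range} gives
\[
|\mathcal S|<64M_0K^2\le 64(t!)^2B_t(\boldsymbol m)^2n^{4t-1+\varepsilon(\boldsymbol m)},
\]
which is the first expression in \eqref{mp:eq:direct-pid-range-exponent}. The second expression follows by inserting \eqref{eq:B-upper} with $s=t$, namely $B_t(\boldsymbol m)\le(t+1)n^{\eta_t(\boldsymbol m)}$. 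Squaring gives the factor $n^{2\eta_t(\boldsymbol m)}$, and the constant $(t+1)^2$ is absorbed into $O_t$.

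\textbf{Expected obstacle.} The only delicate step is confirming that Lemma~\ref{mp:lem:pid-residual-list} holds uniformly for every output $y$, including outputs with reduced support. Lemma~\ref{mp:lem:residual-amplification} needs the bound for every exact-$t$ output and every syndrome value. Lemma~\ref{mp:lem:pid-residual-list} covers this case because it sums over all admissible vanished sets $A\in\mathcal V_t(\boldsymbol m)$.

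I would also verify that degenerate profiles are harmless. If $|\cM(\boldsymbol m)|=1$, the amplification lemma already handles it trivially. Otherwise the Lehmer-style lexicographic rank coloring requires $U\ge2$, which holds.
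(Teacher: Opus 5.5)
Your proposal is correct and follows the paper's own proof. Like the paper, you apply Lemma~\ref{mp:lem:residual-amplification} to $b=b^{\PID}_{\boldsymbol m}$ with $M_0\le n^{1+\varepsilon(\boldsymbol m)}$ and $L$ from Lemma~\ref{mp:lem:pid-residual-list}, giving $K=O_t\bigl(B_t(\boldsymbol m)n^{2t-1}\bigr)$, and then use \eqref{eq:B-upper} to turn $B_t(\boldsymbol m)^2$ into $O_t(n^{2\eta_t(\boldsymbol m)})$; you just make the constants explicit.
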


\begin{proof}
Apply Lemma~\ref{mp:lem:residual-amplification} to $b=b^{\PID}_{\boldsymbol m}$. Lemma~\ref{mp:lem:pid-residual-list} gives $L=O_t\bigl(B_t(\boldsymbol m)n^{t-1}\bigr)$ and $K=O_t\bigl(B_t(\boldsymbol m)n^{2t-1}\bigr)$, while the base range has cardinality at most $n^{1+\varepsilon(\boldsymbol m)}$. Substituting these parameters into \eqref{mp:eq:generic-range} and applying Lemma~\ref{mp:lem:parent-enumeration} establishes \eqref{mp:eq:direct-pid-range-exponent}.
\end{proof}

A profile-independent alternative can also be obtained directly from the permutation PID syndrome. By Lemma~\ref{mp:lem:stable-lift}, every common multipermutation PID descendant lifts to a common ordinary permutation PID descendant. Hence, the lifted syndrome defined by $\mathsf K^{\PID,\mathrm{lift}}_{\boldsymbol m,t}(x) \triangleq \kappa_{n,t}(\Lift(x))$, where $\kappa_{n,t}$ is the syndrome from Theorem~\ref{thm:inner-pid-syndrome}, also corrects $t$ multipermutation PIDs with $O_t(n^{6t-2})$ possible values. In our final construction, we choose whichever syndrome minimizes the required number of redundancy markers.

\subsection{General Fixed-Composition Full-Systematic Construction}
\label{mp:subsec:systematic}

For an integer $r\ge2t$, let $\mathcal A_{n,r,t}\subseteq\{0,\ldots,n\}^r$ denote the canonical outer code class from Theorem~\ref{thm:pid-outer-partition}. We utilize the established properties that $|\mathcal A_{n,r,t}|=\Omega_{r,t}(n^{r-t})$, that its projected output sets $\cT_{\le t}(z)$ are pairwise disjoint, and that it supports polynomial-time construction, ranking, and unranking.

We augment the message profile by appending $r$ singleton marker classes,
\begin{equation*}
   \boldsymbol m^{+r} \triangleq (m_1,\ldots,m_k,
\underbrace{1,\ldots,1}_{r\text{ times}}).
\end{equation*}
Recall that the labeled-gap word $W_n(z)$ from Section~\ref{subsec:pid-outer} contains $n$ copies of the placeholder $D$ and one copy of each marker label $1,\ldots,r$. For $x=(x_1,\ldots,x_n)\in\cM(\boldsymbol m)$, we define the interleaved sequence $\Int^{\boldsymbol m}_{z}(x) \in\cM(\boldsymbol m^{+r})$ by replacing the $i$-th occurrence of $D$ in $W_n(z)$ with $x_i$, and replacing marker $j$ with the symbol $k+j$. Deleting the fixed marker symbols $\{k+1,\ldots,k+r\}$ recovers $x$, ensuring that the encoding is full-systematic.

\begin{lemma}[Compatibility With Outer Channel]
\label{mp:lem:outer-compatibility}
Let $c=\Int^{\boldsymbol m}_{z}(x)$ and let $y$ be obtained from $c$ after $d\le t$ deletions.
\begin{enumerate}[label=\textup{(\roman*)}]
\item Under SIDs, replace each surviving data symbol in $[k]$ with $D$ and each surviving marker $k+j$ with $j$, obtaining a sequence $w$. Let $E$ denote the set of deleted marker labels, let $e \triangleq |E|$, and calculate $\delta \triangleq d-e$. Replacing the $\delta$ smallest surviving marker labels in $w$ with $D$ and relabeling the remaining markers increasingly by $1,\ldots,r-d$ yields a word in $\cT_{\le t}(z)$.
\item Under PIDs, we have $\Lift(c)=\Int_z(\Lift(x))$ and $\Theta_n(\Lift(y))\in\cT_{\le t}(z)$, where $\Theta_n$ is the projection defined in Section~\ref{subsec:pid-outer}.
\end{enumerate}
Consequently, in either channel, the outer dictionary from Theorem~\ref{thm:pid-outer-partition} recovers $z$ uniquely whenever $z\in\mathcal A_{n,r,t}$.
\end{lemma}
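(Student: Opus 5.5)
The proof reduces both parts to results already established for permutations, namely Lemma~\ref{lem:sid-to-outer-projection}, Lemma~\ref{lem:pid-projection} and Theorem~\ref{thm:pid-outer-partition}.

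\emph{Part (i).} Under SIDs the marker symbols $k+1,\ldots,k+r$ keep their labels. The data symbols in $[k]$ sit exactly at the positions of the copies of $D$ in $W_n(z)$.

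Replace every surviving data symbol by $D$ and every marker $k+j$ by $j$. The word $w$ obtained this way is then a stable descendant of $W_n(z)$, produced by deleting the same $d$ positions. The deleted markers are exactly the labels missing from $w$, so $E$, $e$ and $\delta=d-e$ are read off from $w$ and $d$.

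The operation described in the statement is precisely $\Lambda_{\delta,E}(w)$. Lemma~\ref{lem:sid-to-outer-projection} therefore gives $\Lambda_{\delta,E}(w)\in\cT_{\le t}(z)$ directly.

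\emph{Part (ii), first claim.} We verify $\Lift(c)=\Int_z(\Lift(x))$ from the definition \eqref{mp:eq:stable-lift}. Each data symbol $a\in[k]$ in $c$ receives the label $\sum_{b<a}m_b$ plus its occurrence index. The marker classes $k+1,\ldots,k+r$ all exceed every data symbol, so they do not contribute to the sums $\sum_{b<a}N_b$ for data classes.

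Consequently, the data entries of $\Lift(c)$, read in positional order, are exactly $\Lift(x)$, with values in $[n]$. The marker $k+j$ is a singleton class whose label is $n+(j-1)+1=n+j$. Its position is the one prescribed by $W_n(z)$. This is exactly $\Int_z(\Lift(x))$.

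\emph{Part (ii), second claim.} Let $y=\mstd(c_J)$ for the surviving set $J$ with $|J|=n+r-d$. The commutation identity \eqref{eq:lift-commutation} of Lemma~\ref{mp:lem:stable-lift} gives
\begin{equation*}
\Lift(y)=\std\bigl(\Lift(c)_J\bigr)=\std\bigl(\Int_z(\Lift(x))_J\bigr).
\end{equation*}
So $\Lift(y)$ is a PID descendant of $\Int_z(\Lift(x))$ with $d$ deletions. Since $\Lift(x)\in\mathcal S_n$, Lemma~\ref{lem:pid-projection} yields $\Theta_n(\Lift(y))\in\cT_{\le t}(z)$.

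\emph{The consequence.} In either channel the receiver computes a word that lies in $\cT_{\le t}(z)$, and this word depends only on $y$. Theorem~\ref{thm:pid-outer-partition} states that the sets $\cT_{\le t}(z')$ are pairwise disjoint for $z'\in\mathcal A_{n,r,t}$. Hence the outer dictionary returns exactly $z$.

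\emph{Main obstacle.} The delicate point is the label bookkeeping in the lift identity $\Lift(c)=\Int_z(\Lift(x))$. Two facts must be checked: that appending the marker classes after the data classes does not shift any data labels, and that the marker labels come out as exactly $n+j$. Both follow because the markers form the largest $r$ classes and each occurs once. A second point to confirm is that in part (i) the receiver can compute $\delta$ without knowing $x$; this holds because the marker labels survive unchanged under SIDs.
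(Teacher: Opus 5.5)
Your proposal is correct and follows essentially the same route as the paper's proof. Part (i) reduces to Lemma~\ref{lem:sid-to-outer-projection}, part (ii) checks that the lift assigns labels $[n]$ to data and $n+j$ to marker $k+j$ and then combines Lemma~\ref{mp:lem:stable-lift} with Lemma~\ref{lem:pid-projection}, and uniqueness follows from disjointness of the projected output sets in $\mathcal A_{n,r,t}$; you simply spell out more explicitly the label bookkeeping, that $w$ is a stable descendant of $W_n(z)$, and that the receiver can compute $E$ and $\delta$.
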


\begin{proof}
Part (i) follows from the SID-to-projected-PID reduction in Lemma~\ref{lem:sid-to-outer-projection}. The stable outer output reveals the deleted marker set $E$ and the number $\delta$ of deleted data symbols, and the masking and relabeling steps reproduce the definition of $\cT_{\le t}(z)$.

For part (ii), all data symbols precede all marker symbols in value. The stable lift assigns labels $1,\ldots,n$ to the data occurrences identically to $\Lift(x)$, and assigns the label $n+j$ to marker $k+j$. This establishes $\Lift(c)=\Int_z(\Lift(x))$. By Lemma~\ref{mp:lem:stable-lift}, $\Lift(y)$ is a permutation-PID descendant of $\Lift(c)$, so $\Theta_n(\Lift(y))\in\cT_{\le t}(z)$ follows from Lemma~\ref{lem:pid-projection}. Disjointness of the projected output sets inside $\mathcal A_{n,r,t}$ ensures unique identification of $z$ in both cases.
\end{proof}

\begin{lemma}[Common PID Output Induces Common Data Subsequence]
\label{mp:lem:common-data-pid}
Let $x,x'\in\cM(\boldsymbol m)$ and $z,z'\in\{0,\ldots,n\}^r$. If $\Int^{\boldsymbol m}_{z}(x)$ and $\Int^{\boldsymbol m}_{z'}(x')$ produce a common output after $d\le t$ PIDs, then $x$ and $x'$ share a common multipermutation PID descendant after at most $d$ deletions.
\end{lemma}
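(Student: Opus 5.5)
We mirror the permutation argument of Lemma~\ref{lem:common-data-pattern}, using tie-preserving standardization in place of $\std$ and Lemma~\ref{mp:lem:mstd-transitivity} in place of Lemma~\ref{lem:std-transitive}. Let $c=\Int^{\boldsymbol m}_{z}(x)$ and $c'=\Int^{\boldsymbol m}_{z'}(x')$, and let $y$ be the common output. Write $y=\mstd(c_{J})=\mstd(c'_{J'})$, where $J,J'$ are the surviving position sets, each of size $n+r-d$. Let $\delta$ and $\delta'$ be the numbers of deleted data positions in $c$ and $c'$. The key structural fact is that every data symbol (in $[k]$) is strictly smaller than every marker symbol (in $[k+1,k+r]$). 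Tie-preserving standardization preserves the relative order and the ties, so in $y$ the entries coming from surviving data positions of $c$ form a down-set of values. That is, they are exactly the entries of $y$ with value at most $s\triangleq|\supp(x_{I})|$, where $I$ denotes the surviving data positions of $c$. Likewise, for $c'$ they are the entries with value at most $s'\triangleq|\supp(x'_{I'})|$.

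Now we compare the two down-sets. Each marker class is a singleton, so the number of entries of $y$ coming from markers equals $r-e$ for $c$ and $r-e'$ for $c'$. Consequently the number of entries with value at most $s$ is $n-\delta$, and the number with value at most $s'$ is $n-\delta'$. Assume without loss of generality that $s\le s'$, so $n-\delta\le n-\delta'$. Let $u\triangleq y|_{[s]}$ be the subsequence of entries of $y$ with value at most $s$. From the side of $c$, $u$ is exactly the data part, and Lemma~\ref{mp:lem:mstd-transitivity} gives $u=\mstd(x_{I})$, a PID descendant of $x$ after $\delta\le d$ deletions. From the side of $c'$, $u$ is the restriction of $\mstd(x'_{I'})=y|_{[s']}$ to its entries of value at most $s$. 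This is a subsequence obtained by deleting a further $(n-\delta')-(n-\delta)=\delta-\delta'$ positions, and since values $\le s$ form a down-set, it is already tie-preserving standardized. Applying Lemma~\ref{mp:lem:mstd-transitivity} again, $u=\mstd(x'_{I''})$ for some $I''\subseteq I'$ with $|I''|=n-\delta$. Thus $u$ is a PID descendant of $x'$ after exactly $\delta$ deletions as well. Hence $x$ and $x'$ share a common PID descendant after $\delta\le d$ deletions.

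The main obstacle is the down-set and tie bookkeeping. We must verify that the data entries of $y$ really occupy the smallest values, so that markers can never tie with data. This holds because the marker classes are distinct symbols larger than all of $[k]$. We must also check that restricting a tie-preserving standardized word to a value down-set yields a tie-preserving standardized word. Unlike the permutation case, the number of distinct data values $s$ need not equal $n-\delta$ because of repetitions. For this reason we will phrase the comparison via the value thresholds $s,s'$ and the counts $n-\delta$, $n-\delta'$, and never identify the two.
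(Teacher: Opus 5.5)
Your proposal is correct and follows the paper's argument. Data values lie below all marker values, so in both explanations the data entries form a value down-set of $y$; restricting $y$ to the smaller threshold then gives a common descendant, and Lemma~\ref{mp:lem:mstd-transitivity} accounts for the extra deletions on the $x'$ side. The only difference is cosmetic: you count those extra deletions directly as $\delta-\delta'$, whereas here the paper uses the distinct-symbol identity $a+r-e=a'+r-e'$ together with the singleton markers, and your direct count is exactly how the paper handles the regular-profile analogue, Lemma~\ref{mp:lem:regular-common-data-pid}.
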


\begin{proof}
Let $y$ denote the common standardized output. Under the first hypothesis, let $e$ markers and $\delta \triangleq d-e$ data occurrences be deleted, and let $a$ be the number of surviving data classes. Under the second hypothesis, let $e'$, $\delta' \triangleq d-e'$, and $a'$ be defined analogously. Because the total number of distinct symbols in $y$ is identical in both cases, we have $a+r-e=a'+r-e'$. Without loss of generality, assume $a\le a'$ and define $h \triangleq a'-a$. This implies $e'=e+h$ and $\delta=\delta'+h$.

Under the first explanation, the symbols $1,\ldots,a$ in $y$ represent data classes, while all larger symbols represent surviving singleton markers. Under the second explanation, the symbols $1,\ldots,a+h$ represent data classes. Hence, the symbols $a+1,\ldots,a+h$ are markers in the first case but data symbols in the second. Each such symbol appears exactly once in $y$ because marker classes are singletons. Restricting $y$ to its entries in $[a]$ yields a sequence $v$. Under the first explanation, $v$ is the standardized data output obtained from $x$ after $\delta$ deletions. Under the second explanation, retaining data entries in $[a+h]$ yields a $\delta'$-PID output of $x'$, from which deleting the single occurrence of each class $a+1,\ldots,a+h$ reproduces $v$ by Lemma~\ref{mp:lem:mstd-transitivity}. Thus, $v$ is also a PID descendant of $x'$ after $\delta'+h=\delta\le d$ deletions, which completes the proof.
\end{proof}

We select the number of redundancy markers as
\begin{align}
 r_{\SID}(t) &\triangleq 5t,
 \label{mp:eq:r-sid}\\
 r_{\PID}(\boldsymbol m,t) &\triangleq \min\bigl\{7t-1,\,5t+\varepsilon(\boldsymbol m)+2\eta_t(\boldsymbol m)\bigr\}.
 \label{mp:eq:r-pid}
\end{align}
For the PID construction, we employ the lifted syndrome $\mathsf K^{\PID,\mathrm{lift}}_{\boldsymbol m,t}(x)=\kappa_{n,t}(\Lift(x))$ when the first term in \eqref{mp:eq:r-pid} is chosen, and the direct syndrome from Theorem~\ref{mp:thm:pid-inner-direct} otherwise. Let $M_{\mathsf x}$ denote the cardinality of the chosen ambient syndrome space, equipped with a standard mixed-radix indexing onto $\{0,\ldots,M_{\mathsf x}-1\}$. For $\mathsf x\in\{\SID,\PID\}$, we set $r \triangleq r_{\mathsf x}$ and order $\mathcal A_{n,r,t}$ lexicographically, letting $z_{\mathsf x}^{(j)}$ denote the vector of rank $j$.

\begin{construction}[Systematic Multipermutation Encoder]
\label{mp:con:encoder}
Given a message $x\in\cM(\boldsymbol m)$, encoding is performed in three steps.
\begin{enumerate}[label=\arabic*)]
\item Compute the selected inner syndrome and evaluate its mixed-radix index $j\in\{0,\ldots,M_{\mathsf x}-1\}$.
\item Retrieve the vector $z=z_{\mathsf x}^{(j)}\in\mathcal A_{n,r,t}$ by unranking the index $j$.
\item Output the codeword $\Enc^{\mathsf x}_{\boldsymbol m,t}(x) \triangleq \Int^{\boldsymbol m}_{z}(x)\in\cM(\boldsymbol m^{+r})$.
\end{enumerate}
\end{construction}

The choice of $r$ provides an additional factor of $n$ beyond the inner syndrome exponent. Specifically,
\begin{align*}
 r_{\SID}-t &= 4t > 4t-1,\\
 (7t-1)-t &= 6t-1 > 6t-2,\\
 \bigl(5t+\varepsilon+2\eta_t\bigr)-t &= 4t+\varepsilon+2\eta_t > 4t-1+\varepsilon+2\eta_t.
\end{align*}
Combined with the bound $|\mathcal A_{n,r,t}|=\Omega_{r,t}(n^{r-t})$ from Theorem~\ref{thm:pid-outer-partition}, these inequalities ensure that $|\mathcal A_{n,r,t}|\ge M_{\mathsf x}$ for all sufficiently large $n$, verifying that Construction~\ref{mp:con:encoder} is well defined.

The corresponding decoding procedure operates as follows. Under SIDs, we use Lemma~\ref{mp:lem:outer-compatibility}(i) to recover $z$, compute its rank to retrieve the inner syndrome, extract the received data subsequence, enumerate its SID parents in $\cM(\boldsymbol m)$, and output the unique parent matching the syndrome. Under PIDs, we lift the received sequence and apply Lemma~\ref{mp:lem:outer-compatibility}(ii) to recover $z$ and the inner syndrome. We then enumerate all PID parents of the received word in $\cM(\boldsymbol m^{+r})$ using Lemma~\ref{mp:lem:parent-enumeration}, retaining only those candidates of the form $\Int^{\boldsymbol m}_{z}(x')$ whose data component $x'$ matches the recovered inner syndrome.

\begin{theorem}[Systematic Multipermutation Codes]
\label{mp:thm:main}
Fix an integer $t\ge1$. For every composition profile $\boldsymbol m$ of length $n$ and all sufficiently large $n$, Construction~\ref{mp:con:encoder} provides the following codes.
\begin{enumerate}[label=\textup{(\roman*)}]
\item A full-systematic encoder $\Enc^{\SID}_{\boldsymbol m,t}$ from $\cM(\boldsymbol m)$ to $\cM(\boldsymbol m^{+5t})$ correcting up to $t$ SIDs.
\item A full-systematic encoder $\Enc^{\PID}_{\boldsymbol m,t}$ from $\cM(\boldsymbol m)$ to $\cM\bigl(\boldsymbol m^{+r_{\PID}(\boldsymbol m,t)}\bigr)$ correcting up to $t$ PIDs.
\end{enumerate}
Both code families are uniformly constructible, encodable, and decodable in $n^{O(t)}$ time. Their redundancies satisfy
\begin{equation}
\red\bigl(\Enc^{\SID}_{\boldsymbol m,t}\bigr) = 5t\log n+O_t(1), \qquad
\red\bigl(\Enc^{\PID}_{\boldsymbol m,t}\bigr) = r_{\PID}(\boldsymbol m,t)\log n+O_t(1).
\label{eq:red_mp}
\end{equation}
\end{theorem}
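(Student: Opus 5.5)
The proof follows the permutation case, Theorem~\ref{thm:main-systematic-pid}, with the multipermutation lemmas of this section replacing their permutation counterparts. I would organize it in four steps.

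\emph{Step 1: well-definedness.} For $n$ large, $|\mathcal A_{n,r,t}|\ge M_{\mathsf x}$. This follows from the exponent comparisons displayed before the decoding description, together with Theorem~\ref{thm:pid-outer-partition}, which gives $|\mathcal A_{n,r,t}|\ge (n+1)^{r-t}/(B_{r,t}Q_{r,t})$. The inner ranges are bounded as follows:
\begin{itemize}
\item SID: $O_t(n^{4t-1})$, by Theorem~\ref{mp:thm:sid-inner};
\item lifted PID: $O_t(n^{6t-2})$, by Theorem~\ref{thm:inner-pid-syndrome};
\item direct PID: $O_t(n^{4t-1+\varepsilon+2\eta_t})$, by Theorem~\ref{mp:thm:pid-inner-direct}.
\end{itemize}
There is one subtlety. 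The implied constant in the direct bound must depend only on $t$ and not on $\boldsymbol m$. Via \eqref{eq:B-upper} we have $B_t(\boldsymbol m)\le (t+1)n^{\eta_t}$, so this holds. Since $r\le 7t-1$ takes only finitely many values, a single threshold $n_0(t)$ works for every profile.

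\emph{Step 2: systematicity.} Deleting the marker symbols $k+1,\ldots,k+r$ from $\Int^{\boldsymbol m}_z(x)$ returns $x$. The map is injective because it has a left inverse.

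\emph{Step 3: correctness.} Suppose codewords for $x$ and $x'$ share an output $y$ after $d\le t$ deletions.
\begin{itemize}
\item By Lemma~\ref{mp:lem:outer-compatibility}, the processed outer word lies in both $\cT_{\le t}(z)$ and $\cT_{\le t}(z')$. In the PID case this uses $\Theta_n(\Lift(y))$. Since $z,z'\in\mathcal A_{n,r,t}$, disjointness forces $z=z'$.
\item Equal ranks and injective mixed-radix indexing then give equal inner syndromes.
\item Next, $x$ and $x'$ share a common descendant after at most $t$ deletions. For SIDs, this is $y|_{[k]}$. For PIDs, it comes from Lemma~\ref{mp:lem:common-data-pid}; the lifted syndrome case additionally uses Lemma~\ref{mp:lem:stable-lift} to pass to $\Lift(x)$ and $\Lift(x')$.
\item The corresponding inner theorem then gives $x=x'$, with Lemma~\ref{mp:lem:mstd-transitivity} converting up-to-$t$ deletions into exact-$t$ deletions.
\end{itemize}
The same argument shows that the decoder's candidate is unique.

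\emph{Step 4: complexity and redundancy.} Complexity follows from the inner theorems, the polynomial outer construction, and parent enumeration in $n^{O(t)}$ time via Lemma~\ref{mp:lem:parent-enumeration}. For the PID decoder, enumeration in $\cM(\boldsymbol m^{+r})$ needs $B_t(\boldsymbol m^{+r})\le (t+1)(n+r)^t$, which is still $n^{O(t)}$. The redundancy is
\[
\log\frac{|\cM(\boldsymbol m)|\cdot(n+r)!/n!}{|\cM(\boldsymbol m)|}=\sum_{j=1}^r\log(n+j)=r\log n+O_t(1),
\]
since the added marker classes are singletons, so $|\cM(\boldsymbol m^{+r})|=|\cM(\boldsymbol m)|(n+r)!/n!$.

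The main obstacle is the PID step: making sure Lemma~\ref{mp:lem:common-data-pid} and the lift combine correctly for whichever syndrome is selected, and keeping the direct-syndrome constants uniform in the profile.
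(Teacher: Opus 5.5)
Your proposal is correct and follows essentially the same argument as the paper: recover $z$ through Lemma~\ref{mp:lem:outer-compatibility}, get equal inner syndromes from injective indexing, and obtain a common data descendant via $y|_{[k]}$ or Lemma~\ref{mp:lem:common-data-pid} (lifted through Lemma~\ref{mp:lem:stable-lift} when $\kappa_{n,t}\circ\Lift$ is the selected syndrome); the redundancy is then $\sum_{j=1}^r\log(n+j)$. Your explicit check, via \eqref{eq:B-upper}, that the direct-syndrome constant is uniform in $\boldsymbol m$ just spells out what the paper leaves to the exponent comparison preceding the theorem.
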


\begin{proof}
Suppose first that two SID codewords share an output sequence after at most $t$ deletions. By Lemma~\ref{mp:lem:outer-compatibility}(i), their outer vectors coincide, implying that their inner syndromes are identical. Restricting the common stable output to $[k]$ yields a common SID descendant of the two messages after at most $t$ deletions. Theorem~\ref{mp:thm:sid-inner} then ensures that the messages are identical.

Next, suppose two PID codewords share an output sequence. By Lemma~\ref{mp:lem:outer-compatibility}(ii), their outer vectors and selected inner syndromes are identical. Lemma~\ref{mp:lem:common-data-pid} guarantees a common multipermutation PID descendant after at most $t$ deletions. Because both inner syndrome constructions correct $t$ PIDs (by Theorem~\ref{mp:thm:pid-inner-direct} for the direct syndrome, and by Lemma~\ref{mp:lem:stable-lift} with Theorem~\ref{thm:inner-pid-syndrome} for the lifted syndrome), the two messages must be equal.

For fixed $t$, we have $r\le7t-1$. Computing inner neighborhoods, parent lists, and fingerprints requires $n^{O(t)}$ time. The outer code construction, dictionary lookup, ranking, and unranking require polynomial time in $n$ for fixed $r,t$ by Theorem~\ref{thm:pid-outer-partition}. This establishes the uniform complexity claim. Finally, the code redundancy evaluates to $\red\bigl(\Enc^{\mathsf x}_{\boldsymbol m,t}\bigr)=\sum_{j=1}^{r}\log(n+j)=r\log n+O_t(1)$, establishing \eqref{eq:red_mp}.
\end{proof}

\subsection{Regular Profiles and the PID-to-SID Transition}
\label{mp:subsec:regular}

Subsection~\ref{mp:subsec:systematic} appends singleton marker classes and therefore does not preserve regularity when message multiplicities exceed one. We now require both the message and the codeword to be strictly $\lambda$-regular. For integers $k,\lambda\ge1$, we denote
\begin{equation*}
\cM_{k,\lambda} \triangleq \cM(\underbrace{\lambda,\ldots,\lambda}_{k\text{ times}}),
 \qquad n \triangleq k\lambda.
\end{equation*}
A strictly $\lambda$-regular systematic encoder maps $\cM_{k,\lambda}$ to $\cM_{k+s,\lambda}$ with $\Enc(x)|_{[k]}=x$, so the total number of added symbols is $R \triangleq s\lambda$. Hence, $R$ must be divisible by $\lambda$.

We fix $R=s\lambda\ge2t$, partition $[R]$ into consecutive blocks $J_b \triangleq \{(b-1)\lambda+1,\ldots,b\lambda\}$ for $b\in[s]$, and define the domain
\begin{equation*}
\cZ^{\mathrm{reg}}_{n,R,\lambda} \triangleq \bigl\{z\in\{0,\ldots,n\}^{R} : z_{(b-1)\lambda+1}\le\cdots\le z_{b\lambda} \text{ for every } b\in[s]\bigr\}.
\end{equation*}
We restrict the outer syndrome $\Omega_{n,R,t}$ from Theorem~\ref{thm:pid-outer-partition} to this domain, and let $\cA^{\mathrm{reg}}_{n,R,t,\lambda}$ denote its lexicographically first largest syndrome class. For $z\in\cZ^{\mathrm{reg}}_{n,R,\lambda}$ and $x\in\cM_{k,\lambda}$, we define $\Int^{\mathrm{reg}}_z(x)$ from $W_n(z)$ by replacing the $i$-th copy of $D$ with $x_i$, and replacing each formal marker $j\in J_b$ with the physical symbol $k+b$.

\begin{lemma}[Regular Outer Realization]\label{mp:lem:regular-outer-realization}
For fixed parameters $R,t,\lambda$, we have $|\cA^{\mathrm{reg}}_{n,R,t,\lambda}|=\Omega_{R,t,\lambda}(n^{R-t})$, and distinct vectors in this class have disjoint projected output sets under $\cT_{\le t}$. Furthermore,
\begin{equation}
\Int^{\mathrm{reg}}_z(x)\in\cM_{k+s,\lambda},\quad
\Int^{\mathrm{reg}}_z(x)|_{[k]}=x,\quad \Lift(\Int^{\mathrm{reg}}_z(x))=\Int_z(\Lift(x)). \label{mp:eq:regular-lift-interleaving}
\end{equation}
If $y$ is obtained from $\Int^{\mathrm{reg}}_z(x)$ after at most $t$ SIDs or PIDs, then
\begin{equation}
\Theta_n(\Lift(y))\in\cT_{\le t}(z).
 \label{mp:eq:regular-outer-projection}
\end{equation}
The code class, its dictionary, and ranking and unranking within it are computable in polynomial time in $n$ for fixed $R,t,\lambda$.
\end{lemma}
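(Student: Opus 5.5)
The proof splits into four parts: the size bound for the restricted class, disjointness, the realization identities in \eqref{mp:eq:regular-lift-interleaving}, and the projection claim \eqref{mp:eq:regular-outer-projection}, followed by complexity.

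\emph{Size of the class.} The domain $\cZ^{\mathrm{reg}}_{n,R,\lambda}$ is a product of $s$ sets of weakly increasing $\lambda$-tuples in $\{0,\ldots,n\}$, so
\begin{equation*}
|\cZ^{\mathrm{reg}}_{n,R,\lambda}|=\binom{n+\lambda}{\lambda}^{s}\ge\left(\frac{(n+1)^{\lambda}}{\lambda!}\right)^{s}=\frac{(n+1)^{R}}{(\lambda!)^{s}}.
\end{equation*}
The restricted syndrome $\Omega_{n,R,t}$ still takes at most $B_{R,t}Q_{R,t}(n+1)^t$ values by Theorem~\ref{thm:pid-outer-partition}. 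Pigeonhole then gives a class of size at least $(n+1)^{R-t}/((\lambda!)^sB_{R,t}Q_{R,t})=\Omega_{R,t,\lambda}(n^{R-t})$.

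\emph{Disjointness.} This is inherited directly. A restricted syndrome class is a subset of a syndrome class of $\Omega_{n,R,t}$, and distinct members of such a class have disjoint sets $\cT_{\le t}(\cdot)$.

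\emph{Realization identities.} Each physical symbol $k+b$ replaces exactly the $\lambda$ formal markers in $J_b$, and the data part has profile $(\lambda,\ldots,\lambda)$; hence the codeword lies in $\cM_{k+s,\lambda}$. Restricting to $[k]$ reads off the $D$-slots in order, which gives $x$.

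For the lift identity, data values precede marker values, so data occurrences receive labels in $[n]$ exactly as in $\Lift(x)$. The occurrences of $k+b$ receive the labels $n+(b-1)\lambda+1,\ldots,n+b\lambda$ in left-to-right order. The key point is that in $W_n(z)$ the markers of $J_b$ appear in increasing label order. This holds because $z$ is weakly increasing on $J_b$, and ties are broken by increasing label in the definition of $W_n$. Therefore the $i$-th occurrence of $k+b$ is formal marker $(b-1)\lambda+i$, which $\Int_z$ maps to $n+(b-1)\lambda+i$. This gives $\Lift(\Int^{\mathrm{reg}}_z(x))=\Int_z(\Lift(x))$. I expect this ordering check to be the main subtle step; it is exactly why the domain is restricted to block-monotone $z$.

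\emph{Projection.} Under PIDs, Lemma~\ref{mp:lem:stable-lift} shows that $\Lift(y)$ is a permutation PID descendant of $\Lift(\Int^{\mathrm{reg}}_z(x))=\Int_z(\Lift(x))$, and Lemma~\ref{lem:pid-projection} yields \eqref{mp:eq:regular-outer-projection}.

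Under SIDs, I would reduce to the PID case by showing that $\mstd(y)$ is a PID descendant of the codeword obtained with the same deletion set; then apply $\Lift$ and proceed as above. One caution is needed: $\Lift(y)$ is only defined for a word over an alphabet of consecutive values. If $y$ has lost entire classes (possible when $\lambda\le t$), I would interpret $\Lift(y)$ as $\Lift(\mstd(y))$, which matches how the decoder must process $y$. When $\lambda>t$ no class vanishes, $\mstd(y)=y$, and this subtlety disappears.

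\emph{Complexity.} This follows as in Theorem~\ref{thm:pid-outer-partition}. Enumerate the at most $(n+1)^R$ vectors of the restricted domain in lexicographic order, compute moments, list at most $A_{R,t}$ outputs per vector, and greedily color with $B_{R,t}$ colors; the residual degree bound \eqref{eq:outer-degree} still holds on an induced subgraph. Then select the largest class, hash the outputs into the dictionary, and rank or unrank by position in the sorted class. All of these steps are polynomial in $n$ for fixed $R,t,\lambda$.
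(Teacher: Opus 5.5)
Your proposal is correct and follows the paper's proof essentially step for step. It uses the count $\binom{n+\lambda}{\lambda}^s$ with pigeonholing over the $O_{R,t}(n^t)$ outer syndromes, inherited disjointness, block monotonicity plus label tie-breaking in $W_n(z)$ for the lift identity, and Lemma~\ref{mp:lem:stable-lift} together with Lemma~\ref{lem:pid-projection} for the projection, handling SID outputs through $\Lift(y)=\Lift(\mstd(y))$.

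Two small remarks. First, your caution that $\Lift(y)$ might be undefined is unnecessary: the defining formula counts only the values actually present in $y$, so it is already well defined for words with vanished classes and equals $\Lift(\mstd(y))$. Second, for the complexity claim, compute $\Omega_{n,R,t}$ on the full cube and then restrict it, rather than greedily recoloring the induced subgraph. Recoloring gives a valid class, but not necessarily the class $\cA^{\mathrm{reg}}_{n,R,t,\lambda}$ as the paper defines it.
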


\begin{proof}
Each block of $\lambda$ coordinates is a non-decreasing tuple from $\{0,\ldots,n\}$, which gives $|\cZ^{\mathrm{reg}}_{n,R,\lambda}|=\binom{n+\lambda}{\lambda}^{s}=\Theta_{R,\lambda}(n^R)$. Because Theorem~\ref{thm:pid-outer-partition} employs only $O_{R,t}(n^t)$ outer syndromes, its largest restricted class has size $|\cA^{\mathrm{reg}}_{n,R,t,\lambda}|=\Omega_{R,t,\lambda}(n^{R-t})$. Disjointness and polynomial computability follow from the properties of the outer partition. The block monotonicity and the tie-breaking convention in $W_n(z)$ ensure that the $\lambda$ occurrences of symbol $k+b$ lift precisely to the formal labels in $J_b$, proving \eqref{mp:eq:regular-lift-interleaving}. Finally, the stable lift of a PID output is an ordinary permutation-PID output by Lemma~\ref{mp:lem:stable-lift}. The same holds for an SID output because $\Lift(v)=\Lift(\mstd(v))$. Equation \eqref{mp:eq:regular-outer-projection} then follows from Lemma~\ref{lem:pid-projection}.
\end{proof}

\begin{lemma}[Common Regular PID Output]
\label{mp:lem:regular-common-data-pid}
If two sequences $\Int^{\mathrm{reg}}_z(x)$ and $\Int^{\mathrm{reg}}_{z'}(x')$ produce a common output after $d$ PIDs, then $x$ and $x'$ share a common multipermutation PID descendant after at most $d$ deletions.
\end{lemma}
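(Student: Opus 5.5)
We adapt the proof of Lemma~\ref{mp:lem:common-data-pid}. The only new feature is that each added class $k+b$ now has multiplicity $\lambda$ instead of being a singleton. Let $y$ be the common standardized output. Under the first explanation (from $\Int^{\mathrm{reg}}_z(x)$), let $\delta$ data occurrences be deleted. Let $a$ be the number of data classes of $[k]$ that survive, and let $g$ be the number of marker classes $k+b$ with at least one surviving occurrence. Define $\delta',a',g'$ analogously for the second explanation. Because $\mstd$ relabels surviving classes increasingly and every marker class exceeds every data class in value, $y$ has alphabet $[a+g]=[a'+g']$. Under the first explanation the symbols $1,\dots,a$ are data and $a+1,\dots,a+g$ are markers. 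Without loss of generality assume $a\le a'$ and put $h\triangleq a'-a$.

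\textbf{Step 1.} Form $v\triangleq y|_{[a]}$. Using the transitivity in Lemma~\ref{mp:lem:mstd-transitivity}, $v$ is exactly the tie-preserving standardization of the surviving data of $x$. Hence $v\in\cD_\delta^{\PID}(x)$.

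\textbf{Step 2.} Under the second explanation, $u\triangleq y|_{[a+h]}$ is the standardized surviving data of $x'$, so $u\in\cD^{\PID}_{\delta'}(x')$. The word $v$ arises from $u$ by deleting every occurrence of the classes $a+1,\dots,a+h$. By Lemma~\ref{mp:lem:mstd-transitivity}, $v$ is therefore a PID descendant of $x'$ after $\delta'+c$ deletions, where $c$ is the total number of occurrences of those $h$ symbols in $y$.

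\textbf{Step 3 (main obstacle).} It remains to show $\delta'+c\le d$ and $\delta\le d$. The bound $\delta\le d$ is trivial. In the singleton case one has $c=h$, and $c$ was charged against marker deletions via $e'=e+h$. Here, in the first explanation, the $c$ occurrences are surviving marker occurrences. We need to count instead of equating class numbers. The output length gives
\[
 n+R-d \;=\; (n-\delta')+(\text{surviving marker occurrences, second}).
\]
In the second explanation the surviving marker occurrences are exactly those with values above $a+h$, and their number is at most $R-(d-\delta')$. This bound only restates the length identity, so the direct count is circular. What does give the inequality is that in the first explanation the symbols above $a+h$ are also markers. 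So
\[
 (\text{surviving data, first}) + c + (\text{symbols above } a+h) \;=\; n+R-d,
\]
which gives $n-\delta+c=n-\delta'$, that is, $\delta'+c=\delta\le d$.

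I would present exactly this length bookkeeping. The length of $y$ restricted to the symbols $>a+h$ is the same under both explanations, since these are the same positions of $y$. Comparing $|y|_{[a+h]}|$ computed both ways then yields $c=\delta-\delta'$, and hence the total of $\delta\le d$ deletions from $x'$.

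Finally, $v$ is a common PID descendant of $x$ and $x'$ after exactly $\delta\le d$ deletions.
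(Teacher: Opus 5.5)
Your proof is correct and follows essentially the same route as the paper: restrict the common output to the first $a$ classes, then count the entries removed in the second explanation as $|y|_{[a+h]}|-|y|_{[a]}|=(n-\delta')-(n-\delta)=\delta-\delta'$, so that $v$ is a $\delta$-PID descendant of both messages with $\delta\le d$. You can drop the exploratory detour at the start of your Step~3 and keep only that one-line length comparison.
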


\begin{proof}
In the two explanations of the common standardized output, let $\delta,\delta'$ denote the numbers of deleted data occurrences, and let $a,a'$ denote the numbers of surviving data classes. Assume $a\le a'$ and retain only the first $a$ symbol classes of the common output. Under the first explanation, this yields the standardized data sequence after $\delta$ deletions. Under the second explanation, retaining the first $a'$ classes and subsequently deleting the entries in classes $a+1,\ldots,a'$ eliminates $(n-\delta')-(n-\delta)=\delta-\delta'$ entries. By transitivity of $\mstd$, the resulting word is identical, representing a $\delta$-PID descendant of both messages. Because $\delta\le d$, the claim follows.
\end{proof}

For a regular composition profile, define $\eta_{\lambda,t} \triangleq \left\lfloor t/\lambda\right\rfloor$. When $\lambda\ge2$, Theorems~\ref{mp:thm:sid-inner} and~\ref{mp:thm:pid-inner-direct} yield $O_t(n^{4t-1})$ and $O_{t,\lambda}(n^{4t-1+2\eta_{\lambda,t}})$ inner syndrome values, respectively. We define the total number of redundancy symbols as
\begin{equation*}
 R_{\SID}^{\mathrm{reg}}(\lambda,t) \triangleq
 \begin{cases}
  4t, & \lambda=1,\\[1mm]
  \lambda\left\lceil\dfrac{5t}{\lambda}\right\rceil, & \lambda\ge2,
 \end{cases}\qquad
 R_{\PID}^{\mathrm{reg}}(\lambda,t) \triangleq
 \begin{cases}
  7t-1, & \lambda=1,\\[1mm]
  \lambda\left\lceil
       \dfrac{5t+2\lfloor t/\lambda\rfloor}{\lambda}
       \right\rceil, & \lambda\ge2,
 \end{cases}
\end{equation*}
and set $s_{\mathsf x}^{\mathrm{reg}} \triangleq R_{\mathsf x}^{\mathrm{reg}}/\lambda$. For $\lambda\ge2$, Lemma~\ref{mp:lem:regular-outer-realization} and the inequalities $R_{\SID}^{\mathrm{reg}}-t>4t-1$ and $R_{\PID}^{\mathrm{reg}}-t>4t-1+2\eta_{\lambda,t}$ demonstrate that the regular outer class contains sufficiently many vectors to index all inner syndromes. Ordering that class lexicographically, we map the inner syndrome index of $x$ to a vector $z$ of the same rank, and define
\begin{equation}
 \Enc^{\mathsf x,\mathrm{reg}}_{k,\lambda,t}(x) \triangleq \Int^{\mathrm{reg}}_z(x)\in\cM_{k+s_{\mathsf x}^{\mathrm{reg}},\lambda}.
 \label{mp:eq:strict-regular-encoder}
\end{equation}
For $\lambda=1$, we employ the permutation encoders developed in Sections~\ref{sec:pid} and~\ref{sec:sid}.

\begin{theorem}[Strictly Regular Systematic Multipermutation Codes]
\label{mp:thm:strict-regular}
Fix integers $t,\lambda\ge1$. For all sufficiently large message alphabet sizes $k$, the encoders defined in \eqref{mp:eq:strict-regular-encoder} are full-systematic and correct up to $t$ SIDs and PIDs, respectively. For fixed $t$ and $\lambda$, they are uniformly constructible, encodable, and decodable in $n^{O_{t,\lambda}(1)}$ time, and their redundancies satisfy
\begin{align*}
\red(\Enc^{\SID,\mathrm{reg}}_{k,\lambda,t}) &= R_{\SID}^{\mathrm{reg}}(\lambda,t)\log n+O_{t,\lambda}(1),\\
\red(\Enc^{\PID,\mathrm{reg}}_{k,\lambda,t}) &= R_{\PID}^{\mathrm{reg}}(\lambda,t)\log n+O_{t,\lambda}(1).
\end{align*}
If $\lambda>t$, the PID and SID error models coincide on every codeword, and the two encoders may be chosen identically with
\begin{equation*}
R_{\PID}^{\mathrm{reg}}(\lambda,t) = R_{\SID}^{\mathrm{reg}}(\lambda,t) = \lambda\left\lceil\frac{5t}{\lambda}\right\rceil.
\end{equation*}
\end{theorem}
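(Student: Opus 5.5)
The case $\lambda=1$ is exactly Theorems~\ref{thm:main-systematic-pid} and~\ref{thm:systematic-sid}, since $\cM_{k,1}=\mathcal S_k$. We therefore assume $\lambda\ge2$ and follow the structure of the proof of Theorem~\ref{mp:thm:main}, with the regular outer realization of Lemma~\ref{mp:lem:regular-outer-realization} replacing the singleton-marker interleaving.

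\emph{Systematicity and well-definedness.} By \eqref{mp:eq:regular-lift-interleaving}, $\Int^{\mathrm{reg}}_z(x)\in\cM_{k+s,\lambda}$ and $\Int^{\mathrm{reg}}_z(x)|_{[k]}=x$. For well-definedness we compare $|\cA^{\mathrm{reg}}_{n,R,t,\lambda}|=\Omega_{R,t,\lambda}(n^{R-t})$ with the inner ranges. For SIDs the inner range is $O_t(n^{4t-1})$ by Theorem~\ref{mp:thm:sid-inner}. For PIDs it is $O_{t,\lambda}(n^{4t-1+2\eta_{\lambda,t}})$ by Theorem~\ref{mp:thm:pid-inner-direct}, since $\varepsilon=0$ when $\lambda\ge2$, and $\eta_t(\boldsymbol m)=\lfloor t/\lambda\rfloor$ because any $A$ with $\lambda|A|\le t$ is admissible. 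The strict inequalities $R^{\mathrm{reg}}_{\mathsf x}-t>$ inner exponent give an extra factor of $n$, so the outer class suffices for all large $n=k\lambda$, that is, for all large $k$.

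\emph{Correctness.} Suppose two codewords with outer vectors $z,z'$ share an output $y$ after $d\le t$ deletions of either type. By \eqref{mp:eq:regular-outer-projection}, $\Theta_n(\Lift(y))\in\cT_{\le t}(z)\cap\cT_{\le t}(z')$. Disjointness within $\cA^{\mathrm{reg}}$ then forces $z=z'$, hence equal ranks and, by injectivity of the mixed-radix index, equal inner syndromes. For SIDs, $y|_{[k]}$ is a common SID descendant of the two messages after at most $t$ deletions, so Theorem~\ref{mp:thm:sid-inner} gives $x=x'$. For PIDs, Lemma~\ref{mp:lem:regular-common-data-pid} supplies a common PID descendant after at most $d$ deletions, and Theorem~\ref{mp:thm:pid-inner-direct} gives $x=x'$.

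\emph{Decoding and redundancy.} The decoder lifts $y$, applies $\Theta_n$, and looks up $z$ in the dictionary, which is polynomial by Lemma~\ref{mp:lem:regular-outer-realization}. It then inverts the rank to recover the inner syndrome and runs the inner decoder on the data part. For PIDs, the data part is obtained by enumerating the $n^{O(t)}$ parents of $y$ via Lemma~\ref{mp:lem:parent-enumeration} and checking consistency with $\Int^{\mathrm{reg}}_z$, as in Construction~\ref{con:full-systematic-pid-decoder}. The redundancy is $\log(|\cM_{k+s,\lambda}|/|\cM_{k,\lambda}|)$. We expand this as
\begin{equation*}
\log\frac{(n+R)!}{n!}-s\log\lambda!=R\log n+O_{t,\lambda}(1).
\end{equation*}

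\emph{Coincidence for $\lambda>t$.} Every codeword lies in $\cM_{k+s,\lambda}$, so every class, markers included, has multiplicity $\lambda>t$. Hence at most $t$ deletions cannot remove any class, the support stays $[k+s]$, and $\mstd$ acts as the identity on the surviving word. Thus $\operatorname{del}^{\PID}_I=\operatorname{del}^{\SID}_I$ on codewords, the two channels coincide, and the SID encoder already corrects $t$ PIDs. In this regime $\lfloor t/\lambda\rfloor=0$, so both formulas read $\lambda\lceil 5t/\lambda\rceil$.

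The main obstacle is making sure the PID decoder and the common-descendant argument hold when marker classes are erased. A marker class with $\lambda\le t$ can vanish entirely, and $\mstd$ then shifts labels between data and marker classes. This is exactly what Lemma~\ref{mp:lem:regular-common-data-pid} and the $\Lift$/$\Theta_n$ compatibility handle, so I would check carefully that the lifted output is an ordinary permutation-PID descendant of $\Int_z(\Lift(x))$ when whole marker blocks disappear.
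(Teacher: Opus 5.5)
Your proposal is correct and follows the paper's proof essentially step for step. The shared steps are: recovering $z$ via Lemma~\ref{mp:lem:regular-outer-realization}; using data restriction for SIDs and Lemma~\ref{mp:lem:regular-common-data-pid} for PIDs; invoking the inner syndrome theorems; the same redundancy computation $\log\frac{(n+R)!}{n!(\lambda!)^s}$; and the observation that no symbol class can vanish when $\lambda>t$. The concern you flag at the end is already settled by the commutation identity~\eqref{eq:lift-commutation}, which holds for every set of surviving positions, including those that erase entire marker blocks.
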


\begin{proof}
Lemma~\ref{mp:lem:regular-outer-realization} establishes strict regularity, full systematicity, and unique recovery of the outer vector, and consequently of the inner syndrome. A common SID output restricts to a common data SID output, while Lemma~\ref{mp:lem:regular-common-data-pid} provides the corresponding PID property. The inner syndrome theorems therefore guarantee unique decoding. The decoding algorithm executes the dictionary lookup and subsequently enumerates candidate parents using Lemma~\ref{mp:lem:parent-enumeration}, where all search exponents remain constant for fixed $t$ and $\lambda$.

Next, let $R \triangleq R_{\mathsf x}^{\mathrm{reg}}$ and $s \triangleq R/\lambda$. The code redundancy evaluates to
\begin{equation*}
 \log\frac{|\cM_{k+s,\lambda}|}{|\cM_{k,\lambda}|}
 = \log\frac{(n+R)!}{n!(\lambda!)^s}
 = R\log n+O_{t,\lambda}(1).
\end{equation*}
Finally, if $\lambda>t$, no symbol class can be completely deleted after at most $t$ deletions. Hence, tie-preserving standardization is the identity operation on every surviving sequence, ensuring that $\cD_{\le t}^{\PID}(c)=\cD_{\le t}^{\SID}(c)$ for all codewords $c\in\cM_{k+s,\lambda}$. Because $\eta_{\lambda,t}=0$, the redundancy parameters $R$ coincide, allowing the same inner syndrome and outer class to be utilized.
\end{proof}

As an illustrative example, when $t=2$, the redundancy parameters evaluate to
\begin{equation*}
\begin{array}{c|ccc}
 \lambda & 1 & 2 & \lambda\ge3\\ \hline
 R_{\SID}^{\mathrm{reg}}
   & 8 & 10 & \lambda\lceil10/\lambda\rceil\\
 R_{\PID}^{\mathrm{reg}}
   & 13 & 12 & \lambda\lceil10/\lambda\rceil
\end{array}
\end{equation*}
Thus, the PID redundancy penalty vanishes completely for all $\lambda\ge3$, and the remaining variation arises solely from the rounding required to ensure that the number of added symbols is divisible by $\lambda$.

%% file: main_paper/comparison.tex
\section{Generality of the Outer Framework and Redundancy Comparisons}
\label{sec:comparison}

In this section, we establish the generality of the labeled-marker outer coding framework as a universal systematic compiler for permutation deletion codes. We demonstrate that this framework can transform any inner syndrome partition into a full-systematic code, provided the syndrome classes correct the prescribed deletions. To quantify the reduction in redundancy achieved by the algebraic techniques introduced in Sections~\ref{sec:pid} and~\ref{sec:sid}, we construct baseline systematic codes via direct local algebraic fingerprinting on the full confusability graphs. Comparing these baseline codes with our proposed constructions highlights the essential role played by the cyclic-successor Hamming reduction for SIDs and the single-PID VT decomposition for PIDs. Furthermore, we evaluate the special case of single-deletion correction and provide a comprehensive comparison of redundancy and decoding complexity.

\subsection{The Outer-Code Framework as a Systematic Compiler}
\label{subsec:comparison-outer-framework}

Let $\mathsf x\in\{\SID,\PID\}$. Suppose that
\begin{equation}
 \widetilde{\kappa}_{n,t}^{\mathsf x} \colon \mathcal S_n \longrightarrow \{0,\ldots,\widetilde{M}_{n,t}^{\mathsf x}-1\}
 \label{eq:comparison-generic-inner}
\end{equation}
is an indexed inner syndrome mapping whose syndrome classes each correct up to $t$ $\mathsf x$-deletions in $\mathcal S_n$. The syndrome value identifies the subcode containing a message permutation, and knowledge of this subcode allows the message to be recovered unambiguously from any valid deletion descendant. The outer coding framework developed in Section~\ref{subsec:pid-outer} encodes this syndrome into the relative positions of distinct redundancy markers, yielding a full-systematic codeword in $\mathcal S_{n+r}$.

For an integer $r\geq2t$, Theorem~\ref{thm:pid-outer-partition} guarantees the existence of an outer code class $\mathcal A_{n,r,t}\subseteq\{0,\ldots,n\}^r$ with size
\begin{equation}
 |\mathcal A_{n,r,t}| \geq \frac{(n+1)^{r-t}}{B_{r,t}Q_{r,t}} = \Omega_{r,t}(n^{r-t}).
 \label{eq:comparison-outer-size}
\end{equation}
By Lemma~\ref{lem:sid-to-outer-projection}, this outer class corrects the projected outer channel induced by either deletion model. Assume that the syndrome cardinality satisfies $\widetilde{M}_{n,t}^{\mathsf x}\leq|\mathcal A_{n,r,t}|$. We associate each syndrome index with the vector of identical rank in $\mathcal A_{n,r,t}$ under lexicographic ordering. For a message $\mu\in\mathcal S_n$, the general systematic encoder is defined by
\begin{equation}
 \widetilde{\Enc}_{n,r,t}^{\mathsf x}(\mu) \triangleq \Int_z(\mu),
 \quad \text{where } z \triangleq \operatorname{unrank}_{\mathcal A_{n,r,t}}\bigl(\widetilde{\kappa}_{n,t}^{\mathsf x}(\mu)\bigr).
 \label{eq:comparison-generic-encoder}
\end{equation}
Because the interleaving map satisfies $\Int_z(\mu)|_{[n]}=\mu$, the encoder is full-systematic.

The error-correction capability follows the arguments in Theorems~\ref{thm:main-systematic-pid} and~\ref{thm:systematic-sid}. If two messages $\mu,\nu\in\mathcal S_n$ generate codewords sharing a descendant $y$ after at most $t$ deletions, Lemma~\ref{lem:pid-projection} for PIDs and Lemma~\ref{lem:sid-to-outer-projection} for SIDs imply that the corresponding outer vectors share a projected descendant. By Theorem~\ref{thm:pid-outer-partition}, their outer vectors must be identical, which enforces $\widetilde{\kappa}_{n,t}^{\mathsf x}(\mu)=\widetilde{\kappa}_{n,t}^{\mathsf x}(\nu)$. Restricting $y$ to the message symbols then yields a common descendant of $\mu$ and $\nu$ within the same inner syndrome class, which contradicts the correction capability of that class.

If the inner syndrome size satisfies $\widetilde{M}_{n,t}^{\mathsf x} = O_t(n^{a_{\mathrm{in}}})$ for some constant exponent $a_{\mathrm{in}}$, \eqref{eq:comparison-outer-size} guarantees that $|\mathcal A_{n,r,t}|\geq\widetilde{M}_{n,t}^{\mathsf x}$ for all sufficiently large $n$ whenever
\begin{equation}
 r\geq2t \quad \text{and} \quad r-t > a_{\mathrm{in}}.
 \label{eq:comparison-capacity-condition}
\end{equation}
Thus, choosing $r \triangleq \max\{2t, a_{\mathrm{in}}+t+1\}$ redundancy symbols suffices, yielding an overall redundancy of
\begin{equation}
 \red(\widetilde{\Enc}_{n,r,t}^{\mathsf x}) = \log\frac{(n+r)!}{n!} = r\log n+O_t(1).
 \label{eq:comparison-redundancy}
\end{equation}

When $t=1$, the general inner fingerprinting step can be bypassed entirely by utilizing elementary single-deletion syndromes.

\begin{corollary}[Single-Deletion Systematic Codes]
\label{cor:single-deletion-systematic}
For all sufficiently large $n$, there exist full-systematic permutation codes on $\mathcal S_n$ that correct a single PID using 4 redundancy markers, or a single SID using 3 redundancy markers. Their respective redundancies are $4\log n+O(1)$ and $3\log n+O(1)$ bits.
\end{corollary}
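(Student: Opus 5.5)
We instantiate the systematic compiler of Subsection~\ref{subsec:comparison-outer-framework} with elementary single-deletion inner syndromes, so no fingerprinting step is needed. For $t=1$ the compiler needs $r\ge 2t=2$ and $r-1>a_{\mathrm{in}}$, where $n^{a_{\mathrm{in}}}$ bounds the inner syndrome range. The whole proof therefore reduces to exhibiting, for each channel, a polynomial-time inner syndrome whose classes correct one deletion and whose range has size $O(n^{a_{\mathrm{in}}})$, with $a_{\mathrm{in}}=2$ for PID and $a_{\mathrm{in}}=1$ for SID.

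\emph{PID case.} We take $\widetilde\kappa^{\PID}_{n,1}\triangleq \nu_n$ from \eqref{eq:single-pid-syndrome}, indexed as $a_1n+a_2\in\{0,\ldots,n^2-1\}$. Lemma~\ref{lem:single-pid-partition} shows that every class corrects a single PID, and the range has size $n^2$, so $a_{\mathrm{in}}=2$. Setting $r=\max\{2,2+1+1\}=4$, Theorem~\ref{thm:pid-outer-partition} gives $|\cA_{n,4,1}|\ge (n+1)^3/(B_{4,1}Q_{4,1})\ge n^2$ for large $n$. The correctness argument is then the proof of Theorem~\ref{thm:main-systematic-pid} verbatim with $\kappa_{n,t}$ replaced by $\nu_n$: Lemma~\ref{lem:pid-projection} and outer disjointness force $z=z'$ and hence equal syndromes, and Lemma~\ref{lem:common-data-pattern} produces a common PID descendant of the messages after at most one deletion. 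By Lemma~\ref{lem:std-transitive} this contradicts Lemma~\ref{lem:single-pid-partition}. Decoding enumerates the $O(n^2)$ one-PID parents via Lemma~\ref{lem:parent-enumeration} and matches the syndrome.

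\emph{SID case.} The specialization of Theorem~\ref{thm:inner-sid-syndrome} to $t=1$ has range $p^{2}$, which would need $r=4$. We therefore need a one-dimensional syndrome. We use $\kappa^{\SID}(\pi)\triangleq \VT_n$ of the descent/ascent signature $\alpha(\pi)$, i.e.\ $\VT_n(\alpha(\pi))\in\mathbb Z_n$, which has range $n$. The key step, and the expected main obstacle, is to justify that each class corrects a single SID. A stable deletion reveals the deleted value $v=[n]\setminus\supp(x)$, so only its position is unknown. This is Tenengolts' non-binary single-deletion setting. The symbol-sum check $\sum_i \pi_i \bmod n$ is constant on $\Sn_n$, exactly as used for $\cM(\boldsymbol m)$ in Subsection~\ref{mp:subsec:sid-inner}. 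Hence Tenengolts' construction~\cite{tenengolts1984nonbinary} (with the signature convention $\mathbf 1\{u_i\le u_{i+1}\}$, which agrees with $\alpha$ on distinct entries) shows that the $\VT_n$ classes correct one SID. If the citation needs to be made self-contained, I would argue directly: two insertions of $v$ into $x$ at positions $i<j$ yielding the same $\VT_n$ value produce signatures that differ by a run-shift confined to $[i,j]$, and the VT weight change lies strictly between $0$ and $n$, which is the standard Levenshtein argument applied to $\alpha$. With $a_{\mathrm{in}}=1$ we take $r=\max\{2,3\}=3$. Since $|\cA_{n,3,1}|=\Omega(n^2)\ge n$ for large $n$, the SID argument of Theorem~\ref{thm:systematic-sid} applies, using Lemma~\ref{lem:sid-to-outer-projection}, followed by restriction to $[n]$.

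Finally, the redundancies are $\log\frac{(n+r)!}{n!}=r\log n+O(1)$ with $r=4$ and $r=3$. All steps take polynomial time: computing the syndromes, the outer class for fixed $r$, and enumerating at most $n^2$ parents.
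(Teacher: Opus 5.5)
Your proposal is correct and follows the paper's proof. For a single PID you use $\nu_n$ (range $n^2$, hence $r=4$); for a single SID you use the non-binary Varshamov--Tenengolts signature syndrome, whose symbol-sum check is constant on $\mathcal S_n$ (range $n$, hence $r=3$); and you compose each with the outer class $\mathcal A_{n,r,1}$ via the systematic compiler, additionally spelling out the correctness argument of Theorems~\ref{thm:main-systematic-pid} and~\ref{thm:systematic-sid} and why the $t=1$ power-sum syndrome would force $r=4$, which the paper leaves implicit.
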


\begin{proof}
For PIDs with $t=1$, the signature syndrome $\nu_n$ in \eqref{eq:single-pid-syndrome} takes $n^2$ values, with $a_{\mathrm{in}}=2$. Setting $r=4$ yields $|\mathcal A_{n,4,1}|=\Omega(n^3)>n^2$ for sufficiently large $n$. For SIDs with $t=1$, the non-binary Varshamov-Tenengolts syndrome restricted to $\mathcal S_n$ takes at most $n$ values, with $a_{\mathrm{in}}=1$. Setting $r=3$ yields $|\mathcal A_{n,3,1}|=\Omega(n^2)>n$. In both cases, composing these syndromes with the outer class via \eqref{eq:comparison-generic-encoder} produces the claimed codes.
\end{proof}

\subsection{Baseline Constructions via Direct Confusability Graph Coloring}
\label{subsec:direct-baselines}

To demonstrate the quantitative gains achieved by our algebraic inner constructions, we consider baseline codes obtained by applying the two-stage local fingerprinting technique of Lemma~\ref{lem:local-fingerprint} directly to the unconstrained confusability graphs $G_{n,t}^{\SID}$ and $G_{n,t}^{\PID}$.

In the SID confusability graph $G_{n,t}^{\SID}$, each permutation has $\binom nt$ descendants, each of which has $t!\binom nt$ parents by Lemma~\ref{lem:parent-enumeration}. Hence, the maximum degree is bounded by
\begin{equation}
 K_{n,t}^{\SID,\mathrm{base}} \triangleq t!\binom nt^{\!2} = O_t(n^{2t}).
 \label{eq:direct-sid-degree}
\end{equation}
Applying the two-stage local fingerprinting of Lemma~\ref{lem:local-fingerprint} directly to $G_{n,t}^{\SID}$ starting from the Lehmer rank produces an inner syndrome $\kappa_{n,t}^{\SID,\mathrm{base}}$ taking at most
\begin{equation*}
 M_{n,t}^{\SID,\mathrm{base}} = O_t\bigl((K_{n,t}^{\SID,\mathrm{base}})^2\bigr) = O_t(n^{4t})
\end{equation*}
distinct values. The exponent is $a_{\mathrm{in}}=4t$. According to the compiler condition in \eqref{eq:comparison-capacity-condition}, this direct approach requires $r = 5t+1$ redundancy markers, achieving a systematic redundancy of
\begin{equation*}
 \red(\Enc_{n,t}^{\SID,\mathrm{base}}) = (5t+1)\log n+O_t(1).
\end{equation*}

In the PID confusability graph $G_{n,t}^{\PID}$, standardization conceals the values of deleted symbols. By Lemma~\ref{lem:parent-enumeration}, each descendant has at most $t!\binom nt^2$ parents, yielding a maximum degree of
\begin{equation}
 K_{n,t}^{\PID,\mathrm{base}} \triangleq t!\binom nt^{\!3} = O_t(n^{3t}).
 \label{eq:direct-pid-degree}
\end{equation}
Applying Lemma~\ref{lem:local-fingerprint} directly to $G_{n,t}^{\PID}$ yields an inner syndrome of cardinality
\begin{equation*}
 M_{n,t}^{\PID,\mathrm{base}} = O_t\bigl((K_{n,t}^{\PID,\mathrm{base}})^2\bigr) = O_t(n^{6t}),
\end{equation*}
corresponding to $a_{\mathrm{in}}=6t$. The outer compiler therefore requires $r = 7t+1$ markers, leading to a redundancy of
\begin{equation*}
 \red(\Enc_{n,t}^{\PID,\mathrm{base}}) = (7t+1)\log n+O_t(1).
\end{equation*}

\subsection{Impact of Inner Structural Constraints}
\label{subsec:inner-impact}

The preceding baseline derivations reveal that the performance of the full-systematic code is governed by the inner syndrome exponent $a_{\mathrm{in}}$. The analytical constructions introduced in Sections~\ref{sec:pid} and~\ref{sec:sid} significantly improve upon these unconstrained graph colorings.

For SIDs, rather than analyzing deletion overlap directly on permutations, the cyclic-successor mapping in \eqref{eq:sid-successor-map} embeds the Ulam metric into the Hamming metric via $d_{\mathrm H}(f(\bar\pi),f(\bar\sigma))\le 3d_{\mathrm U}(\pi,\sigma)$. This reduction allows us to bypass graph coloring entirely and employ the canonical power-sum Hamming syndrome $J_{n+1,3t+1}$. This algebraic syndrome has size $O_t(n^{3t-1})$, reducing the syndrome exponent from $4t$ down to $3t-1$. Consequently, the required number of markers decreases from $5t+1$ to $4t$, saving $(t+1)\log n$ bits of redundancy.

For PIDs, the single-PID syndrome $\nu_n$ partitions $\mathcal S_n$ into subcodes such that each exact-$t$ descendant has at most $L_{n,t} = O_t(n^{2t-2})$ compatible parents in the same subcode, compared to $O_t(n^{2t})$ parents in the unconstrained space. This decomposes the full confusability graph into residual subgraphs whose maximum degree drops from $O_t(n^{3t})$ to $O_t(n^{3t-2})$. Applying local fingerprinting within each residual graph produces a secondary syndrome of size $O_t(n^{6t-4})$. Accounting for the $n^2$ values of the base syndrome $\nu_n$, the total inner syndrome size is $O_t(n^{6t-2})$, reducing the exponent from $6t$ to $6t-2$. This algebraic preprocessing saves two redundancy markers, lowering the redundancy from $(7t+1)\log n$ to $(7t-1)\log n$.

\begin{table*}[!t]
\centering
\caption{Comparison of Inner Syndrome Properties and Resulting Full-Systematic Codes}
\label{tab:comparison-general}
\begin{tabular*}{\textwidth}{@{\extracolsep{\fill}}lllccc@{}}
\toprule
Model & Inner Construction Technique & Inner Syndrome Size & Markers $r$ & Systematic Redundancy & Decoder Complexity \\
\midrule
\multirow{2}{*}{SID} 
 & Direct Graph Fingerprinting (Baseline) & $O_t(n^{4t})$ & $5t+1$ & $(5t+1)\log n + O_t(1)$ & $n^{O(t)}$ \\
 & Cyclic Successor and Hamming Partition & $O_t(n^{3t-1})$ & $4t$ & $4t\log n + O_t(1)$ & $n^{O(t)}$ \\
\midrule
\multirow{2}{*}{PID} 
 & Direct Graph Fingerprinting (Baseline) & $O_t(n^{6t})$ & $7t+1$ & $(7t+1)\log n + O_t(1)$ & $n^{O(t)}$ \\
 & VT Residual Decomposition and Fingerprinting & $O_t(n^{6t-2})$ & $7t-1$ & $(7t-1)\log n + O_t(1)$ & $n^{O(t)}$ \\
\bottomrule
\end{tabular*}
\par\vspace{1ex}
\parbox{\textwidth}{\footnotesize Note: All asymptotic bounds are stated for fixed $t\ge1$ as $n\to\infty$. The number of redundancy markers $r$ is chosen as the minimal integer satisfying $r\ge2t$ and $r-t>a_{\mathrm{in}}$. For the special case $t=1$, the elementary non-binary VT syndrome for SID and the signature syndrome for PID achieve $r=3$ ($3\log n$ redundancy) and $r=4$ ($4\log n$ redundancy), respectively.}
\end{table*}

\subsection{Asymptotic Redundancy and Complexity Trade-Offs}
\label{subsec:comparison-advantages}

Table~\ref{tab:comparison-general} summarizes the inner syndrome parameters and the resulting full-systematic permutation codes. All four code families achieve polynomial encoding and decoding complexity in $n$ for any fixed $t$.

For small deletion values, the quantitative differences are substantial. When $t=1$, Corollary~\ref{cor:single-deletion-systematic} shows that $r=3$ markers suffice for SIDs, giving redundancy $3\log n+O(1)$, while $r=4$ markers suffice for PIDs, giving redundancy $4\log n+O(1)$. When $t=2$, the direct fingerprinting baseline requires $11\log n+O(1)$ redundancy bits for SIDs and $15\log n+O(1)$ bits for PIDs. In contrast, Theorems~\ref{thm:main-systematic-pid} and~\ref{thm:systematic-sid} achieve $8\log n+O(1)$ bits for SIDs and $13\log n+O(1)$ bits for PIDs, eliminating three and two redundancy symbols, respectively.

Regarding computational requirements, evaluating the canonical Hamming syndrome for SIDs requires only $O(nt)$ field operations over $\mathbb F_p$, entirely eliminating graph exploration during encoding. Conversely, both the baseline colorings and the residual PID construction require evaluating local algebraic fingerprints over radius-two graph neighborhoods. Because these neighborhoods are enumerable in $n^{O(t)}$ time, all constructions maintain uniform polynomial-time encodability and decodability.

%% file: main_paper/conclusion.tex
\section{Conclusion}
\label{sec:conclusion}

In this paper, we developed the first full-systematic coding framework for permutations and multipermutations under both SID and PID models. By embedding an inner deletion-correcting syndrome into the relative positions of distinct redundancy markers through an algebraic outer code, our approach circumvents the impossibility of appending conventional checksum suffixes to permutations. For any fixed number of deletions $t$, the proposed constructions achieve redundancies of $(7t-1)\log n+O_t(1)$ bits for PIDs and $4t\log n+O_t(1)$ bits for SIDs, both of which are order-optimal and admit uniform polynomial-time encoding and decoding algorithms. We also extended this two-stage architecture to fixed-composition multipermutations and strictly $\lambda$-regular profiles, characterizing the transition where the PID and SID models coincide when the multiplicity satisfies $\lambda > t$. In addition, our triangle-counting analysis on the confusability graph yields an improved existential lower bound of $\Omega_t(m!\log m/m^{3t})$ for PID permutation codes. Promising directions for future research include closing the gap between existential redundancy bounds and constructive marker counts, as well as designing systematic permutation codes for burst deletions and localized error channels.

%% file: Appendix/lower_bound.tex
\appendices
\section{Proof of Theorem~\ref{thm:prelim-bounds}}
\label{app:proof-thm-bounds}

The SID bounds in Theorem~\ref{thm:prelim-bounds} are already known in \cite{wang2025permutation}. Moreover, every common SID descendant becomes a common PID descendant after standardization, i.e,, $E\bigl(G^{\mathrm{SID}}_{m,t}\bigr)\subseteq E\bigl(G^{\mathrm{PID}}_{m,t}\bigr)$.
Hence, every PID-correcting code is also SID-correcting, and the PID upper bound follows from the SID upper bound as:
\begin{equation*}
 M^{\mathrm{PID}}(m,t)
 \le M^{\mathrm{SID}}(m,t)
 =O_t\left(\frac{m!}{m^t}\right).
\end{equation*}
It remains to prove the following lower bound on the maximum size of PID codes
\begin{equation}
 M^{\mathrm{PID}}(m,t)
 =\Omega_t\!\left(\frac{m!\log m}{m^{3t}}\right).
 \label{eq:pid-lower-target}
\end{equation}

\subsection{Maximum degree and Two-order replacements}

Let $G:=G^{\mathrm{PID}}_{m,t}$, $N:=|V(G)|=m!$, $\Delta:=\Delta(G)$,
and $T$ denote the number of unordered triangles in $G$. By Lemma~\ref{lem:parent-enumeration}, a fixed exact-$t$ PID descendant has at most $t!\binom{m}{t}^{\!2}$ parents in $\mathcal S_m$. Since a permutation has at most $\binom{m}{t}$ exact-$t$ descendants, we have
\begin{equation}
 \Delta\le \binom{m}{t}\left(t!\binom{m}{t}^{\!2}-1\right)=O_t(m^{3t}).\label{eq:pid-degree-app}
\end{equation}
Thus, the only new ingredient needed for the logarithmic improvement is an upper bound on $T$.

A permutation $\omega=(\omega_1,\ldots,\omega_m)\in\mathcal S_m$ may be viewed as $m$ points equipped with two total orders: the \emph{position order}, in which the point at coordinate $i$ has rank $i$, and the \emph{value order}, in which that point has rank $\omega_i$. Equality of standardized subsequences is exactly isomorphism of the induced two-order structures. 

For $u,p,q\in[m]$, define the \emph{one-point replacement} $\Rep_{u,p,q}:\mathcal S_m\longrightarrow\mathcal S_m$
as follows. Starting from $\omega$, delete the point having position rank $u$ and standardize the remaining $m-1$ points. If a surviving point then has position-value rank pair $(a,b)\in[m-1]^2$, replace it by $\bigl(a+\mathbf 1\{a\ge p\},\,b+\mathbf 1\{b\ge q\}\bigr)$, and add a new point with rank pair $(p,q)$. The resulting $m$ rank pairs are the graph of a unique permutation, denoted by $\Rep_{u,p,q}(\omega)$. Thus, $u$ identifies the deleted point, whereas $p$ and $q$ are the position and value ranks of the inserted point.

A replacement sequence of length $h$ is
\begin{equation}
 \mathcal S=((u_1,p_1,q_1),\ldots,(u_h,p_h,q_h))
 \in([m]^3)^h.
 \label{eq:replacement-sequence-app}
\end{equation}
Starting with $\omega_0:=\pi$, define recursively
\begin{equation*}
 \omega_k:=\Rep_{u_k,p_k,q_k}(\omega_{k-1}),
 \qquad k\in[h],
\end{equation*}
and write $\pi\circ\mathcal S:=\omega_h$ for the endpoint of the sequence.

\begin{lemma}[Replacement characterization]
\label{lem:replacement-characterization-app}
For all $\pi,\sigma\in\mathcal S_m$, the minimum number of one-point replacements transforming $\pi$ into $\sigma$ equals $d_{\mathrm P}(\pi,\sigma)$.
\end{lemma}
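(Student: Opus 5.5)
We prove the two inequalities separately, writing $d_{\mathrm P}(\pi,\sigma)=m-\LPat(\pi,\sigma)$ and using the two-order viewpoint: a permutation is a set of $m$ points carrying a position order and a value order, and equality of standardized subsequences is isomorphism of the induced two-order structures.

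\emph{Upper bound (replacements suffice).} Let $s=\LPat(\pi,\sigma)$ and fix $I,J$ with $|I|=|J|=s$ and $\std(\pi_I)=\std(\sigma_J)$. Order $[m]\setminus J$ arbitrarily as $j_1,\ldots,j_{m-s}$. The plan is to perform $m-s$ replacements. At step $k$ we delete some point of the current permutation that is not among the $s$ matched points. We then insert a new point whose position rank and value rank equal those of $\sigma$'s point $j_k$, measured relative to the $s+k-1$ points already placed (the matched points together with $j_1,\ldots,j_{k-1}$); the stray points not yet deleted are inserted consistently.

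The cleanest bookkeeping is to maintain the following invariant by induction on $k$. After step $k$, the current permutation $\omega_k$ contains a set of $s+k$ points whose induced two-order structure is isomorphic to $\sigma$ restricted to $J\cup\{j_1,\ldots,j_k\}$. The remaining $m-s-k$ points are unconstrained.

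The key observation for the inductive step is that $\Rep_{u,p,q}$ deletes any point and may then insert a new point at any position rank $p$ and value rank $q$. Because $p$ and $q$ are arbitrary, the new point can be placed into any gap of the position order and any gap of the value order of the retained points. In particular, it can realize the prescribed relative ranks with respect to the $s+k-1$ points of the target substructure. Deleting one unconstrained point never disturbs that substructure, since standardization preserves the relative orders among survivors (Lemma~\ref{lem:std-transitive}). After $m-s$ steps all points are constrained, so $\omega_{m-s}=\sigma$. Hence the minimum number of replacements is at most $d_{\mathrm P}(\pi,\sigma)$.

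\emph{Lower bound (replacements are necessary).} Show that a single replacement decreases $\LPat(\cdot,\sigma)$ by at most one, i.e.\ $\LPat(\Rep_{u,p,q}(\omega),\sigma)\ge\LPat(\omega,\sigma)-1$. Indeed, let $\omega$ and $\sigma$ have a common pattern on $s$ points. Deleting point $u$ leaves a common pattern on at least $s-1$ surviving points. The relative order of these survivors is unchanged both by the standardization and by the rank shift that makes room for the inserted point, since the shift $a\mapsto a+\mathbf 1\{a\ge p\}$ is increasing, and likewise in the value coordinate.

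Starting from $\LPat(\pi,\sigma)=s$ and ending at $\LPat(\sigma,\sigma)=m$ requires $\LPat$ to increase by $m-s$. By the symmetric statement each step also increases $\LPat$ by at most one: the inserted point adds at most one point to any common pattern, because removing it from a pattern of $\Rep(\omega)$ leaves a pattern of the survivors of $\omega$. Thus at least $m-s$ steps are needed.

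The main obstacle is the upper-bound induction. One must verify carefully that a single pair $(p,q)$ simultaneously realizes the required position and value gaps relative to the target substructure, and that the choice of which unconstrained point to delete is always available (there is one whenever $k\le m-s$). Both points follow from counting: the number of unconstrained points before step $k$ is $m-s-k+1\ge1$. The lower bound is routine.
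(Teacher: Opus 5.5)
Your proof is correct and follows essentially the same route as the paper. The upper bound is the same core-preserving construction: delete an unconstrained point, then insert the next target point into the position and value gaps prescribed by the core and the earlier insertions. Your lower bound, that $\LPat(\cdot,\sigma)$ can rise by at most one per replacement, is a stepwise version of the paper's observation that the never-deleted original points, at least $m-h$ of them, keep their relative orders and so form a common pattern of $\pi$ and $\sigma$. Only the ``increases by at most one'' half of that argument is needed; the ``decreases by at most one'' claim plays no role.
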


\begin{proof}
Suppose that $h$ replacements transform $\pi$ into $\sigma$. At most $h$ original points are deleted, and the relative position and value orders among all surviving original points never change. Hence $\pi$ and $\sigma$ contain an order-isomorphic subsequence of length at least $m-h$, so $d_{\mathrm P}(\pi,\sigma)\le h$.

Conversely, let $r:=d_{\mathrm P}(\pi,\sigma)$. By the definition of $d_{\mathrm P}$, there are subsets $C_\pi$ and $C_\sigma$, each of size $m-r$, whose induced two-order structures are isomorphic. Identify these two cores through their unique isomorphism. Keep the points of $C_\pi$ throughout the procedure, and list the target points in $\sigma\setminus C_\sigma$ in any fixed order. At step $j$, delete one point of $\pi\setminus C_\pi$ that is still present and insert the $j$th target point. In each of the two total orders, place this target point in the gap determined by its order relations to the core and to the target points inserted in earlier steps; its position relative to the original points that will be deleted later is arbitrary. Such a gap always exists in each order. Subsequent deletions preserve all prescribed order relations. After $r$ steps, precisely the core and all $r$ target points remain, with the two total orders induced by $\sigma$. Hence $r$ replacements suffice.
\end{proof}

We will track points through a replacement sequence. Label the $m$ initial points by $O:=\{o_1,\ldots,o_m\}$, where $o_i$ is the point initially at position $i$. Label the point inserted at step $k$ by $f_k$, and put $F_h:=\{f_1,\ldots,f_h\}$. These labels are bookkeeping devices and do not alter the underlying permutation. If a labeled point $x$ is present after step $k$, let $\rk_k^{\mathrm{pos}}(x)$ and $\rk_k^{\mathrm{val}}(x)$
denote its position and value ranks at that time. In particular,
\begin{equation*}
 \rk_0^{\mathrm{pos}}(o_i)=i,
 \qquad
 \rk_0^{\mathrm{val}}(o_i)=\pi_i,
\end{equation*}
and the birth ranks of $f_k$ are
\begin{equation*}
 \rk_k^{\mathrm{pos}}(f_k)=p_k,
 \qquad
 \rk_k^{\mathrm{val}}(f_k)=q_k.
\end{equation*}
A replacement not deleting $x$ changes each of its two ranks by at most one: deletion changes a rank by zero or $-1$, and insertion then changes it by zero or $+1$.

For fixed $\pi\in\mathcal S_m$ and $0\le h\le2t$, define
\begin{equation}
 \mathcal R_h(\pi,t)
 :=\{\mathcal S\in([m]^3)^h:
 d_{\mathrm P}(\pi,\pi\circ\mathcal S)\le t\}.
 \label{eq:Rh-app}
\end{equation}
Thus, $\mathcal R_h(\pi,t)$ consists of the length-$h$ replacement sequences whose endpoints remain adjacent to, or equal to, $\pi$ in the exact-$t$ PID graph.

\begin{lemma}[Close-endpoint replacement sequences]
\label{lem:close-endpoint-app}
For fixed $t$, uniformly over $\pi\in\mathcal S_m$ and $0\le h\le2t$,
\begin{equation}
 |\mathcal R_h(\pi,t)|=O_t(m^{t+2h}).
 \label{eq:close-endpoint-app}
\end{equation}
In particular, $|\mathcal R_{2t}(\pi,t)|=O_t(m^{5t})$.
\end{lemma}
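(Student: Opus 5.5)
The plan is to encode every sequence $\mathcal S\in\mathcal R_h(\pi,t)$ by a small amount of combinatorial data, most of which ranges over $O_t(1)$ values, and then count the encodings. First, the easy regime. Since $|([m]^3)^h|=m^{3h}$, the bound \eqref{eq:close-endpoint-app} is trivial whenever $h\le t$, because then $3h\le t+2h$. So assume $t<h\le 2t$. Fix $\mathcal S$. Let $a$ be the number of steps whose deleted point is an original point of $O$, and let $h-a$ be the number of steps that delete a previously inserted $f_j$. The endpoint then consists of $m-a$ surviving original points together with exactly $a$ surviving inserted points.

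I would first establish the crude count $O_t(m^{a+2h})$ for sequences with a given $a$. The encoding has three parts.
\begin{enumerate}
\item Record the set of deleted original labels, which has at most $m^a$ choices, together with the steps at which they are deleted, which is $O_t(1)$ data.
\item Record which inserted label is deleted at each remaining step, again $O_t(1)$ data.
\item Record all birth pairs $(p_k,q_k)$, which has at most $m^{2h}$ choices.
\end{enumerate}
Given this data, each $u_k$ is recovered up to $O_t(1)$ possibilities. The reason is that we know which labeled point is deleted at step $k$, and its current position rank differs from a known rank (its initial rank or its birth rank) by at most $h$, since each replacement moves a rank by at most one. This settles every case with $a\le t$.

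The main obstacle is the case $a>t$. There the crude count exceeds the target by a factor $m^{a-t}$, and the constraint $d_{\mathrm P}(\pi,\pi\circ\mathcal S)\le t$ has to be used to save one factor of $m$ per excess deletion. I would take a common order-isomorphic pattern of size at least $m-t$ between $\pi$ and the endpoint $\omega_h$, given by increasing embeddings $I\subseteq[m]$ into $\pi$ and $J\subseteq[m]$ into $\omega_h$ with $|I|=|J|\ge m-t$. Only $m-a$ endpoint points are original, so at least $a-t$ surviving inserted points $f$ lie in the image $J$. Such an $f$ is matched to some original point $o_i$ of $\pi$. Because both embeddings are increasing in position and in value and each skips at most $t$ points, the final rank pair of $f$ lies within $O_t(1)$ of $(i,\pi_i)$, up to a shift bounded by $t+h$ that is controlled by the number of skipped or deleted points before it. The birth ranks of $f$ in turn lie within $h$ of its final ranks. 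So for these $a-t$ inserted points, I would replace the $m^2$ choices of $(p_k,q_k)$ by a choice of the partner index $i$ (at most $m$ choices) plus $O_t(1)$ offsets. The total then becomes $O_t(m^{a+2h-(a-t)})=O_t(m^{t+2h})$.

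The delicate point is to make the rank-shift estimate uniform. I would phrase it as follows: for a point at position $j$ in $J$ matched to position $i$ in $I$, we have $|i-j|\le t$, because each embedding omits at most $t$ positions of a length-$m$ permutation; the same holds for values. The remaining bookkeeping is to record which inserted points are matched, which is $O_t(1)$ data. Summing over $a\le h\le 2t$ and the $O_t(1)$ choices of bookkeeping data gives \eqref{eq:close-endpoint-app}, and setting $h=2t$ gives $O_t(m^{5t})$.
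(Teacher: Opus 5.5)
Your proposal is correct and follows essentially the same route as the paper. You encode a sequence by the set of deleted original points ($m^{a}$ choices), an $O_t(1)$ deletion-label history, and the birth pairs; when more than $t$ original points are deleted, a common pattern of size $m-t$ matches $a-t$ surviving fresh points to original points $o_i$, pinning each such birth pair to within $t+h$ of $(i,\pi_i)$ at a cost of $m\cdot O_t(1)$ instead of $m^2$ (the small differences, namely dispatching $h\le t$ via $m^{3h}\le m^{t+2h}$, not fixing a canonical core, and recovering $u_k$ up to $O_t(1)$ choices rather than exactly by simulation, are harmless for the count).
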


\begin{proof}
The case $h=0$ is immediate. Fix $1\le h\le2t$ and $\mathcal S\in\mathcal R_h(\pi,t)$. Let $R(\mathcal S)\subseteq O$ be the set of initial points deleted during $\mathcal S$, and put $s:=|R(\mathcal S)|$. Thus, $s$ is the number of distinct original points deleted along the sequence. The set $R(\mathcal S)$ has at most $m^s$ choices. We count separately for each $s\in\{0,\ldots,h\}$; summing over these $h+1=O_t(1)$ possibilities does not change the final order.

For each step $k$, let $\chi_k$ be the label of the point deleted at that step, and put
\begin{equation*}
 \boldsymbol\chi(\mathcal S):=(\chi_1,\ldots,\chi_h).
\end{equation*}
Once $R(\mathcal S)$ is fixed, every $\chi_k$ is either one of those $s$ original labels or one of the earlier fresh labels $f_1,\ldots,f_{k-1}$. Hence the number of possible deleted-label histories is at most $(s+h)^h=O_t(1)$. Given $R(\mathcal S)$, $\boldsymbol\chi(\mathcal S)$, and all insertion-rank pairs $(p_k,q_k)$, the sequence is reconstructed uniquely by simulation: $u_k$ is the current position rank of the point labeled $\chi_k$. It therefore remains to count the insertion-rank pairs.

If $s\le t$, there are at most $m^{2h}$ choices for the $h$ pairs $(p_k,q_k)$, and therefore at most $O_t(m^{s+2h})=O_t(m^{t+2h})$ sequences.

Assume now that $s>t$, and set $\ell=s-t$. The endpoint $\tau:=\pi\circ\mathcal S$ contains exactly $s$ surviving fresh points. Indeed, $h$ fresh points are inserted, while the $h-s$ deletion steps not deleting original points delete fresh points.

Since $d_{\mathrm P}(\pi,\tau)\le t$, the initial and endpoint two-order structures contain isomorphic subsets of size $m-t$. Order the labels by
\begin{equation*}
 o_1<\cdots<o_m<f_1<\cdots<f_h.
\end{equation*}
Among all pairs of isomorphic $(m-t)$-subsets, choose the pair whose two increasing label lists are lexicographically smallest, first comparing the initial list and then the endpoint list. Denote this canonical pair by $C_0\subseteq O$ and $C_h$, and let $\varphi:C_0\longrightarrow C_h$
be its unique isomorphism preserving both total orders.
The endpoint core $C_h$ omits only $t$ endpoint points. Since $s$ fresh points survive, at least $s-t=\ell$ of them lie in $C_h$. Let $B$ be the $\ell$ such fresh points with smallest birth indices. Write
\begin{equation*}
 B=\{f_k:k\in K\},\qquad |K|=\ell,
\end{equation*}
and, for every $k\in K$, define
\begin{equation*}
 a_k:=\varphi^{-1}(f_k)\in C_0.
\end{equation*}
The set $K$ has only $O_t(1)$ possibilities, while the ordered choices of the distinct points $(a_k)_{k\in K}$ contribute at most $m^\ell$.

Fix $k\in K$. Because $\varphi$ preserves both total orders, $a_k$ and $f_k$ have the same rank inside their respective cores. Each full structure is obtained from its core by inserting exactly $t$ omitted points. Therefore,
\begin{equation}
 \bigl|\rk_0^{\mathrm{pos}}(a_k)-
 \rk_h^{\mathrm{pos}}(f_k)\bigr|\le t,
 \qquad
 \bigl|\rk_0^{\mathrm{val}}(a_k)-
 \rk_h^{\mathrm{val}}(f_k)\bigr|\le t.
 \label{eq:core-rank-app}
\end{equation}
Since $f_k$ survives, each of its two ranks changes by at most one in each of the at most $h-k$ later replacements. Consequently,
\begin{equation*}
 |p_k-\rk_h^{\mathrm{pos}}(f_k)|\le h,
 \qquad
 |q_k-\rk_h^{\mathrm{val}}(f_k)|\le h.
\end{equation*}
Combining these inequalities with \eqref{eq:core-rank-app},
\begin{equation}
 |p_k-\rk_0^{\mathrm{pos}}(a_k)|\le t+h,
 \qquad
 |q_k-\rk_0^{\mathrm{val}}(a_k)|\le t+h.
 \label{eq:birth-rank-app}
\end{equation}
Because $h\le2t$, once $a_k$ is fixed, the pair $(p_k,q_k)$ has at most $(6t+1)^2=O_t(1)$ possibilities. The remaining $h-\ell$ insertion-rank pairs have at most $m^{2(h-\ell)}$ possibilities. The total number of sequences in this case is therefore at most
\begin{equation*}
 O_t(1)m^s m^\ell m^{2(h-\ell)}
 =O_t(m^{s-\ell+2h})
 =O_t(m^{t+2h}),
\end{equation*}
because $\ell=s-t$. This proves \eqref{eq:close-endpoint-app}.
\end{proof}

\subsection{Triangle count}

For every ordered pair $(\pi,\sigma)$ with $d_{\mathrm P}(\pi,\sigma)\le t$, choose the lexicographically first shortest replacement sequence from $\pi$ to $\sigma$, and denote it by $\mathsf P(\pi,\sigma)$. By Lemma~\ref{lem:replacement-characterization-app}, $|\mathsf P(\pi,\sigma)|=d_{\mathrm P}(\pi,\sigma)\le t$. The following canonical-path encoding is the PID analogue of the translocation-path encoding in \cite[Proof of Lemma.~4]{wang2025permutation}.

Fix a root $\pi$, and let $(\pi,\sigma,\tau)$ be an ordered triangle of $G$. Concatenate $\mathsf P(\pi,\sigma)$ and $\mathsf P(\sigma,\tau)$. Let
\[
 h_1:=|\mathsf P(\pi,\sigma)|,
 \qquad
 h_2:=|\mathsf P(\sigma,\tau)|,
 \qquad
 h:=h_1+h_2.
\]
Then $h\le2t$, and the third edge of the triangle gives $d_{\mathrm P}(\pi,\tau)\le t$. Hence the concatenated sequence lies in $\mathcal R_h(\pi,t)$. The map
\[
 (\sigma,\tau)
 \longmapsto
 \bigl(\mathsf P(\pi,\sigma)\Vert
 \mathsf P(\sigma,\tau),h_1\bigr)
\]
is injective: the first $h_1$ operations recover $\sigma$, while all $h$ operations recover $\tau$. For a fixed $h$, the split point $h_1$ has at most $h+1\le2t+1$ possibilities. Therefore, Lemma~\ref{lem:close-endpoint-app} implies that the number of ordered triangles rooted at $\pi$ is
\[
 O_t\!\left(\sum_{h=0}^{2t}m^{t+2h}\right)
 =O_t(m^{5t}).
\]
Summing over $m!$ roots and dividing by six yields
\begin{equation}
 T=O_t(m!m^{5t})=O_t(Nm^{5t}).
 \label{eq:pid-triangle-app}
\end{equation}

We now apply the same few-triangles independence bound used in \cite[Lemma.~3]{wang2025permutation}. If $T=0$, its triangle-free specialization already gives the claimed order, so assume $T>0$. The cited lemma states that
\begin{equation}
 \alpha(G)\ge
 \frac{N}{10\Delta}
 \left(\log\Delta-\frac12\log\frac{T}{N}\right)
 \label{eq:bollobas-app}
\end{equation}
whenever the expression in parentheses is positive.

Choose constants $A_t,B_t>0$ such that, by \eqref{eq:pid-degree-app} and \eqref{eq:pid-triangle-app}, we have $\Delta\le A_tm^{3t}$ and $T\le B_tNm^{5t}$. If $\Delta\le m^{3t}/\log m$, the greedy bound gives
\begin{equation*}
 \alpha(G)\ge\frac{N}{\Delta+1}
 =\Omega_t\!\left(\frac{N\log m}{m^{3t}}\right).
\end{equation*}
Otherwise, $\Delta>m^{3t}/\log m$, and for all sufficiently large $m$,
\begin{align*}
 \log\Delta-\frac12\log\frac{T}{N}
 &\ge 3t\log m-\log\log m
     -\frac12\log(B_tm^{5t})\\
 &=\frac{t}{2}\log m-\log\log m-O_t(1)\\
 &=\Omega_t(\log m).
\end{align*}
Applying \eqref{eq:bollobas-app} and $\Delta\le A_tm^{3t}$ again gives
\begin{equation*}
 \alpha(G)
 =\Omega_t\!\left(\frac{N\log m}{m^{3t}}\right).
\end{equation*}
Since $M^{\mathrm{PID}}(m,t)=\alpha(G)$, this proves \eqref{eq:pid-lower-target} and completes the proof of Theorem~\ref{thm:prelim-bounds}.

%% file: main.bbl
\begin{thebibliography}{10}
\providecommand{\url}[1]{#1}
\csname url@samestyle\endcsname
\providecommand{\newblock}{\relax}
\providecommand{\bibinfo}[2]{#2}
\providecommand{\BIBentrySTDinterwordspacing}{\spaceskip=0pt\relax}
\providecommand{\BIBentryALTinterwordstretchfactor}{4}
\providecommand{\BIBentryALTinterwordspacing}{\spaceskip=\fontdimen2\font plus
\BIBentryALTinterwordstretchfactor\fontdimen3\font minus \fontdimen4\font\relax}
\providecommand{\BIBforeignlanguage}[2]{{%
\expandafter\ifx\csname l@#1\endcsname\relax
\typeout{** WARNING: IEEEtran.bst: No hyphenation pattern has been}%
\typeout{** loaded for the language `#1'. Using the pattern for}%
\typeout{** the default language instead.}%
\else
\language=\csname l@#1\endcsname
\fi
#2}}
\providecommand{\BIBdecl}{\relax}
\BIBdecl

\bibitem{jiang2009rank}
A.~Jiang, R.~Mateescu, M.~Schwartz, and J.~Bruck, ``Rank modulation for flash memories,'' \emph{IEEE Transactions on Information Theory}, vol.~55, no.~6, pp. 2659--2673, 2009.

\bibitem{jiang2010correcting}
A.~Jiang, M.~Schwartz, and J.~Bruck, ``Correcting charge-constrained errors in the rank-modulation scheme,'' \emph{IEEE Transactions on Information Theory}, vol.~56, no.~5, pp. 2112--2120, 2010.

\bibitem{barg2010codes}
A.~Barg and A.~Mazumdar, ``Codes in permutations and error correction for rank modulation,'' \emph{IEEE Transactions on Information Theory}, vol.~56, no.~7, pp. 3158--3165, 2010.

\bibitem{gabrys2015codes}
R.~Gabrys, E.~Yaakobi, F.~Farnoud, F.~Sala, J.~Bruck, and L.~Dolecek, ``Codes correcting erasures and deletions for rank modulation,'' \emph{IEEE Transactions on Information Theory}, vol.~62, no.~1, pp. 136--150, 2016.

\bibitem{farnoud2013error}
F.~Farnoud, V.~Skachek, and O.~Milenkovic, ``Error-correction in flash memories via codes in the {Ulam} metric,'' \emph{IEEE Transactions on Information Theory}, vol.~59, no.~5, pp. 3003--3020, 2013.

\bibitem{levenshtein1966binary}
V.~I. Levenshtein, ``Binary codes capable of correcting deletions, insertions, and reversals,'' \emph{Soviet Physics Doklady}, vol.~10, no.~8, pp. 707--710, 1966.

\bibitem{tenengolts1984nonbinary}
G.~Tenengolts, ``Nonbinary codes, correcting single deletion or insertion,'' \emph{IEEE Transactions on Information Theory}, vol.~30, no.~5, pp. 766--769, 1984.

\bibitem{sima2020optimalbinary}
J.~Sima, R.~Gabrys, and J.~Bruck, ``Optimal systematic $t$-deletion correcting codes,'' in \emph{2020 IEEE International Symposium on Information Theory (ISIT)}.\hskip 1em plus 0.5em minus 0.4em\relax IEEE, 2020, pp. 769--774.

\bibitem{guruswami2021two}
V.~Guruswami and J.~H{\aa}stad, ``Explicit two-deletion codes with redundancy matching the existential bound,'' \emph{IEEE Transactions on Information Theory}, vol.~67, no.~10, pp. 6384--6394, 2021.

\bibitem{levenshtein1992perfect}
V.~I. Levenshtein, ``On perfect codes in deletion and insertion metric,'' \emph{Discrete Mathematics and Applications}, vol.~2, no.~3, pp. 241--258, 1992.

\bibitem{chee2019burst}
Y.~M. Chee, S.~Ling, T.~T. Nguyen, V.~K. Vu, H.~Wei, and X.~Zhang, ``Burst-deletion-correcting codes for permutations and multipermutations,'' \emph{IEEE Transactions on Information Theory}, vol.~66, no.~2, pp. 957--969, 2020.

\bibitem{sun2022improved}
Y.~Sun, Y.~Zhang, and G.~Ge, ``Improved constructions of permutation and multi-permutation codes correcting a burst of stable deletions,'' \emph{IEEE Transactions on Information Theory}, vol.~69, no.~7, pp. 4429--4441, 2023.

\bibitem{wang2025permutation}
S.~Wang, T.~Nguyen, Y.~M. Chee, and V.~K. Vu, ``Permutation and multi-permutation codes correcting multiple deletions,'' \emph{IEEE Transactions on Information Theory}, vol.~71, no.~9, pp. 6759--6770, 2025.

\bibitem{zhou2014systematic}
H.~Zhou, M.~Schwartz, A.~A. Jiang, and J.~Bruck, ``Systematic error-correcting codes for rank modulation,'' \emph{IEEE Transactions on Information Theory}, vol.~61, no.~1, pp. 17--32, 2015.

\bibitem{buzaglo2014perfect}
S.~Buzaglo and T.~Etzion, ``Perfect permutation codes with the {Kendall}'s $\tau$-metric,'' in \emph{2014 IEEE International Symposium on Information Theory (ISIT)}.\hskip 1em plus 0.5em minus 0.4em\relax IEEE, 2014, pp. 2391--2395.

\bibitem{klove2010permutation}
T.~Kl{\o}ve, T.-T. Lin, S.-C. Tsai, and W.-G. Tzeng, ``Permutation arrays under the {Chebyshev} distance,'' \emph{IEEE Transactions on Information Theory}, vol.~56, no.~6, pp. 2611--2617, 2010.

\bibitem{li2026distributed}
Y.~Li, R.~Gabrys, and F.~Farnoud, ``Constructing low-redundancy codes via distributed graph coloring,'' \emph{IEEE Transactions on Information Theory}, vol.~72, no.~10, pp. 7344--7358, Oct. 2026.

\bibitem{gad2012trade}
E.~E. Gad, A.~Jiang, and J.~Bruck, ``Trade-offs between instantaneous and total capacity in multi-cell flash memories,'' in \emph{2012 IEEE International Symposium on Information Theory Proceedings}.\hskip 1em plus 0.5em minus 0.4em\relax IEEE, 2012, pp. 990--994.

\bibitem{huczynska2006frequency}
S.~Huczynska and G.~L. Mullen, ``Frequency permutation arrays,'' \emph{Journal of Combinatorial Designs}, vol.~14, no.~6, pp. 463--478, 2006.

\bibitem{hassanzadeh2014multipermutation}
F.~Farnoud~(Hassanzadeh) and O.~Milenkovic, ``Multipermutation codes in the {Ulam} metric for nonvolatile memories,'' \emph{IEEE Journal on Selected Areas in Communications}, vol.~32, no.~5, pp. 919--932, 2014.

\bibitem{sala2014deletions}
F.~Sala, R.~Gabrys, and L.~Dolecek, ``Deletions in multipermutations,'' in \emph{2014 IEEE International Symposium on Information Theory}.\hskip 1em plus 0.5em minus 0.4em\relax IEEE, 2014, pp. 2769--2773.

\end{thebibliography}
